\documentclass{llncs}
\pagestyle{plain}
\makeatletter
\newcommand{\@chapapp}{\relax}%
\makeatother

\usepackage[utf8]{inputenc}
\usepackage{amsmath, amssymb}
\usepackage{amsfonts}
\usepackage[title]{appendix}
\usepackage{comment}
\usepackage{hyperref}
\usepackage{lipsum}
\usepackage{mathtools}
\usepackage{enumitem}
\usepackage{tikz}


\usetikzlibrary{arrows,automata,positioning}

\allowdisplaybreaks

\setlength{\belowcaptionskip}{-12pt}

\newcommand{\secref}[1]{\hyperref[#1]{Section~\ref*{#1}}}

\DeclareFontFamily{OT1}{pzc}{}
\DeclareFontShape{OT1}{pzc}{m}{it}{<-> s * [1.10] pzcmi7t}{}
\DeclareMathAlphabet{\mathpzc}{OT1}{pzc}{m}{it}

\title{Temporal Logics with Language Parameters}
\author{ Jens Oliver Gutsfeld \and Markus Müller-Olm \and Christian Dielitz}
\institute{Institut für Informatik, Westfälische Wilhelms-Universität Münster \\ Einsteinstraße 62, 48149, Münster, Germany \\
\email{\{jens.gutsfeld, markus.mueller-olm, c.dielitz\}@uni-muenster.de}}

\begin{document}
	\maketitle
	\begin{abstract}
Computation Tree Logic (CTL) and its extensions CTL$^*$ and CTL$^+$  are widely used in automated verification as a basis for common model checking tools. But while they can express many properties of interest like reachability,  even simple regular properties like ``Every other index is labelled $a$'' cannot be expressed in these logics. While many extensions were developed to include regular or even non-regular (e.g. visibly pushdown) languages, the first generic framework, \textit{Extended CTL}, for CTL with arbitrary language classes was given by Axelsson et. al. and applied to regular, visibly pushdown and (deterministic) context-free languages. 
We extend this framework to CTL$^*$ and CTL$^+$ and analyse it with regard to decidability, complexity, expressivity and satisfiability. 
\end{abstract}

	\section{Introduction}
Temporal logics like LTL, CTL and CTL$^*$ are widely used for verification purposes through satisfiability checking and model checking. Their usefulness is based on their simple logical structure, their ability to capture useful properties like reachability and their low (polynomial) model checking complexity for fixed formulae. However,  as noticed already in the seminal paper by Wolper \cite{Wolper1983}, classical temporal logics are unable to express even basic regular properties like \textit{Every other index fulfills $\varphi$} for a specification $\varphi$. Several temporal logics were developed to express (subsets of) $\omega$-regular properties. Nevertheless, these formalisms are not sufficient to specify properties like \textit{A sequence of $n$ requests should be followed by $n$ acknowledgements} that are not $\omega$-regular.
While new temporal logics have also been developed for these types of properties, they are commonly ad hoc constructions for a single language class in the sense that there is no straightforward way to extend them to even larger language classes or alternatively restrict their expressive power to a fragment related to a smaller language class in order to improve the model checking complexity. Therefore, Axelsson. et. al \cite{Axelsson2010} developed a framework called \textit{Extended Computation Tree Logic} that systematically presents variants of CTL parameterised by languages of finite words. They consider regular languages, visibly pushdown languages, deterministic context-free languages and context-free languages. They extensively study the satisfiability problem and the model checking problem for different classes of system models (Kripke Transition Systems, Visibly Pushdown Systems and Pushdown Systems) and provide precise complexity results and expressivity comparisons. In particular, they prove that Extended CTL cannot express the classical LTL property of fairness regardless of the choice of the formal language parameter. As a result, they suggest to lift their framework to extended variants of CTL$^*$ and CTL$^+$, a lesser known logic that has the same expressive power as CTL, but is more succinct \cite{Emerson1982,Wilke1999}. 
In this paper, we carry out this suggestion for both logics and also develop a parameterised variant of LTL with the model checking algorithm for the latter inducing the algorithms for the former. We obtain completeness results for model checking our logics against the model classes mentioned before in almost all cases and also for the satisfiability problem. However, unlike for Extended CTL, no extension beyond the visibly pushdown languages considered is decidable for any of these new logics. 

In \autoref{section:preliminaries}, we introduce some automata models. \autoref{section:logics} introduces our extended variants of LTL, CTL$^*$ and CTL$^+$ and establishes some basic properties of them. Afterwards, we characterise LTL[U] formulae by alternating automata (\autoref{section:automataLTLU}). Later, in \autoref{section:expressivity}, we establish inclusion and separation theorems for several logics. Then, in \autoref{section:satisfiability} and \autoref{section:modelchecking}, we address the satisfiability and model checking problems for our logics. Finally, we summarise this paper and discuss related work.

\textbf{Acknowledgements} We thank Laura Bozzelli for the extended version of \cite{Bozzelli2007} and Christoph Ohrem and Bastian Köpcke for valuable discussions.

	\section{Preliminaries}\label{section:preliminaries}
Let $AP$ be a finite set of atomic propositions and $\Gamma$ be a set of actions. A Kripke Transition System (KTS) is a tuple $\mathcal{K} = (S, \rightarrow, \lambda)$ where $S$ is a set of states, $\rightarrow \subseteq S \times \Gamma \times  S$ is a transition relation and $\lambda: S \rightarrow 2^{AP}$ is a labelling function. We also write $s \xrightarrow{a} s'$ to denote $(s, a, s') \in \rightarrow$.  Every KTS has an initial state $s_0 \in S$. A path is a maximal sequence of alternating states and actions $\pi = s_0 a_0 s_1 a_1 \dots$ such that $s_i \xrightarrow{a_i} s_{i+1}$. By $\pi^i$, we denote the suffix $s_i a_i s_{i+1}a_{i+1} \dots$ and by $\pi(i)$, we denote the state $s_i$. W.l.o.g., we assume that every state has at least one successor and thus every path is infinite. For every state, we denote by $Paths(s)$ the paths starting in $s$. We also define $Paths(\mathcal{K}) = Paths(s_0)$. For a path $\pi = s_0 a_0 s_1 a_1 \dots$, we define the corresponding trace to be $\hat{\pi} = (\lambda(s_0) a_0) (\lambda(s_1) a_1) \dots$. We then lift $Paths(s)$ to $Traces(s)$ and $Paths(\mathcal{K})$ to $Traces(\mathcal{K})$ in the obvious manner. We also denote the actions $a_0 \dots a_n$ by $Actions(\pi, n)$.

A \emph{Pushdown System} is a tuple $\mathcal{P} = (P, \Gamma, \Delta, \lambda, I)$ consisting of a finite, non-empty set of control locations $P$, a finite, non-empty stack alphabet $\Gamma$, a transition relation $\Delta \subseteq (P \times \Gamma) \times \Sigma  \times  (P \times \Gamma^*)$, a labelling function $\lambda: P \times \Gamma \rightarrow 2^{AP}$, and a non-empty set of initial configurations $I \subseteq P \times \Gamma$.
We write $(p, \gamma) \xrightarrow{\sigma} (p', \omega)$ to denote $((p, \gamma),\sigma, (p', \omega)) \in \Delta$. By slight abuse of notation, we also write $(p, \gamma) \xrightarrow{\sigma} (p', \omega) \in \Delta$.
For simplicity, we assume that $\Delta$ is total, i.e. for every $(p, \gamma)$, there are $\sigma, p', \omega$ with $(p, \gamma) \xrightarrow{\sigma} (p', \omega) \in \Delta$ and that in every rule $|\omega| \leq 2$. A rule $(p, \gamma) \xrightarrow{\sigma} (p', \omega) \in \Delta$ is a \emph{call} rule if $|\omega| = 2$, an \emph{internal} rule if $|\omega| = 1$ and a \emph{return} rule if $|\omega| = 0$. 
In the following, we will always identify a PDS with the KTS consisting of its configuration graph. 

A PDS is called a \textit{Visibly Pushdown System} (VPS) if its input alphabet $\Sigma$ can be divided into disjoint partitions $\Sigma_{call}$, $\Sigma_{int}$ and $\Sigma_{ret}$ such that every rule for an input symbol from the set $\Sigma_{call}$ is a call rule (and analogously for the other two sets). An alphabet thus partitioned is called a \emph{pushdown alphabet}.
A Büchi Pushdown System (BPDS) $\mathcal{BP}$ is a PDS with an additional set $F \subseteq P$ of final states. A path of a BPDS $\mathcal{BP}$ is accepting if it visits some state $p \in F$ infinitely often. A word is accepted if there is an accepting path for it. We denote by $\mathcal{L}_{\Gamma}(\mathcal{BP})$ the set of configurations from which $\mathcal{BP}$ has an accepting path and by $\mathcal{L}(\mathcal{BP})$ the set of words accepted by $\mathcal{BP}$. 
Pushdown Automata and Visibly Pushdown Automata, which we abbreviate by PDA and VPA respectively, are defined analogously, with the only difference being that we only consider finite runs over finite input words and a word is accepted if an accepting state is reached after the last input symbol. We call an automaton \textit{deterministic} (signified by prepending a \textit{D} to the model class) if there is a unique transition for every input symbol, top of stack symbol and control location. We also consider deterministic and non-deterministic finite automata (DFA and NFA) which we define as PDA with only internal transitions.
For all these models, the size of the automaton, denoted $|\mathcal{A}|$ for an automaton $\mathcal{A}$, is the sum of its number of states and number of transitions. 
For PDSs, we will make use of the following Theorem:
\begin{theorem}[\cite{Schwoon2002}]
The emptiness of $\mathcal{L}_{\Gamma}(\mathcal{P})$ for a BPDS $\mathcal{P}$ can be checked in time polynomial in the size of $\mathcal{P}$. Indeed, there is an NFA $\mathcal{A}$ with $\mathcal{L}(\mathcal{A}) = \mathcal{L}_{\Gamma}(\mathcal{P})$ with size polynomial in the size of $\mathcal{P}$.
\end{theorem}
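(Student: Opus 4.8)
The plan is to reduce the existence of an accepting path to reachability of a finite set of \emph{repeating heads}, and then to obtain $\mathcal{L}_\Gamma(\mathcal{P})$ as the $pre^*$ of a regular set via the saturation method for pushdown reachability. Call $(p,\gamma) \in P \times \Gamma$ a \emph{head}, and say it is \emph{repeating} if $\mathcal{P}$ has a run $(p,\gamma) \Rightarrow^+ (p,w)$ that starts with the stack holding only $\gamma$, returns to control location $p$ without ever popping that initial $\gamma$, and visits a final location on the way. The central claim is the characterization: a configuration $c$ lies in $\mathcal{L}_\Gamma(\mathcal{P})$ iff $c \Rightarrow^* c'$ for some $c'$ whose control location $p$ together with its topmost stack symbol $\gamma$ form a repeating head. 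Writing $R$ for the set of all such configurations $c'$, this reads $\mathcal{L}_\Gamma(\mathcal{P}) = pre^*(R)$, and $R$ is plainly regular since it is determined by the finitely many repeating heads.

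The $\Leftarrow$ direction is straightforward: the loop witnessing a repeating head can be iterated indefinitely, producing an infinite run that visits $F$ infinitely often and never pops the fixed $\gamma$, and this run is reached by prepending the finite path witnessing $c \Rightarrow^* c'$. The $\Rightarrow$ direction is the main obstacle, as it reasons about infinite runs of an infinite-state system. Given an accepting run $\rho$ from $c$, I would track the stack height along $\rho$ and take the least height $h$ attained infinitely often; beyond some point the stack never drops below $h$, so the symbol $\gamma$ sitting at height $h$ there is fixed and never popped. Among the infinitely many positions at height $h$, some control location $p$ recurs, and since $F$ is visited infinitely often one can choose two occurrences of $p$ at height $h$ with a final location strictly in between, exactly exhibiting a repeating head reached from $c$. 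Turning this stack-height and pigeonhole argument into a rigorous proof is where the real work lies.

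Granting the characterization, the NFA $\mathcal{A}$ is built by the standard saturation machinery. Regular sets of configurations are represented by $P$-automata, i.e. NFAs whose initial states are the control locations $P$ and which read stack contents over $\Gamma$; such an automaton accepts $(p, w)$ exactly when its run on $w$ from the initial state $p$ is accepting. To compute $pre^*$ of the set recognised by such an automaton, I would repeatedly apply the saturation rule: whenever $(p,\gamma) \Rightarrow (p', v)$ is a rule of $\mathcal{P}$ and the current automaton admits a path from $p'$ reading $v$ to a state $q$, add the transition $p \xrightarrow{\gamma} q$; this is iterated to a fixpoint. The usual invariants show that the saturated automaton recognises $pre^*$ of the input set, that each transition is added at most once, and that the number of transitions stays polynomial in $|\mathcal{P}|$, so the construction runs in polynomial time. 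Applied to the $P$-automaton for $R$, this yields the required NFA $\mathcal{A}$ with $\mathcal{L}(\mathcal{A}) = \mathcal{L}_\Gamma(\mathcal{P})$ and polynomial size.

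It remains to compute the repeating heads within the same budget. I would form a finite directed graph on the heads and close the single-step head transitions of $\mathcal{P}$ under \emph{summaries} — segments that push a symbol and later pop it, whose net effect relates the head before the push to the head after the matching pop — these summaries being exactly the information produced alongside the saturation above; each edge additionally carries a Boolean flag recording whether $F$ can be visited along the corresponding segment. A head is then repeating precisely when it lies on a cycle of this graph carrying at least one flagged edge, which an SCC computation detects in polynomial time. Finally, emptiness of $\mathcal{L}_\Gamma(\mathcal{P})$ reduces to emptiness of $\mathcal{A}$, equivalently to asking whether any repeating head is reachable, and is therefore decidable in polynomial time as claimed.
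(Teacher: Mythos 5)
The paper offers no proof of this theorem at all — it is quoted directly from \cite{Schwoon2002} — and your reconstruction is precisely the argument of that reference (going back to Bouajjani, Esparza and Maler): the repeating-head characterization $\mathcal{L}_{\Gamma}(\mathcal{P}) = pre^*(R)$ established by the stack-height/pigeonhole argument, the saturation computation of $pre^*$ with its polynomial bound on added transitions, and the summary-graph-plus-SCC computation of the repeating heads. The one slip is your definition of a repeating head: the witnessing run must return to the \emph{same head}, i.e. $(p,\gamma) \Rightarrow^{+} (p,\gamma w)$ with $\gamma$ again on top of the stack — returning merely to the control location $p$ without popping the initial $\gamma$ is too weak (such a loop need not be iterable, e.g. if the first move rewrites $\gamma$ into a symbol from which $F$ is unreachable), although both directions of your argument do in fact construct and use the correct, stronger notion.
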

We call the languages accepted by DFA \textit{regular languages} (REG) and those accepted by VPA \textit{visibly pushdown languages} (VPL). The languages accepted by the corresponding models working on infinite words are called $\omega$-regular and $\omega$-visibly pushdown languages ($\omega$-REG and $\omega$-VPL respectively).
Following \cite{Axelsson2010}, we call a class $U$ of finite word automata \textit{reasonable} iff it contains automata recognising the languages $\Sigma$ and $\Sigma^*$
and for any $\mathcal{A} \in U$ and $\mathcal{B}$ of the same automaton type with $\mathcal{L}(\mathcal{A}) = \mathcal{L}(\mathcal{B})$, $\mathcal{B} \in U$ (closure under \textit{equivalences}). All types of finite word automata in this paper are reasonable classes of automata.

Let $\textit{DIR} = \{\downarrow, \downarrow_a\}$. An Alternating Jump Automaton (AJA) \cite{Bozzelli2007} is a $5$-Tuple $(Q, \Sigma, \delta, q_0, \Omega)$ where $Q$ is a finite set of states, $\Sigma$ is a finite pushdown alphabet, $\delta: Q \times \Sigma \rightarrow \mathbb{B}^+(\text{\textit{DIR}} \times Q \times Q)$ is a transition function, $q_0 \in Q$ is an initial state, and $\Omega: Q \rightarrow \mathbb{N}$ is a colouring function. Here, $\mathbb{B}^+(X)$ denotes the set of positive boolean formulae over $X$.\footnote{Note that the model of AJA introduced in \cite{Bozzelli2007} allows further directions that are not needed in this paper.}

Let $\mathcal{A}  =(Q, \Sigma, \delta, q_0, \Omega)$ be an AJA and $w=\alpha_1\alpha_2 \dots \in \Sigma^\omega$ an infinite word. In order to define executions of AJA, we need two types of successor relations of indices on $w$. The direct successor is simply $succ(\downarrow, i) = i+1$. The \emph{abstract} successor $succ(\downarrow_a, i)$ is the index of the input symbol to be read on the same recursion level for a call or internal action, if it exists. In all other cases, it is $\top$. Formally, $succ(\downarrow_a, i) = min \{j > i \mid |v|_{\Sigma_{call}} = |v|_{\Sigma_{ret}} \text{ for } v = \alpha_i \dots \alpha_{j-1}\}$ if it exists and $\alpha_i \in \Sigma_{call} \cup \Sigma_{int}$, where $|v|_X$ is the number of occurences of symbols from $X$ in $w$. Otherwise, $succ(\downarrow_a, i) = \top$.
The AJA $\mathcal{A}$ processes the input word $w$ by reading it from left to right starting in its initial state $q_0$. 
Then, whenever $\mathcal{A}$ is in a state $q$ and reads a symbol $\alpha_i$, it guesses a set of targets $T \subseteq (\text{\textit{DIR}} \times Q \times Q)$ which satisfies $\delta(q,\alpha_i)$. Afterwards, it creates a copy of itself for each $(d, q', q'') \in T$, which then moves to the state $q'$ if $j \! = \! succ(d, \alpha_i) \neq \top$ and reads the symbol $\alpha_j$ next or otherwise transitions into state $q''$ and processes the next symbol $\alpha_{i+1}$. 
Formally, an execution of $\mathcal{A}$ on $w$ is an infinite tree $T(\mathcal{A}, w)$ with the root node $\epsilon$ and the following properties:
1) each node of $T(\mathcal{A},w)$ is associated with a pair $(i,q) \in \mathbb{N} \times Q$ which indicates that a copy of $\mathcal{A}$ is currently in state $q$ and reads the input symbol $\alpha_i$ next. For a node $v$ the associated pair is denoted by $p(v)$ and $p(\epsilon) = (1, q_0)$.
2) For each node $v$ within the tree with $p(v) = (i, q)$ there exists a set of targets $T = \{(d_1,q_1,q'_1) \dots (d_k,q_k,q'_k)\} \subseteq \textit{DIR} \times Q \times Q$  that satisfies $\delta(q,\alpha_i)$ such that for each child node $v_h$ of $v$ with $1 \leq h \leq k$ we have $p(v_h) = (i+1, q'_h)$ if $succ(d_h,w_i) = \top$ and $p(v_h) = (succ(d_h,w_i), q_h)$ otherwise.
A branch $\beta \! = \! v_0v_1...$ of $T(\mathcal{A},w)$ is an infinite sequence of nodes with $v_0 = \epsilon$ and where $v_i$ is the parent of $v_{i+1}$ for each $i \in \mathbb{N}_0$. The set of colours which appear infinitely often throughout a branch is defined as:
$$\mathcal{C}(\beta) = \big\{ \Omega(q) \; \mid \text{ for infinitely many } j \in \mathbb{N}_0 \; \text{, there is }i \in \mathbb{N}_0 \text{ with }p(v_j) = (i,q)\}$$
The execution tree $T(\mathcal{A},w)$ is accepting iff for each branch $\beta$ the smallest colour in $\mathcal{C}(\beta)$ is even (parity condition).
\begin{theorem}[\cite{Bozzelli2007}]\label{theorem:AJAemptiness}
For every AJA $\mathcal{A}$, one can construct in exponential time a BVPS $\mathcal{A}'$ with size exponential in the size of $\mathcal{A}$ such that $\mathcal{L}(A) = \mathcal{L}(\mathcal{A}')$.
\end{theorem}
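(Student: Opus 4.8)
The plan is to build $\mathcal{A}'$ in two conceptual layers. The decisive observation is that the stack of a VPS is \emph{already synchronised} with the call/return structure of the input word that defines the abstract successor: on a call the VPS pushes, on the matching return it pops, and on an internal symbol $succ(\downarrow_a,i)$ coincides with $succ(\downarrow,i)=i+1$. Hence the only genuinely non-local move of $\mathcal{A}$ — a $\downarrow_a$-jump on a \emph{call} symbol to the position just after the matching return — can be simulated by deferring the jumped computation on the stack and reviving it exactly when the matching return is popped. The two layers are then (i) removing the alternation of $\mathcal{A}$ by a subset-style simulation, and (ii) recasting the resulting nondeterministic parity condition as the Büchi condition of $\mathcal{A}'$.

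Concretely, I would let the control states of $\mathcal{A}'$ carry a \emph{macro-state}, i.e. a finite set of AJA states describing the frontier of copies that currently read the same input position; this is the standard device for simulating alternation by nondeterminism. On reading $\alpha_i$, the VPS guesses, for every state $q$ in its macro-state, a set of targets $T \subseteq \textit{DIR} \times Q \times Q$ satisfying $\delta(q,\alpha_i)$, which resolves the disjunctive (existential) choices of $\mathcal{A}$ at this step. The targets are dispatched by direction: every $\downarrow$-target and every $\downarrow_a$-target on an internal symbol contributes its state $q'$ to the macro-state reading $\alpha_{i+1}$, whereas the resume-states $q'$ of $\downarrow_a$-targets taken on a \emph{call} symbol are not advanced but collected into an obligation set pushed onto the stack together with the (visible) call. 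When the matching return is read, this obligation set is popped and its states are merged into the macro-state reading the position after the return — exactly $succ(\downarrow_a,i)$. The fallback component $q''$ of a target is used precisely when $succ(d,i)=\top$; for a call this occurs iff the call is \emph{pending} (never matched), which the VPS resolves by a guess that is automatically validated by whether the pushed frame is ever popped.

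For the acceptance condition I would adapt a standard alternation-removal construction for parity automata to this nested setting: the macro-states are augmented with enough priority bookkeeping (a ranking, or a Miyano–Hayashi-style obligation marking once the parity condition is recast) so that a single Büchi condition of $\mathcal{A}'$ certifies that \emph{every} branch of the simulated execution tree satisfies the minimum-colour-even requirement. Since this bookkeeping records a set of states together with an ordering or flag on them, the number of control states of $\mathcal{A}'$ is exponential in $|Q|$, while the stack alphabet grows only by the possible obligation sets, again exponential in $|Q|$; enumerating macro-states and, for each, the guesses satisfying $\delta$ yields both the claimed exponential size and exponential construction time. Correctness is then established by exhibiting, from an accepting execution tree of $\mathcal{A}$ on $w$, a corresponding accepting run of $\mathcal{A}'$ and conversely, matching each jump edge of the tree to a push/pop pair of the run.

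The part I expect to be hardest is the faithful interaction between the stack-based deferral of $\downarrow_a$-jumps and the parity-to-Büchi bookkeeping. A copy deferred at a call contributes no priorities while it waits on the stack, yet upon reactivation at the matching return it must be treated as a \emph{continuation of the same branch} for the purpose of the parity condition, while the copies that scan the nested subword form other branches whose priorities must be tracked independently. Ensuring that the Büchi condition of $\mathcal{A}'$ faithfully reflects the acceptance of every such branch across push/pop boundaries — so that no accepting branch is spuriously created or destroyed, and so that pending (unmatched) calls are correctly handled by the $q''$-fallback — is where the bulk of the technical care lies.
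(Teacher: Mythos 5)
Note first that the paper does not prove this theorem at all: it is imported verbatim from \cite{Bozzelli2007}, so your proposal can only be measured against the construction given there. Your overall architecture does match that construction: a nondeterministic macro-state simulation of the alternation, deferral of $\downarrow_a$-obligations in stack frames pushed at a call and revived exactly when the matching return is popped (exploiting that the VPS stack is synchronised with the nesting structure defining the abstract successor), plus extra bookkeeping to express the branch-wise parity requirement as a single B\"uchi condition. However, one of your claims is false as stated. You say the matched/pending guess at a call is ``automatically validated by whether the pushed frame is ever popped.'' Only one direction is automatic: a frame pushed under a \emph{pending} guess that later gets popped can abort the run. In the other direction, if the run guesses \emph{matched} at a call that is in fact pending, the obligation frame simply stays on the stack forever; no transition ever fails, the fallback states $q''$ prescribed by the semantics are never simulated, and the deferred branches are never inspected again. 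An accepting run of $\mathcal{A}'$ then need not correspond to any accepting execution tree of $\mathcal{A}$, so soundness breaks. Ruling this out requires an explicit device, e.g.\ a control bit recording ``some matched-guess frame is on the stack,'' saved into every pushed frame and restored on pop, together with a B\"uchi requirement that this bit is cleared infinitely often.

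The second gap is the one you yourself flag as hardest, but it is not a matter of ``adapting a standard alternation-removal construction'': copies frozen on the stack are invisible to the B\"uchi condition, so obligation or rank information must live in the frames, and the obvious liftings of Miyano--Hayashi fail. If a revived copy re-enters the obligation set only according to its status at push time, a branch can dodge every obligation by being frozen across each reset point (each of its freeze intervals containing the reset), yielding acceptance by $\mathcal{A}'$ of words whose execution trees contain a branch that never visits an accepting state. If, instead, revived copies are always re-obligated, pops can keep injecting owing copies so that the obligation set never empties even when the simulated tree is accepting, destroying the converse inclusion. Designing an obligation/rank discipline that is sound and complete across push/pop boundaries (and keeping the whole construction singly exponential) is precisely the technical substance of Bozzelli's proof; your proposal correctly isolates this as the crux but does not supply the idea that resolves it.
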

For simplicity, we define an Alternating Büchi Automaton (ABA) to be an AJA with parity $2$ in which only internal transitions are used (i.e. all input symbols belong to $\Sigma_{int}$). We likewise define a non-deterministic Büchi Automaton (BA) to be an ABA which uses only disjunctions, but no conjunctions. 
In ABA, we allow $\varepsilon$-transitions which can trivially be eliminated.
By the well-known Miyano-Hayashi construction, we obtain the following Theorem:
\begin{theorem}[\cite{Miyano1984}]
The emptiness problem for ABA is PSPACE-complete. Every ABA $\mathcal{A}$ can be converted to a BA $\mathcal{A}'$ with $\mathcal{L}(\mathcal{A}) = \mathcal{L}(\mathcal{A})'$ such that the size of $\mathcal{A}'$is exponential in the size of $\mathcal{A}$. 
\end{theorem}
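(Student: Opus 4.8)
The plan is to handle the two assertions separately, deriving the PSPACE upper bound from the translation itself. For the translation of an ABA $\mathcal{A} = (Q, \Sigma, \delta, q_0, \Omega)$ into a BA $\mathcal{A}'$, I would apply the Miyano--Hayashi subset-with-breakpoint construction. Write $F \subseteq Q$ for the set of accepting (even-coloured) states, so that a branch of $\mathcal{A}$ is accepting iff it visits $F$ infinitely often. A configuration of $\mathcal{A}'$ is a pair $(S, O)$ with $O \subseteq S \subseteq Q$, where $S$ is the current frontier of the alternating run and $O$ is the \emph{obligation} set of tokens that still owe a visit to $F$ before the next breakpoint. On reading $a$, one guesses for each $q \in S$ a set of targets satisfying $\delta(q,a)$ and keeps only the state each target moves to; since all transitions are internal, the direction is $\downarrow$ and $succ(\downarrow,i) = i+1$ always exists, so the relevant component is the first state of each triple. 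The union of these choices is the new frontier $S'$. The obligation component is updated, when $O \neq \emptyset$, by taking the successors chosen for the tokens in $O$ and discarding those in $F$; when $O = \emptyset$ a breakpoint occurs and we reset $O' = S' \setminus F$. The accepting states of $\mathcal{A}'$ are exactly those with $O = \emptyset$, and the initial configuration is $(\{q_0\}, \emptyset)$. Since each $q \in Q$ lies in $O$, in $S \setminus O$, or outside $S$, there are at most $3^{|Q|}$ configurations, so $|\mathcal{A}'|$ is exponential in $|\mathcal{A}|$.

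For $\mathcal{L}(\mathcal{A}) = \mathcal{L}(\mathcal{A}')$ I would argue both inclusions. If $\mathcal{A}$ accepts $w$, I fix an accepting run tree; projecting it level by level onto frontier sets yields a run of $\mathcal{A}'$, and because every branch meets $F$ infinitely often the obligation set is flushed infinitely often, so the run passes through breakpoints infinitely often and is accepting. Conversely, from an accepting run of $\mathcal{A}'$ I read off, level by level, the guessed successor choices and assemble a run DAG of $\mathcal{A}$. I expect the reverse direction to be the one genuinely delicate point: I must show that infinitely many breakpoints force \emph{every} branch of this DAG to visit $F$ infinitely often. This is a König's lemma argument --- if some branch avoided $F$ from a point onward, its token would persist in every subsequent obligation set and no further breakpoint could occur, contradicting acceptance of the $\mathcal{A}'$-run.

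For emptiness I would separate the bounds. Membership in PSPACE follows because $\mathcal{L}(\mathcal{A}) = \mathcal{L}(\mathcal{A}')$ and a BA is non-empty iff it has a reachable accepting configuration lying on a cycle; such a lasso can be searched for on the fly, storing only a constant number of configurations $(S, O)$, each of size polynomial in $|Q|$, so the search runs in nondeterministic polynomial space and PSPACE follows by Savitch's theorem (the exponential blow-up of $\mathcal{A}'$ is never materialised). PSPACE-hardness I would obtain by a standard reduction, for instance from the universality problem for NFA (equivalently, intersection non-emptiness of DFA), encoding the complemented acceptance directly into the alternating transition function, so that alternation over a polynomially bounded structure yields the matching lower bound. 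I expect the construction and its size analysis to be routine, the hardness reduction to be a textbook adaptation, and the breakpoint correctness in the reverse inclusion to require the only real care.
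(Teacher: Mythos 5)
The paper offers no proof of this theorem at all---it is imported by citation from Miyano and Hayashi \cite{Miyano1984}---and your breakpoint construction with states $(S,O)$, $O \subseteq S \subseteq Q$, the $3^{|Q|}$ size bound, the on-the-fly lasso search giving the PSPACE upper bound, and the universality-style hardness reduction are precisely the standard arguments behind that citation, so your proposal is correct and matches the paper's (cited) approach. One small remark: the K\"onig-type argument in fact belongs to the \emph{forward} inclusion (showing that, because every branch of the run tree hits $F$, the obligation set is flushed at some uniformly bounded later level, hence infinitely often), whereas the reverse inclusion is exactly the direct persistence induction you describe---but both essential ideas are present in your write-up.
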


	\section{Logics}\label{section:logics}
\subsection{Extended LTL}
Let $U$ be a reasonable class of automata. For a fixed set $AP$ of atomic propositions, an LTL[U] formula is given by the following grammar:
$$\varphi ::= ap \mid \neg \varphi \mid \varphi \land \varphi \mid \varphi\;\mathcal{U}^{\mathcal{A}}\;\varphi.$$
The semantics of an LTL[U] formula for a path $\pi$ is given as follows:
\begin{itemize}
\item $\pi \models ap$ iff $ap \in \lambda(\pi(0))$
\item $\pi \models \neg \varphi$ iff $\pi \not\models \varphi$
\item $\pi \models \varphi_1 \land \varphi_2$ iff $\pi \models \varphi_1$ and $\pi \models \varphi_2$
\item $\pi \models \varphi_1 \mathcal{U}^{\mathcal{A}}\varphi_2$ iff $\exists k: \pi^k \models \varphi_2$ and $\forall j < k: \pi^j \models \varphi_1$ and $Actions(\pi, k) \in \mathcal{L}(\mathcal{A})$
\end{itemize}
We denote by $\mathcal{L}(\varphi)$ the set of traces of paths fulfilling $\varphi$. For a KTS $\mathcal{K}$, we define $\mathcal{K} \models \varphi$ iff $Traces(\mathcal{K}) \subseteq \mathcal{L}(\varphi)$. For LTL[U] (and later logics), we denote by $|\varphi|$ the sum of the number of operators in the syntax tree of $\varphi$ and the size of the automata $\mathcal{A}$ occuring in $\varphi$. For simplicity, we often write $\mathcal{U}^{\mathcal{L}}$ for some language $\mathcal{L}$ and use this as an abbreviation for $\mathcal{U}^{\mathcal{A}}$ where $\mathcal{A}$ is the minimal automaton with $\mathcal{L}(\mathcal{A}) = \mathcal{L}$. We also write LTL[REG] and LTL[VPL] to emphasise that it is not relevant whether a deterministic or non-deterministic automaton is used for the respective language class, while we use the specific class of automata (e.g. LTL[DVPA] or LTL[NFA]) otherwise. If no parameter is given for a modality, the language $\Sigma^*$ is assumed.
We can define $\lor$, $\Rightarrow$, $true$ and $false$ in the obvious manner. We define $\varphi_1 \mathcal{R}^{\mathcal{A}} \varphi_2 \equiv \neg (\neg \varphi_1 \mathcal{U}^{\mathcal{A}} \neg \varphi_2)$, $\mathcal{X} \varphi \equiv true\;\mathcal{U}^{\Sigma} \varphi$, $\mathcal{G}^{\mathcal{A}} \varphi \equiv false \mathcal{R}^{\mathcal{A}} \varphi$ and $\mathcal{F}^{\mathcal{A}} \varphi \equiv true\;\mathcal{U}^{\mathcal{A}}\varphi$ . The explicit semantics of $\mathcal{R}^{\mathcal{A}}$ is as follows:
$\pi \models \varphi_1 \mathcal{R}^{\mathcal{A}} \varphi_2$ iff $\forall i: Actions(\pi, i) \notin \mathcal{L}(\mathcal{A})$ or $\pi^i \models \varphi_2$ or $\exists j < i: \pi^j \models \varphi_1$.

We define an LTL[U] formula to be in \textit{Negation Normal Form} (NNF) if negations occur only in front of atomic propositions. 
In order to convert LTL[U] formulae to NNF, we assume $\mathcal{R}^{\mathcal{A}}$ to be a first class modality and $\lor$ to be a first class operator of LTL[U].
 The equivalence $\neg (\varphi_1\;\mathcal{U}^A\;\varphi_2 ) \equiv (\neg \varphi_1)\;\mathcal{R}^{\mathcal{A}} (\neg \varphi_2) $ combined with the classical equivalences for LTL can be used to obtain:
\begin{theorem}\label{theorem:LTLNNF}
Every LTL[U] formula can be converted into an LTL[U] formula in NNF for every $U$.
\end{theorem}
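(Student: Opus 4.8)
The plan is to prove this by structural induction on the LTL[U] formula, pushing negations inward using a set of rewrite equivalences until they sit only in front of atomic propositions. The key enabling fact is that we have promoted $\mathcal{R}^{\mathcal{A}}$ and $\lor$ to first-class syntactic constructs, so the target NNF fragment is closed under the dualisation steps and we never need to re-introduce a leading negation during the rewriting.

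First I would set up the induction on the structure of $\varphi$. In the base case $\varphi = ap$, the formula is already in NNF (a bare atomic proposition has no negation at all), and $\neg ap$ is in NNF by definition since the single negation sits directly in front of an atomic proposition. For the inductive step I would treat each connective, with the real work concentrated on how $\neg$ interacts with each operator. The guiding principle is a function $\mathit{nnf}(\cdot)$ that commutes negation past every operator: concretely I would use $\neg\neg\varphi \equiv \varphi$ to collapse double negations, the De Morgan laws $\neg(\varphi_1 \land \varphi_2) \equiv (\neg\varphi_1) \lor (\neg\varphi_2)$ and dually for $\lor$, and for the temporal part the equivalence $\neg(\varphi_1\,\mathcal{U}^{\mathcal{A}}\,\varphi_2) \equiv (\neg\varphi_1)\,\mathcal{R}^{\mathcal{A}}\,(\neg\varphi_2)$ stated just above the theorem together with its dual $\neg(\varphi_1\,\mathcal{R}^{\mathcal{A}}\,\varphi_2) \equiv (\neg\varphi_1)\,\mathcal{U}^{\mathcal{A}}\,(\neg\varphi_2)$, which follows by negating the defining identity $\varphi_1\,\mathcal{R}^{\mathcal{A}}\,\varphi_2 \equiv \neg(\neg\varphi_1\,\mathcal{U}^{\mathcal{A}}\,\neg\varphi_2)$ and cancelling the outer double negation. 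In every case a negation applied to a compound formula is rewritten so that the negations migrate strictly toward the leaves, where the inductive hypothesis supplies NNF forms for the smaller subformulae.

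The formal argument I would carry out measures progress along the induction by reasoning on each subformula that currently carries a leading $\neg$: either the subformula underneath is already atomic and we stop, or one of the equivalences above rewrites $\neg$ applied to an operator into an operator applied to two negated (strictly smaller) subformulae, to which the induction hypothesis applies. Since each application pushes negation past exactly one operator toward the leaves and the subformulae strictly decrease in size, the process terminates with all negations resting on atomic propositions. Soundness of each rewriting step is exactly the semantic equivalence of the corresponding identity, all of which hold by the definitions of $\mathcal{U}^{\mathcal{A}}$, $\mathcal{R}^{\mathcal{A}}$, $\land$, $\lor$ and $\neg$; composing equivalent rewrites yields an LTL[U] formula equivalent to $\varphi$ and in NNF.

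The main obstacle is not the propositional bookkeeping but ensuring that the dualisation of the temporal operators is genuinely sound for arbitrary language parameters $\mathcal{A}$: one must check that the $\mathcal{R}^{\mathcal{A}}$ modality introduced by negating $\mathcal{U}^{\mathcal{A}}$ carries the \emph{same} automaton $\mathcal{A}$, so that no complementation or other operation on the language is required. This is precisely why $\mathcal{R}^{\mathcal{A}}$ must be a first-class modality rather than defined syntactic sugar: the identity $\neg(\varphi_1\,\mathcal{U}^{\mathcal{A}}\,\varphi_2) \equiv (\neg\varphi_1)\,\mathcal{R}^{\mathcal{A}}\,(\neg\varphi_2)$ preserves $\mathcal{L}(\mathcal{A})$ verbatim, so the construction never leaves the class $U$ and the resulting formula is a legitimate LTL[U] formula for the same $U$, as claimed.
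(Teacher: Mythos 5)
Your proof is correct and follows essentially the same route as the paper, which simply notes that the duality $\neg(\varphi_1\,\mathcal{U}^{\mathcal{A}}\,\varphi_2) \equiv (\neg\varphi_1)\,\mathcal{R}^{\mathcal{A}}\,(\neg\varphi_2)$ (with $\mathcal{R}^{\mathcal{A}}$ and $\lor$ as first-class constructs) combined with the classical LTL equivalences drives negations inward. Your elaboration of the induction and the observation that the automaton $\mathcal{A}$ is carried over verbatim, so no language complementation is needed and the class $U$ is never left, is exactly the point the paper's terse argument relies on.
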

In LTL[U], we can express the example properties mentioned in the introduction in a straightforward manner after choosing an appropriate language class and automaton. Given a DFA $\mathcal{A}$ accepting all finite words of even length, the property \textit{Every other index fulfills $\varphi$} can then be expressed by the LTL[DFA] formula $\mathcal{G}^{\mathcal{A}}\varphi$. Likewise, for a VPA $\mathcal{A}$ recognising the language $req^n grant^n$ (if $req$ is marked a call and $grant$ a return symbol), the LTL[VPA] formula $work\;\mathcal{U}^{\mathcal{A}}\;complete$ states that an agent takes $n$ requests and then grants the same number before completing its work.
Unlike Extended CTL, LTL[U] can also express the classical fairness property $\mathcal{G} \mathcal{F} q$ which, using additional automata, can also be refined by requirements that certain prefixes of a fair path must be contained in a language. 

	\subsection{Extended CTL$^{*}$ and CTL$^{+}$}
Let $U$ be a reasonable class of automata. A CTL$^*[U]$ formula $\varphi$ is given by the following grammar:
$$\varphi ::= ap \mid \neg \varphi \mid \varphi \land \varphi \mid \varphi\;\mathcal{U}^{\mathcal{A}} \; \varphi \mid E \varphi$$
A CTL$^+[U]$ formula $\psi$ is given by the following grammar:
\begin{align*}
    \varphi &::= \psi\;\mathcal{U}^{\mathcal{A}}\; \psi \mid \neg \varphi \mid \varphi \land \varphi\\
    \psi &::= ap  \mid \neg \psi \mid \psi \land \psi \mid  \; E \varphi 
\end{align*}
The semantics of all operators except existential quantification is inherited from LTL[U] in the obvious manner. 
For the quantifier, we define: $\pi \models \exists \varphi \text{ iff } \exists \pi' \in Paths(\pi(0)): \pi' \models \varphi$.
We use the same abbreviations as for LTL[U] and additionally define $A \varphi \equiv \neg E \neg \varphi$. 
To compare our logics with the logics in \cite{Axelsson2010}, we define CTL[$V$,$ W$] to be the sub-logic of CTL$^*$[$V\cup W$] in which all modalities are quantified, quantifiers appear only in front of modalities and additionally, in $\mathcal{U}^{\mathcal{A}}$, $\mathcal{A} \in V$ and in $\mathcal{R}^{\mathcal{A}}$, $\mathcal{A} \in W$. 

Extended CTL$^*$ allows us to express all properties expressible in Extended LTL and Extended CTL. In particular, we can combine linear time and branching time properties, e.g. to refine Extended CTL properties by fairness which itself is not expressible in Extended CTL.

	\section{Automata for $LTL[U]$}\label{section:automataLTLU}
\subsection{LTL[\textit{REG}] formulae to Büchi automata} \label{sec:ltlbuchi}
In this section, we show how LTL[REG] formulae can be translated to ABA. Our translation is inspired by the classical translation of LTL into ABA \cite{Vardi1997} which has also been refined for other extensions with regular modalities \cite{Faymonville2014}.
\begin{theorem}\label{theorem:LTLREGAutomata}
	For every LTL[\textit{REG}] formula $\varphi$ in NNF, there is an ABA $\mathcal{A}$ with $\mathcal{L}(\mathcal{A}) = \mathcal{L}(\varphi)$ with size linear in $|\varphi|$.
\end{theorem}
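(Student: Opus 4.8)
The plan is to adapt the classical translation of LTL into (very weak) alternating Büchi automata to the regular modalities, reading the trace alphabet $\Sigma = 2^{AP}\times\Gamma$ with all symbols treated as internal (direction $\downarrow$). The states of $\mathcal{A}$ are the modal subformulae of $\varphi$, with boolean connectives and (negated) atomic propositions inlined into the transition formulae rather than given dedicated states. The new ingredient is that a modal subformula $\chi$ with parameter NFA $\mathcal{A}_\chi=(Q_\chi,\Gamma,\delta_\chi,q^0_\chi,F_\chi)$ is blown up into the state set $\{\chi\}\times Q_\chi$, whose second component tracks a run of $\mathcal{A}_\chi$ on the action word read since $\chi$ was launched; ``launching $\chi$'' at a position means entering $(\chi,q^0_\chi)$ there, matching that the semantics of $\chi$ at $\pi^i$ reads the actions of the suffix from $a_i$. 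Crucially, to stay linear for a \emph{nondeterministic} parameter I track the run of $\mathcal{A}_\chi$ using the alternation already present: existential (disjunctive) guessing for $\mathcal{U}^{\mathcal{A}}$ and universal (conjunctive) exploration for $\mathcal{R}^{\mathcal{A}}$, so that no determinisation (which would be exponential) is required.

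Concretely, writing $\Delta(\psi,\sigma)$ for the positive boolean combination of targets that checks $\pi^i\models\psi$ when reading $\sigma=(P,a)$, the atomic and boolean cases are standard, and for $\chi=\psi_1\,\mathcal{U}^{\mathcal{A}}\,\psi_2$ I set
\[
\Delta((\chi,q),\sigma)\;=\;\big([\,\delta_\chi(q,a)\cap F_\chi\neq\emptyset\,]\land\Delta(\psi_2,\sigma)\big)\ \lor\ \Big(\Delta(\psi_1,\sigma)\land\bigvee_{q'\in\delta_\chi(q,a)}(\chi,q')\Big),
\]
where the bracket abbreviates $true$/$false$: the left disjunct discharges the eventuality exactly when the action prefix read so far is accepted by $\mathcal{A}_\chi$ and $\psi_2$ holds, while the right disjunct demands $\psi_1$ now and guesses a continuing $\mathcal{A}_\chi$-run. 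The Release case $\chi'=\psi_1\,\mathcal{R}^{\mathcal{A}}\,\psi_2$ is the pointwise dual, obtained by swapping $\lor\leftrightarrow\land$ and $\bigvee\leftrightarrow\bigwedge$ (with the bracket test contributing $true$ when the prefix is rejected); the conjunction over \emph{all} successors realises the subset of reachable $\mathcal{A}_\chi$-states implicitly, so ``some reachable state is accepting $\Rightarrow\psi_2$'' is enforced without ever materialising subsets. Here one must take care to align the bookkeeping with the paper's convention $Actions(\pi,k)=a_0\cdots a_k$, i.e. the current action is consumed before the membership test.

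For acceptance I use parity $2$ (equivalently co-Büchi): states $\{\chi\}\times Q_\chi$ with $\chi$ an $\mathcal{U}^{\mathcal{A}}$-formula receive the odd colour, all other states the even colour. Correctness of this choice rests on the observation that the modal-subformula component of a state never increases in the subformula order along a branch, and hence is eventually constant; thus every infinite branch eventually remains inside a single cluster $\{\chi\}\times Q_\chi$, and it is accepting precisely when that cluster belongs to a Release formula, i.e. when no $\mathcal{U}^{\mathcal{A}}$-eventuality is deferred forever (dead parameter runs terminate a branch harmlessly, as an empty disjunction yields $false$ for $\mathcal{U}^{\mathcal{A}}$ and an empty conjunction yields $true$ for $\mathcal{R}^{\mathcal{A}}$). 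The size bound is then immediate: $\mathcal{A}$ has $\sum_\chi|Q_\chi|+O(1)$ states, which is linear in $|\varphi|$ since $|\varphi|$ already counts the sizes of the parameter automata, and each transition formula is of size bounded by the local structure of $\varphi$.

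Finally I would prove $\mathcal{L}(\mathcal{A})=\mathcal{L}(\varphi)$ by structural induction, establishing the stronger invariant that a run launched from the state (resp. inlined formula) of a subformula $\psi$ at position $i$ is accepting iff $\pi^i\models\psi$; the base and boolean cases are routine, and the modal cases combine the run-tracking invariant ($q$ equals an $\mathcal{A}_\chi$-state reachable on the actions read since launch) with the acceptance condition. I expect the main obstacle to be exactly this last interaction. Because the $q$-component keeps changing, the automaton is no longer ``very weak'' in the naive state-by-state sense, so the eventual-cluster property must be argued at the level of modal subformulae; and the soundness of existential versus universal parameter tracking has to be matched to the polarity of the modality, so that the co-Büchi condition rejects precisely the runs in which some $\mathcal{U}^{\mathcal{A}}$-obligation is postponed forever and, dually, accepts the runs that fulfil a $\mathcal{R}^{\mathcal{A}}$-obligation by never releasing.
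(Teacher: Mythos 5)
Your construction is correct and takes essentially the same approach as the paper's proof: embed the parameter NFA's states into the alternating automaton, track its run existentially (disjunctively) for $\mathcal{U}^{\mathcal{A}}$ and universally (conjunctively) for $\mathcal{R}^{\mathcal{A}}$, spawn the checks for $\psi_1$/$\psi_2$ conjunctively at each step, and make Until-NFA states rejecting but Release-NFA states accepting, which avoids determinisation and keeps the size linear. The paper packages this as a bottom-up induction gluing per-subformula ABAs via $\varepsilon$-transitions (discharging $\mathcal{U}^{\mathcal{A}}$ at final NFA states) rather than your single global weak automaton with inlined boolean connectives and bracket tests, but these differences are presentational, not substantive.
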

	\begin{proof}
		Let $\varphi$ be an LTL[REG] formula.  The proof is by structural induction over $\varphi$ with the induction hypothesis that for every subformula $\psi$ of $\varphi$, there is an ABA $\mathcal{A}$ with $\mathcal{L}(\mathcal{A}) = \mathcal{L}(\psi)$ and $|\mathcal{A}| \in \mathcal{O}(|\psi|)$.
		\begin{description}
		\item[Case $\varphi = ap \in AP$.]
		Clearly, the automaton $\mathcal{A} = (\{q_0\}, 2^{AP} \times \Sigma, \delta, q_0, \{q_0\})$ with $\delta(q_0, (X, \sigma)) = true$ for all $\sigma \in \Sigma$ and $X \subseteq AP$ with $ap \in X$ fulfills the claim.
		\item[Case $\varphi = \neg ap$.]
		Similar to the previous case.
		\item[Case $\varphi = \psi_1 \lor \psi_2$.] 
		Let $\mathcal{A}_i = (Q^i, 2^{AP} \! \times \Sigma, \delta^i, q_0^i, F^i)$ be the ABA with $\mathcal{L}(\mathcal{A}_i) = \mathcal{L}(\psi_i)$ for $i \in \{1,2\}$ given by the induction hypothesis. W.l.o.g. let $Q^1 \cap Q^2 = \emptyset$. The idea in this case is to shift the semantics for the $\lor$ operator directly into the transition function.
		The ABA $\mathcal{A} = (Q, 2^{AP} \!\times \Sigma, \delta, q_0, F)$ accepting the language $\mathcal{L}(\varphi)$ consists of a state set $Q =  Q^1 \cup Q^2 \cup \{q_0\}$ and the set of accepting states $F= F^1 \cup F^2$. It starts from a new initial state $q_0$ from which it is able to transition into both automata $\mathcal{A}_1$ and $\mathcal{A}_2$ using a disjunctive $\varepsilon$-transition.

		\item[Case $\varphi = \psi_1 \land \psi_2$.]
		Similar to the previous case.
		\item[Case $\varphi = \psi_1 \mathcal{U}^\mathcal{X} \psi_2$.]
		Again, the induction hypothesis yields the ABA $\mathcal{A}_i = (Q^i, 2^{AP} \!\times \Sigma, \delta^i, q_0^i, F^i)$ with $\mathcal{L}(A_i) = \mathcal{L}(\psi_i)$ for $i \in \{1,2\}$. Furthermore let $\mathcal{X}=(Q',\Sigma, \delta', q_0', F')$ be the NFA used in the formula. It is assumed that $Q^1, Q^2$ and $Q'$ are pairwise disjoint. Intuitively, we translate the formula just like a basic LTL formula, but additionally run the NFA to guess an accepting path, requiring $\varphi_1$ to hold until the accepting state is hit and then requiring $\varphi_2$ to hold. 
		Formally, let $\mathcal{A}_{\varphi} = (Q, \hat{\Sigma}, \delta, F)$ where $Q = Q_1 \cup Q_2 \cup Q'$, $\delta = \delta_1 \cup \delta_2 \cup \hat{\delta'}$, 
		$q_0 = q_0'$ and $F = F_1 \cup F_2$. Finally, we set 
		 and $\hat{\delta'}(q', (X, \sigma)) = \delta_1(q_0^1, (X, \sigma)) \land \bigvee_{q'' \in \delta'(q', \sigma)}q''$ for all $q' \in Q'$ and $\hat{\delta'}(q', \varepsilon) = q_0^2$ for $q' \in F'$.
		\item[Case $\varphi = \psi_1 R^\mathcal{X} \psi_2$.]
		Once more, the ABA $\mathcal{A}_i = (Q^i, 2^{AP}\!\times\Sigma, \delta^i, q_0^i, F^i)$ with $\mathcal{L}(A_i) \! = \! \mathcal{L}(\psi_i)$ for $i \in \{1,2\}$ are provided by the induction hypothesis.  Again let $\mathcal{X}=(Q',\Sigma, \delta', q_0', F')$ be the NFA used in the formula and let $Q^1, Q^2$ as well as $Q'$ be pairwise disjoint. 
Let $\mathcal{A}_{\varphi} = (Q, 2^{AP} \times \Sigma, \delta, F)$ where $Q = Q_1 \cup Q_2 \cup Q'$, $\delta = \delta_1 \cup \delta_2 \cup \hat{\delta'}$, 
		$q_0 = q_0'$ and $F = F_1 \cup F_2 \cup Q'$. Finally, we set 
		$\hat{\delta'}(q', \varepsilon) = q_0^1 \land q_0^2$ and  $\hat{\delta'}(q', (X, \sigma)) = \delta_2(q_0^2, (X, \sigma)) \land \bigwedge_{q'' \in \delta'(q', \sigma)}q''$ for final states $q' \in F$.
For non-final states $q' \notin F$, we set $\hat{\delta'}(q', \varepsilon) = q_0^1$  and $\hat{\delta'}(q', (X, \sigma)) =  (\bigwedge_{q'' \in \delta'(q', \sigma)}q'') $.
For this modality, we have to distinguish several cases: if $\varphi_1 \land \varphi_2$ is fulfilled at a final state, the whole formula is fulfilled there. On the other hand, it suffices to require only $\varphi_1$ to hold if the state currently under consideration in the NFA is non-accepting. If $\varphi_1$ , we require $\varphi_2$ to hold unless the current state is non-accepting, in which case we do not require any formulae to hold. However, unless $\varphi_2$ is released from holding by one of these cases, we always have to pursue all possible successors in the NFA lest we miss paths which reach an accepting state and thus require $\varphi_2$ to hold. All accepting states of the automata $\mathcal{A}_i$ stay accepting in order to reflect the semantics of $\psi_i$. Moreover, all states of the NFA $\mathcal{X}$ are declared accepting as paths on which $\varphi_1$ never holds, but $\varphi_2$ holds on any accepting state of $\mathcal{X}$ are allowed by the semantics of the Release operator.\hfill\qed
	\end{description}	
\end{proof}
	\subsection{Automata for LTL[VPL]}
Just as for LTL[REG], we construct alternating automata for LTL[VPL], this time using AJA instead of ABA. 


\begin{theorem}\label{theorem:VPLautomata}
	Every LTL[\textit{VPL}] formula $\varphi$ can be translated into an AJA $\mathcal{A}$ with $\mathcal{L}(\varphi) \! = \! \mathcal{L}(\mathcal{A})$ and $|\mathcal{A}| \in \mathcal{O}\big( |\varphi|^2 \big)$.
\end{theorem}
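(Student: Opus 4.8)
The plan is to proceed by structural induction on $\varphi$ exactly as in the proof of \autoref{theorem:LTLREGAutomata}, but replacing ABA by AJA and the NFA parameters by VPA. Since an AJA that only ever uses the direction $\downarrow$ is precisely an ABA, the base cases $\varphi = ap$ and $\varphi = \neg ap$ as well as the Boolean cases $\varphi = \psi_1 \lor \psi_2$ and $\varphi = \psi_1 \land \psi_2$ can be taken over essentially verbatim from that proof: they read one symbol at a time and never need the abstract successor. The induction hypothesis provides, for each strict subformula $\psi_i$, an AJA $\mathcal{A}_i$ with $\mathcal{L}(\mathcal{A}_i) = \mathcal{L}(\psi_i)$ and $|\mathcal{A}_i| \in \mathcal{O}(|\psi_i|^2)$. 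All the genuinely new work is therefore concentrated in the modal cases $\psi_1 \mathcal{U}^{\mathcal{X}} \psi_2$ and $\psi_1 \mathcal{R}^{\mathcal{X}} \psi_2$, where $\mathcal{X}$ is now a VPA rather than an NFA.

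For the Until case I would, as in the regular case, simulate $\mathcal{X}$ on the action component of the word while demanding $\psi_1$ at every position read and switching to $\psi_2$ once $\mathcal{X}$ reaches an accepting state. The new difficulty is the stack of $\mathcal{X}$: the AJA has no explicit stack, so it must encode the stack discipline through the abstract successor $\downarrow_a$. The key idea is a \emph{summary simulation}. While $\mathcal{X}$ reads an internal or return symbol the AJA advances with $\downarrow$ just as in the NFA simulation. At a call symbol $\sigma \in \Sigma_{call}$ the AJA guesses a matching call/return transition pair of $\mathcal{X}$ over the same stack symbol $\gamma$ — a call transition $(q, \sigma) \to (q', \gamma)$ and a return transition $(p, \sigma', \gamma) \to r$ — and splits conjunctively: one copy takes direction $\downarrow_a$ to jump over the whole matched block, resuming the main simulation in state $r$ at the position immediately after the matching return; a second copy starts at $q'$ on the direct successor and verifies the summary, i.e. that $\mathcal{X}$ reaches state $p$ just before the matching return. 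This verifying copy carries only a pair $(\text{current state}, p)$, treats nested calls recursively by the same mechanism so that it stays on the current recursion level, and succeeds when it first reads a return symbol directly in state $p$ matching the committed transition, at which point it moves to a self-looping sink of good colour. Crucially, every copy that simulates $\mathcal{X}$ also checks $\psi_1$ at each position it reads, so that all positions strictly before the acceptance point — including those inside skipped blocks, which are covered by the verifying copies — are guarded by $\psi_1$.

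Committing to a matching return transition already at the call is what keeps the state space small: the verifying copies only need to remember the current state of $\mathcal{X}$ together with the obligation state $p$, so $\mathcal{O}(|Q'_i|^2)$ states suffice for each modality and, since the stack alphabet is never stored, no further factor appears. Summing $|Q'_i|^2$ over the VPA occurring in $\varphi$ stays below $|\varphi|^2$, which yields the claimed $\mathcal{O}(|\varphi|^2)$ bound; the colouring uses only the two colours of the Büchi-style ABA, so the result is a genuine AJA. The Release case $\psi_1 \mathcal{R}^{\mathcal{X}} \psi_2$ is handled dually, mirroring the corresponding case of \autoref{theorem:LTLREGAutomata} and lifting its NFA simulation to the same summary simulation of $\mathcal{X}$, with all states of $\mathcal{X}$ declared accepting so that branches on which $\mathcal{X}$ never reaches an accepting state are permitted.

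I expect the main obstacle to be the correctness of the summary simulation, in particular ensuring that $\downarrow_a$ really makes the verifying copies detect the matching return at the right recursion level and that no position is left unguarded by $\psi_1$. Establishing $\mathcal{L}(\mathcal{A}) = \mathcal{L}(\varphi)$ will require an induction on the nesting structure of matched call/return blocks, showing that accepting runs of $\mathcal{A}$ correspond exactly to stack-respecting accepting runs of $\mathcal{X}$ together with satisfying witnesses for $\psi_1$ and $\psi_2$. The delicate point is the faithful interaction between the guessed summaries on the main branch and their verification on the conjunctive branches, especially for deeply nested blocks and for calls that are never matched, where $succ(\downarrow_a, \cdot) = \top$ and the second target of the direction must be used to continue the simulation on the direct successor.
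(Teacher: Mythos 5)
Your handling of the Until case is sound and is essentially the construction the paper itself (only) sketches, citing Weinert et al.: at each call, guess how $\mathcal{X}$ crosses the matched block, jump with $\downarrow_a$, delegate the inside of the block to a conjunctive verifying copy, and let every copy guard the positions it actually reads with $\psi_1$. Your use of the second target of $\downarrow_a$ to deal with unmatched calls is a clean alternative to the explicit flag the paper introduces for this purpose. Two small repairs are needed, though. First, a verifying copy cannot carry only the pair $(\text{current state}, p)$: the main copy jumps to the position \emph{after} the matching return and never reads the return symbol itself, so the verifier is the only copy able to check that the committed return transition $(p,\sigma',\gamma)\to r$ is actually enabled at the matching return; it must therefore remember that transition (or at least $\sigma'$, $\gamma$ and $r$), giving $\mathcal{O}(|Q'|\cdot|\delta'|)$ verification states, which still respects the quadratic bound. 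Second, since the theorem carries no NNF hypothesis, you should state that you first apply \autoref{theorem:LTLNNF}, a linear preprocessing step.

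The genuine gap is the Release case, and it is precisely the case the paper arranges never to meet. The paper does not use NNF for \autoref{theorem:VPLautomata}: it complements an AJA for free by dualising the transition function and shifting the parity, so negation (and hence $\mathcal{R}^{\mathcal{X}}$) costs nothing and only the Until construction is ever carried out --- this is exactly why the paper remarks that NNF is needed for ABA but not for AJA. You instead commit to NNF and assert that $\psi_1\,\mathcal{R}^{\mathcal{X}}\,\psi_2$ is ``the same summary simulation with all states of $\mathcal{X}$ declared accepting''. As stated, that construction is unsound: Release imposes the $\psi_2$ obligation at \emph{every} prefix accepted by \emph{some} run of the nondeterministic VPA, so the automaton must track all runs universally, just as the conjunctive NFA simulation does in the Release case of \autoref{theorem:LTLREGAutomata}. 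Lifted to summaries, this means that at a call the automaton must branch \emph{conjunctively} over all call/return commitments, and each conjunct must express ``either this summary is not realisable by any run through the block, or the continuation obligation holds after the block'' --- i.e.\ you need \emph{co-verifiers}, duals of your verifiers, certifying non-realisability of a summary. Existentially guessing a single summary, as in your Until case, would let the automaton skip obligations it is required to meet. The clean fix is the paper's move applied locally: obtain the Release automaton as the dual of your Until automaton; dualisation is free for AJA and produces the conjunctive branching and the co-verifiers automatically, keeping the $\mathcal{O}(|\varphi|^2)$ bound intact.
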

	\begin{proof}
		Let $\varphi$ be an LTL[\textit{VPL}] formula over a finite proposition set $AP$ and a set of actions $\Sigma$. The proof is done by structural induction over $\varphi$, with the induction hypothesis that for every subformula $\psi$ of $\varphi$ there is an AJA $\mathcal{A}$ with $\mathcal{L}(\mathcal{A}) = \mathcal{L}(\psi)$ and $|\mathcal{A}| \in \mathcal{O}\big( |\varphi|^2 \big)$. Atomic propositions and logical operators are handled just as in \autoref{theorem:LTLREGAutomata}. A formula $\neg \varphi$ can be handled by complementing the AJA for $\varphi$ which is possible by shifting the parity without any blowup and dualising the transition function. Thus, we do not require NNF here. Notice that this would not have been possible for LTL[REG] in the proof of \autoref{theorem:LTLREGAutomata} since complementing an ABA induces a quadratic blowup and this would altogether lead to an exponential blowup. 
Hence, it only remains to consider formulae of the form $\varphi = \psi_1 \mathcal{U}^{\mathcal{X}} \psi_2$. Here, just as in \cite{Weinert2018}, we can use an AJA to simulate the stack of the VPA $\mathcal{X}$ by introducing a flag in order to indicate whether a call should ever be returned from and enforcing this with the acceptance condition, and following the execution after the return, where needed, via a conjunctive copy. We can otherwise linearly follow the trace and require $\phi_1$ to hold unless we find an accepting state in the VPA $\mathcal{X}$ and $\phi_2$ holds. \hfill\qed
	\end{proof}

	\section{Expressivity}\label{section:expressivity}
In this section, we compare the expressive power of several logics to each other. For two logics $L, L'$, we write $L \leq L'$ if every formula $\varphi$ in $L$ has a matching formula $\varphi'$ in $L'$ such that for all KTSs $\mathcal{K}$, $\mathcal{K} \models \varphi$ iff  $\mathcal{K} \models \varphi'$. This induces the expressivity relation $<$ in the obvious way. Furthermore, we write $L \leq_{lin} L'$ (resp. $L \leq_{exp} L'$ or $L \leq_{poly} L'$) to denote that the translation of formulae in $L$ to formulae in $L'$ involves linear (resp. exponential or polynomial) blowup.
The following inclusions are immediate:
\begin{theorem}
\begin{enumerate}
\item LTL[U] $\leq_{lin}$ CTL$^*$[U]
\item CTL[U, V] $\leq_{lin}$ CTL$^+$[W] if $U \subseteq W$ and $V \subseteq W$
\item CTL$^+$[U] $\leq_{lin}$ CTL$^*$[U]
\end{enumerate}
\end{theorem}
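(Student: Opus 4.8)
The plan is to establish each inclusion by a (near-)syntactic embedding of the source logic into the target logic, so that the parameter automata are reused verbatim and the only overhead comes from a bounded number of additional boolean or quantifier symbols; this immediately yields the $\leq_{lin}$ bound in every case.

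For (1), I would observe that every LTL[U] formula $\varphi$ is, read literally, a CTL$^*$[U] formula without any occurrence of $E$. The subtlety is that LTL[U] is evaluated over all traces of $\mathcal{K}$ (i.e.\ $\mathcal{K} \models \varphi$ iff $Traces(\mathcal{K}) \subseteq \mathcal{L}(\varphi)$), whereas $\mathcal{K} \models$ for CTL$^*$[U] is anchored at the initial state. I would bridge this by taking the matching formula to be $\varphi' = A \varphi = \neg E \neg \varphi$ and proving $\mathcal{K} \models \varphi$ iff $\mathcal{K} \models \varphi'$: since $A\varphi$ holds at the initial state exactly when every path in $Paths(\mathcal{K})$ satisfies $\varphi$, and the path semantics of the LTL[U] operators is inherited unchanged by CTL$^*$[U], both sides reduce to ``$\pi \models \varphi$ for all $\pi \in Paths(\mathcal{K})$''. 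The prefix $A$ adds only a constant number of symbols, so the blowup is linear.

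For (2), I would proceed by structural induction on a CTL[U,V] formula, which by definition is a CTL$^*[U \cup V]$ formula in which every modality is quantified and quantifiers sit only in front of modalities. The induction hypothesis is that each state subformula translates to a CTL$^+$[W] state formula $\psi$ with the same meaning. A subformula $E(\psi_1\,\mathcal{U}^{\mathcal{A}}\,\psi_2)$ with $\mathcal{A} \in U \subseteq W$ is already a CTL$^+$[W] formula; a subformula built with $\mathcal{R}^{\mathcal{A}}$ (where $\mathcal{A} \in V \subseteq W$) is rewritten through the defining abbreviation $\psi_1 \mathcal{R}^{\mathcal{A}} \psi_2 \equiv \neg(\neg \psi_1\,\mathcal{U}^{\mathcal{A}}\,\neg \psi_2)$, which is a legal CTL$^+$ path formula under $E$; universal quantifiers are expanded via $A\chi \equiv \neg E \neg \chi$. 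Each rewrite introduces only finitely many negations and reuses the original automaton, so the translation is linear and incurs no automaton blowup; the hypotheses $U \subseteq W$ and $V \subseteq W$ are exactly what guarantees every parameter automaton is admissible in CTL$^+$[W].

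For (3), I would note that the CTL$^+$[U] grammar is literally a two-sorted sub-grammar of the CTL$^*$[U] grammar: every CTL$^+$ state formula $\psi$ and path formula $\varphi$ is also a CTL$^*$[U] formula, and all operator semantics are inherited. Hence the identity map is the required translation, with no blowup at all (in particular linear); since a top-level CTL$^+$ formula is a state formula, $\mathcal{K} \models$ is preserved. The main obstacle across the three parts is not combinatorial but conceptual, and it lives entirely in (1): one must carefully reconcile the all-paths trace-inclusion semantics of LTL[U] with the initial-state semantics of the branching-time logics, which is precisely what the universal quantifier $A$ accomplishes. Parts (2) and (3) then reduce to routine checks that the restricted grammars embed into the target grammars via the already-defined $\mathcal{R}$ and $A$ abbreviations.
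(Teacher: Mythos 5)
Your proposal is correct and matches the paper's approach: the paper states these inclusions as ``immediate,'' precisely because they follow from the syntactic embeddings you describe (identity for LTL[U] and CTL$^+$[U] into CTL$^*$[U], and the $\mathcal{R}^{\mathcal{A}} \equiv \neg(\neg\cdot\,\mathcal{U}^{\mathcal{A}}\,\neg\cdot)$ and $A \equiv \neg E \neg$ abbreviations for CTL[U,V] into CTL$^+$[W]), all of which reuse the parameter automata verbatim and add only constantly many symbols per operator. Your extra care in part (1) about reconciling the trace-inclusion semantics with the quantified formulation via the $A$ prefix is a harmless (and sound) refinement of what the paper leaves implicit.
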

Furthermore, it was shown in \cite{Axelsson2010} that CTL[U, V] cannot express the basic fairness property $\mathcal{G}\mathcal{F}q$ regardless of the automata classes chosen. This immediately implies:
\begin{theorem}
CTL$^*$[W] $\not \leq$ CTL[U, V] for arbitrary classes $U, V, W$.
\end{theorem}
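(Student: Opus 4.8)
The plan is to reduce the statement directly to the inexpressibility of fairness quoted just above it. Concretely, I would exhibit one fixed CTL$^*$[W] formula whose class of models cannot be captured by any CTL[U,V] formula, for every choice of $U, V, W$. The natural witness is the fairness formula $\varphi = A\mathcal{G}\mathcal{F}q$, equivalently the linear-time property $\mathcal{G}\mathcal{F}q$, since under the KTS semantics $\mathcal{K} \models \varphi$ holds iff every trace of $\mathcal{K}$ visits $q$ infinitely often. (Both versions define the same model class: for $\pi \in Paths(s_0)$ we have $\pi(0) = s_0$, so $\pi \models A\mathcal{G}\mathcal{F}q$ expresses exactly that all paths from $s_0$ are fair, which is $Traces(\mathcal{K}) \subseteq \mathcal{L}(\mathcal{G}\mathcal{F}q)$.)

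First, I would check that $\varphi$ is a legal CTL$^*$[W] formula for \emph{every} reasonable class $W$. The modalities $\mathcal{G}$ and $\mathcal{F}$ carry the default parameter $\Sigma^*$, and by definition every reasonable class contains an automaton recognising $\Sigma^*$; hence $\varphi$ lies in CTL$^*$[W] no matter what $W$ is, in fact in the common fragment that uses only the trivial language. Together with $A\psi \equiv \neg E \neg \psi$, this confirms that $\varphi$ is a well-formed CTL$^*$[W] formula for arbitrary $W$.

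Second, I would invoke the result of Axelsson et al. quoted above: no CTL[U,V] formula expresses $\mathcal{G}\mathcal{F}q$, and this holds regardless of $U$ and $V$. Thus there is no CTL[U,V] formula $\varphi'$ with $\mathcal{K} \models \varphi'$ iff $\mathcal{K} \models \varphi$ for all KTS $\mathcal{K}$. Since $\varphi \in$ CTL$^*$[W] admits no CTL[U,V] equivalent, the definition of $\leq$ yields CTL$^*$[W] $\not\leq$ CTL[U,V] for arbitrary $U, V, W$, as required.

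The argument is essentially a one-line reduction, so the only point that needs care is the class-independence: one must be sure the cited theorem truly rules out fairness for \emph{all} $U, V$, not merely for the regular case, and that the fairness property in the framework of \cite{Axelsson2010} coincides with the model class of our $\varphi$. Both hold because fairness uses only the trivial language $\Sigma^*$ in its temporal modalities, so enriching $U$ or $V$ with more expressive languages cannot help a purely branching-time logic capture the linear-time ``infinitely often'' pattern — which is precisely the content of the cited separation.
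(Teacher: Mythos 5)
Your proposal is correct and follows exactly the paper's own argument: the paper derives this theorem ``immediately'' from the quoted result of Axelsson et al.\ that CTL[U,V] cannot express the fairness property $\mathcal{G}\mathcal{F}q$ for any choice of $U,V$, with the witness formula lying in CTL$^*$[W] for every reasonable $W$ because the default parameter $\Sigma^*$ belongs to every reasonable class. Your write-up merely makes explicit the two checks (well-formedness of the witness in every CTL$^*$[W], and the coincidence of the branching-time and linear-time readings of fairness under the KTS semantics) that the paper leaves implicit.
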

In order to discuss some model theoretic properties needed for some of the embeddings of our logics, we introduce the logic PDL-$\Delta$[U] \cite{Axelsson2010}.
For a set $\Pi$ of atomic programs, PDL-$\Delta$[U] is defined by the following rules:
\begin{itemize}
\item Every atomic proposition $ap$ is a formula. 
\item If $\varphi_1$ and $\varphi_2$ are formulae, then so are $\varphi_1 \land \varphi_2$ and $\neg \varphi_1$. 
\item If $\varphi$ is a formula, then $\varphi?$ is a test where the set of tests is denoted $Test$.
\item A regular expression over $\Pi \cup Test$ is a program. 
\item If $\alpha$ is a program and $\varphi$ is a formula, then $\langle \alpha \rangle \varphi$ is a formula.
\item An automaton $\mathcal{A}$ of type $U$ over the alphabet $\Sigma \cup Test$ is an $\omega$-program.
\item For every $\omega$-program $\mathcal{A}$, $\Delta \mathcal{A}$ is a formula. 
\end{itemize}
A formula of PDL-$\Delta$[U] is evaluated over a structure $\mathcal{M} = (S, R, v)$ where $S$ is a set of states, $R : \Pi \rightarrow 2^{S \times S}$ is a transition relation that assigns to atomic programs the state transitions that are possible using them and $v: S \rightarrow 2^{AP}$ assigns atomic propositions to states. We can interpret $\mathcal{M}$ as a KTS by interpreting atomic programs as actions.  
The relation $R$ is extended to tests as follows:
$R(\varphi?) = \{(s, s) \mid M, s \models \varphi\}$. Moreover, for programs $\alpha$,  $R(\alpha) = \{(s, s') \mid \exists w = w_1 \dots w_m \in \mathcal{L}(\alpha): \exists s_0 \dots s_m \in S: s = s_0 \land s' = s_m \land (s_{i-1}, s_i) \in R(w_i)$ for all $1  \leq i \leq m \}$.
Furthermore, we have for each automaton $\mathcal{A}$ over the alphabet $\Pi \cup Test$ a unary relation $R_{\omega}(\mathcal{A})$ such that $s \in R_{\omega}(\mathcal{A})$ iff there is an infinite word $w = w_0 w_1 \dots \in \mathcal{L}(\mathcal{A})$ and a sequence of states $s_0 s_1 \dots$ such that $s_0 = s$ and $(s_i, s_{i+1}) \in R(w_i)$ for all $i \geq 0$. We note that if $\mathcal{A}$ is a VPS, $Test$ is assumed to belong to the internal symbols.

The semantics of PDL-$\Delta$[U] for a structure $\mathcal{M}$ and a state $s \in S$ is then given as follows:
\begin{enumerate}
\item $\mathcal{M}, s \models ap$ iff $ap \in v(s)$
\item $\mathcal{M}, s \models \varphi_1 \land \varphi_2$ iff $\mathcal{M}, s \models \varphi_1$ and $\mathcal{M}, s \models \varphi_2$
\item $\mathcal{M}, s \models \neg \varphi$ iff $\mathcal{M}, s \not\models \varphi$
\item $\mathcal{M}, s \models \langle \alpha \rangle \varphi$ iff there is $s' \in S$ such that $(s, s') \in R(\alpha)$ and $s' \models \varphi$
\item $\mathcal{M}, s \models \Delta \mathcal{A}$ iff $s \in R_{\omega}(\mathcal{A})$
\end{enumerate}

\begin{theorem}\label{theorem:CTL*embedding}
Let $c \in \{*, +\}$.
\begin{enumerate}
\item CTL$^c$[REG] $<_{exp}$ PDL-$\Delta$[REG].
\item CTL$^c$[VPL] $<_{exp}$ PDL-$\Delta$[VPL].
\end{enumerate}
\end{theorem}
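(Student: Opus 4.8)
The plan is to treat both logics uniformly by first reducing $\mathrm{CTL}^{+}[U]$ to $\mathrm{CTL}^{*}[U]$: since $\mathrm{CTL}^{+}[U] \leq_{lin} \mathrm{CTL}^{*}[U]$, an exponential embedding of $\mathrm{CTL}^{*}[U]$ into $\mathrm{PDL}\text{-}\Delta[U]$ immediately yields one for $\mathrm{CTL}^{+}[U]$ as well. I would therefore prove $\mathrm{CTL}^{*}[U] \leq_{exp} \mathrm{PDL}\text{-}\Delta[U]$ by structural induction, producing for every state subformula an equivalent $\mathrm{PDL}\text{-}\Delta[U]$ formula over the structure obtained by reading atomic programs as actions. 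Atomic propositions and Boolean connectives carry over verbatim, so the only interesting case is a quantified formula $E\psi$ with $\psi$ a path formula.

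For $E\psi$ I would collect the maximal proper state subformulas $\chi_1, \dots, \chi_n$ of $\psi$, replace each by a fresh test proposition $t_i$, and recursively translate $\chi_i$ into an equivalent $\mathrm{PDL}\text{-}\Delta[U]$ formula $\chi_i'$. The remaining skeleton of $\psi$ is then an $\mathrm{LTL}[U]$ formula over $AP \cup \{t_1, \dots, t_n\}$, which \autoref{theorem:LTLREGAutomata} (for $U = \mathrm{REG}$, after conversion to NNF via \autoref{theorem:LTLNNF}) or \autoref{theorem:VPLautomata} (for $U = \mathrm{VPL}$) turns into an alternating automaton over the trace alphabet $2^{AP \cup \{t_1,\dots,t_n\}} \times \Sigma$. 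I would rewrite this automaton over $\Sigma \cup Test$: each symbolic trace letter $(X, a)$ is simulated by a gadget that first fires the tests $p?$ and $\chi_i'?$ (and their negations) fixing the valuation $X$ at the current state and then takes the atomic program $a$. Since tests do not move the state while programs do, an accepting run over $\Sigma \cup Test$ corresponds exactly to a KTS path whose trace is accepted by the original automaton. Finally I would dealternate, via the Miyano--Hayashi construction for $\mathrm{REG}$ (yielding a BA) and via \autoref{theorem:AJAemptiness} for $\mathrm{VPL}$ (yielding a BVPS), to obtain a nondeterministic $\omega$-program $\mathcal{A}_\psi$ of type $U$, and set the translation of $E\psi$ to $\Delta\mathcal{A}_\psi$. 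By definition of $R_{\omega}$, a state $s$ satisfies $\Delta\mathcal{A}_\psi$ iff some infinite path from $s$ realises an $\omega$-word of $\mathcal{L}(\mathcal{A}_\psi)$, i.e. iff some path from $s$ satisfies $\psi$, which is exactly the semantics of $E\psi$.

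For the blowup I would observe that dealternation is the only exponential step and that it is applied locally, at each quantified subformula, to the already-abstracted skeleton; the $\chi_i'$ appear only as tests and are not re-expanded. Since the local skeleton sizes sum to $|\varphi|$ and exponentials of the parts multiply to a single exponential of the whole, the resulting $\mathrm{PDL}\text{-}\Delta[U]$ formula has size exponential in $|\varphi|$, establishing $\leq_{exp}$.

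It remains to establish strictness. Both logics are bisimulation invariant, so any separator must be bisimulation invariant as well. The natural candidate exploits that $\Delta\mathcal{A}$ quantifies existentially over paths realising an arbitrary $\omega$-word accepted by an automaton of type $U$ over $\Sigma \cup Test$, whereas the path formula under an $E$ in $\mathrm{CTL}^{*}[U]$ ranges only over $\mathrm{LTL}[U]$-definable path sets; if $\mathrm{LTL}[U]$ is strictly weaker than the corresponding class of $\omega$-automata, a single $\Delta\mathcal{A}$ already separates the logics. Otherwise I would fall back on a genuinely branching greatest-fixpoint property, in the spirit of the classical separation of $\mathrm{CTL}^{*}$ from the modal $\mu$-calculus, and adapt the inexpressibility technique of \cite{Axelsson2010}. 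I expect this strictness argument to be the main obstacle: the embedding is a routine, if technical, automata construction, whereas proving inexpressibility in $\mathrm{CTL}^{*}[U]$ (and not merely in the weaker $\mathrm{CTL}[U]$, which $\mathrm{CTL}^{*}$ already strengthens) requires a dedicated argument rather than a direct construction.
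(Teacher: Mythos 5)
Your translation is essentially the paper's own proof: it, too, replaces the innermost existential subformulas $E\psi$ by $\Delta\mathcal{A}'$, where $\mathcal{A}'$ is obtained from the alternating automaton of \autoref{theorem:LTLREGAutomata} (resp. \autoref{theorem:VPLautomata}) by Miyano--Hayashi dealternation (resp. \autoref{theorem:AJAemptiness}); it likewise handles nested state subformulas by fresh propositions that are afterwards substituted by tests $(\psi'?)$ inside the automata, and it charges the single-exponential blowup to dealternation with the same ``exponentials of disjoint skeletons sum to one exponential'' reasoning. Whether one substitutes the tests before or after dealternating is immaterial, so that difference in ordering does not matter. The one place you go beyond the paper is your final paragraph on strictness: the paper's proof establishes only the translation, i.e.\ the $\leq_{exp}$ part of the claim, and says nothing about why the inclusion is proper; nor is properness used downstream (the satisfiability bounds, the finite/visibly-pushdown model property transfer, and the corollary CTL$^c$[$U$] $\not\leq_{poly}$ PDL-$\Delta$[$U$] all rely only on the exponential embedding). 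So the obstacle you flag is real, but it is a gap left open by the paper itself rather than a defect specific to your proposal; on this point your write-up is the more careful of the two.
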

\begin{proof}
For the innermost existential formula $E \psi$ of a CTL$^c$[REG] formula $\varphi$, we build the corresponding ABA $\mathcal{A}_{\psi}$ and dealternate it using the Miyano-Hayashi construction to obtain an equivalent BA $\mathcal{A}'$ with exponentially more states. We can thus replace $E \psi$ by the formula $\psi' \equiv  \Delta \mathcal{A}'$. Inductively, we always temporarily replace such a formula by a fresh atomic proposition and then, after the dealternation, replace this proposition by the test $(\psi' ?)$ in the new automaton. This way, we can integrate the BAs into each other. This translation is exponential due to the dealternation. 
For $CTL^c[VPL]$, the proof is analogous, replacing ABA by AJA (and using \autoref{theorem:AJAemptiness} for the dealternation). \hfill\qed
\end{proof}

For PDL-$\Delta[U]$, we make use of the following model-theoretic results:
\begin{theorem}[\cite{Axelsson2010}]
\begin{enumerate}
\item PDL$-\Delta$[REG] has the finite model property.
\item PDL$-\Delta$[VPL] has the \textit{visibly pushdown model property}, i.e. every satisfiable formula has a model that is a VPS.
\end{enumerate}
\end{theorem}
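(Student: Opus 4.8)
The plan is to prove both statements by the automata-theoretic small-model method rather than by filtration, since the $\Delta$ operator quantifies over infinite realisations of an $\omega$-program and a naive filtration through the Fischer--Ladner closure would not preserve the existence of such infinite runs. First I would fix a formula $\varphi$ and compute its Fischer--Ladner closure $cl(\varphi)$, whose elements include every subformula of $\varphi$, every test $\psi?$ occurring in a program, and the relevant residuals of the programs and $\omega$-programs. I would then describe a tree model of $\varphi$ as an infinite tree whose nodes are labelled by Hintikka sets (maximal locally consistent subsets of $cl(\varphi)$) together with the atomic program along which they are reached, and whose $\langle \alpha \rangle$-requirements are discharged by the presence of suitable successors. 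The key point is to encode this into a tree automaton $\mathcal{T}_\varphi$ that accepts exactly a tree encoding of the models of $\varphi$: local consistency is a safety condition on the labelling, the diamond modalities are existential branching requirements obtained by running the program automata over $\Pi \cup Test$ against the tree, and each subformula $\Delta \mathcal{A}$ is certified by an acceptance condition tracking an accepting run of the $\omega$-program automaton $\mathcal{A}$ along some branch from the witnessing node.

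For part (1) the program components are regular, so $\mathcal{T}_\varphi$ can be built as a Büchi or parity tree automaton of the standard kind, and satisfiability of $\varphi$ becomes nonemptiness of $\mathcal{T}_\varphi$. By the regularity of nonempty tree-automaton languages (Rabin's tree theorem; equivalently the existence of finite-memory winning strategies in the associated parity game), a nonempty $\mathcal{T}_\varphi$ accepts a \emph{regular} tree $t$, i.e. a tree with finitely many distinct subtrees up to isomorphism. Such a $t$ is the unfolding of a finite structure $\mathcal{M}$ obtained by identifying nodes with isomorphic subtrees, and since PDL-$\Delta$[REG] is invariant under bisimulation --- in particular a structure is bisimilar to its tree unfolding --- $\mathcal{M}$ and $t$ satisfy the same formulae. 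Hence $\mathcal{M}$ is a finite model of $\varphi$, which establishes the finite model property.

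For part (2) the only change is the language class of the programs: the components of $\varphi$ are now VPA, so in place of ordinary tree automata I would use visibly pushdown tree automata, or equivalently alternating automata that push and pop a stack in lockstep with the call and return actions, in the spirit of the AJA used in \autoref{theorem:VPLautomata} and dealternated via \autoref{theorem:AJAemptiness}. Nonemptiness of such a pushdown tree automaton is again decidable, and its nonemptiness witnesses are generated by the underlying pushdown process; unravelling such a witness produces exactly the computation tree of a visibly pushdown system. Thus a satisfiable formula has a model that is the configuration graph of a VPS, which is the visibly pushdown model property.

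The main obstacle will be the mutual dependence between the closure and the program automata introduced by \emph{tests}: a test $\psi?$ inside a program refers back to the truth of $\psi$ at the current state, so the labelling guessed by $\mathcal{T}_\varphi$ and the transitions of the program and $\omega$-program automata must be kept consistent. I would resolve this in the standard way by letting the Hintikka labelling decide every test formula in $cl(\varphi)$ and letting the program automata read these decisions off the node labels, so that $R(\psi?)$ is simulated by a self-loop gated on whether $\psi$ lies in the current label. The second delicate point is ensuring that the tree-automaton acceptance condition faithfully certifies each asserted $\Delta \mathcal{A}$ by a genuine accepting run of $\mathcal{A}$ while a single finite-memory witness still folds to a finite (respectively visibly pushdown) model; this is precisely where the finite-memory nonemptiness witnesses for parity tree automata (respectively pushdown tree automata) are required.
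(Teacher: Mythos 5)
This theorem is stated in the paper as an imported result, cited from \cite{Axelsson2010}; the paper contains no proof of it, so your proposal can only be compared with the arguments in the literature behind the citation. Measured that way, your sketch follows essentially the known route. For part (1), the combination of Fischer--Ladner closure, Hintikka-labelled tree models, a tree automaton whose nonemptiness coincides with satisfiability, Rabin's regularity theorem to extract a regular witness tree, and folding via bisimulation invariance is exactly the classical automata-theoretic proof of the finite model property for $\Delta$-PDL (going back to Streett), and it is the right strategy, since, as you correctly observe, filtration fails in the presence of $\Delta$. For part (2), your plan coincides with the argument behind the result that \cite{Axelsson2010} itself invokes (due to Löding, Lutz and Serre): pass to visibly pushdown (stair-parity) tree automata and read the visibly pushdown model property off the structure of nonemptiness witnesses.

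Two points need repair before the sketch is a proof. First, in part (1) you must bound the branching degree of the tree models (by the number of diamond and $\Delta$ requirements in the closure) before invoking Rabin's theorem, which applies to trees of fixed finite arity; this is routine but cannot be skipped. Second, and more substantively, in part (2) the sentence claiming that ``nonemptiness witnesses are generated by the underlying pushdown process'' is the crux of the whole argument, not a remark: it requires the theorem that parity games on pushdown graphs admit winning strategies implementable by pushdown transducers (Walukiewicz), or equivalently that a nonempty (visibly) pushdown tree automaton accepts a tree generated by a pushdown system whose stack moves in lockstep with the call/return labels, which is what makes the folded witness a VPS rather than an arbitrary infinite structure. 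Decidability of nonemptiness alone yields no such structured witness, so without this ingredient the step from ``satisfiable'' to ``has a VPS model'' does not follow. With those two additions your outline is a faithful reconstruction of the proofs underlying the cited theorem.
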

Due to the embedding of \autoref{theorem:CTL*embedding} and the fact that there is already a satisfiable CTL[VPA, VPA] formula with no finite model \cite{Axelsson2010}, we obtain:
\begin{corollary}
Let $c \in \{*, +\}$.
\begin{enumerate}
\item CTL$^c$[REG] has the finite model property. 
\item There is a satisfiable CTL$^c$[VPL] formula with no finite model. CTL$^c$[VPL] has the visibly pushdown model property.
\end{enumerate}
\end{corollary}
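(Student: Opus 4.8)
The plan is to transfer the model-theoretic properties from PDL-$\Delta$[U] to CTL$^c$[U] along the embedding of \autoref{theorem:CTL*embedding}. The crucial point is that this embedding provides, for every CTL$^c$ formula, an equivalent PDL-$\Delta$ formula, with equivalence holding over all KTSs, and that PDL-$\Delta$ structures $\mathcal{M} = (S, R, v)$ are themselves interpreted as KTSs. Hence the classes of models, finite models, and VPS models coincide for the two logics, so any property holding for satisfiable PDL-$\Delta$ formulae can be pulled back to CTL$^c$.

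For the finite model property (part 1), let $\varphi$ be a satisfiable CTL$^c$[REG] formula, witnessed by a KTS $\mathcal{K} \models \varphi$. By \autoref{theorem:CTL*embedding}(1) there is a PDL-$\Delta$[REG] formula $\varphi'$ equivalent to $\varphi$ over all KTSs, so $\mathcal{K} \models \varphi'$ and $\varphi'$ is satisfiable. Since PDL-$\Delta$[REG] has the finite model property, $\varphi'$ has a finite model $\mathcal{K}'$, and by the equivalence $\mathcal{K}' \models \varphi$; thus $\varphi$ has a finite model. The VPS model property in part 2 follows by the verbatim argument, replacing REG by VPL and invoking \autoref{theorem:CTL*embedding}(2) together with the visibly pushdown model property of PDL-$\Delta$[VPL].

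It remains to exhibit a satisfiable CTL$^c$[VPL] formula with no finite model, starting from the known satisfiable CTL[VPA, VPA] formula $\psi$ that has no finite model. For $c = *$, the formula $\psi$ is already a CTL$^*$[VPL] formula, as CTL[VPA, VPA] is by definition a sublogic of CTL$^*$[VPL]; for $c = +$, the inclusion CTL[VPA, VPA] $\leq_{lin}$ CTL$^+$[VPL] yields an equivalent CTL$^+$[VPL] formula $\psi'$. In both cases the formula inherits the (necessarily infinite) model of $\psi$ and has no finite model, since any finite model would, by the equivalence over KTSs, be a finite model of $\psi$, contradicting its choice. The entire argument is a routine transfer along equivalences; the only point requiring care is keeping the direction of transfer straight, so that satisfiability of the CTL$^c$ formula is used to obtain satisfiability of the associated PDL-$\Delta$ formula, whose finite (resp. VPS) model is then pulled back through the equivalence.
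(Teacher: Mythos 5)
Your proof is correct and follows essentially the same route as the paper: the paper derives the corollary precisely from the embedding of \autoref{theorem:CTL*embedding} (pulling finite resp.\ VPS models of the equivalent PDL-$\Delta$[U] formula back along the equivalence over KTSs) together with the known satisfiable CTL[VPA, VPA] formula without a finite model, transferred into CTL$^c$[VPL] via the syntactic/linear inclusions. Your write-up merely makes explicit the direction-of-transfer details that the paper leaves implicit.
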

\begin{corollary}
Let $c \in \{*, +\}$.  CTL$^c$[REG] $<$   CTL$^c$[VPL] 
\end{corollary}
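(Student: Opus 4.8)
The plan is to prove the two halves of the strict inclusion separately: the non-strict containment CTL$^c$[REG] $\leq$ CTL$^c$[VPL], and the failure of the reverse containment. For the containment direction, I would argue that over any fixed pushdown alphabet the regular languages form a subclass of the visibly pushdown languages. Given a DFA or NFA $\mathcal{A}$ recognising a language $\mathcal{L}(\mathcal{A}) \subseteq \Sigma^*$, one obtains a VPA $\mathcal{A}'$ with $\mathcal{L}(\mathcal{A}') = \mathcal{L}(\mathcal{A})$ by tracking the finite control of $\mathcal{A}$ in the states and using the stack trivially: pushing a fixed dummy symbol on every call, popping it on every return, and leaving the control unaffected, so that the stack content never influences acceptance. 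Since CTL$^c$[REG] and CTL$^c$[VPL] share the same grammar and semantics and differ only in the admissible automaton class in the modalities, replacing every automaton $\mathcal{A}$ occurring in a CTL$^c$[REG] formula $\varphi$ by such an equivalent VPA $\mathcal{A}'$ yields a CTL$^c$[VPL] formula with identical semantics on every KTS. This establishes CTL$^c$[REG] $\leq$ CTL$^c$[VPL].

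For strictness, I would invoke the preceding corollary on model properties. It supplies a satisfiable CTL$^c$[VPL] formula $\varphi$ with no finite model, while CTL$^c$[REG] enjoys the finite model property. Suppose, for contradiction, that $\varphi$ were equivalent to some CTL$^c$[REG] formula $\varphi'$, i.e. $\mathcal{K} \models \varphi$ iff $\mathcal{K} \models \varphi'$ for every KTS $\mathcal{K}$. Since $\varphi$ is satisfiable, there is a model $\mathcal{K}$ with $\mathcal{K} \models \varphi$, hence $\mathcal{K} \models \varphi'$, so $\varphi'$ is satisfiable as well. By the finite model property of CTL$^c$[REG], $\varphi'$ then has a finite model $\mathcal{K}_f$ with $\mathcal{K}_f \models \varphi'$, and by equivalence $\mathcal{K}_f \models \varphi$ -- contradicting that $\varphi$ has no finite model. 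Thus no CTL$^c$[REG] formula is equivalent to $\varphi$, giving CTL$^c$[VPL] $\not\leq$ CTL$^c$[REG].

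Combining the two parts yields the claimed strict inclusion. The only point requiring genuine care is the observation in the containment direction that REG $\subseteq$ VPL holds for the fixed pushdown alphabet used by the formula, so that the stack discipline of the constructed VPA can indeed be made trivial without altering the recognised language; the strictness is then essentially immediate once the two model-theoretic facts of the preceding corollary are available. I do not expect a serious obstacle, as both the separating witness and the relevant model properties have already been established.
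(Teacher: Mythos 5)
Your proposal is correct and follows essentially the same route as the paper: the corollary is derived directly from the preceding one, separating the logics via the finite model property of CTL$^c$[REG] against the satisfiable CTL$^c$[VPL] formula with no finite model, with the inclusion REG $\subseteq$ VPL giving the non-strict direction. Your explicit VPA construction for the containment and the contradiction argument for strictness are exactly the details the paper leaves implicit.
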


\begin{theorem}
LTL $<$ LTL[REG] $<$ LTL[VPL] $<$ LTL[DPDA].
\end{theorem}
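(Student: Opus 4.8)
The plan is to establish a chain of strict expressivity inclusions $\text{LTL} < \text{LTL[REG]} < \text{LTL[VPL]} < \text{LTL[DPDA]}$, proving each link by exhibiting (a) the easy inclusion and (b) a separating property. For the inclusions themselves, note that plain LTL embeds into LTL[REG] by using the trivial automaton for $\Sigma^*$ in every modality (so $\varphi_1 \mathcal{U} \varphi_2$ becomes $\varphi_1 \mathcal{U}^{\Sigma^*} \varphi_2$), and REG $\subseteq$ VPL $\subseteq$ DCFL gives the language-parameter containments, since any DFA-recognisable or VPA-recognisable language is in particular DPDA-recognisable. Each of these yields $\leq$ immediately, so the whole content of the theorem is the three \emph{strictness} claims.

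For the first separation $\text{LTL} < \text{LTL[REG]}$, I would reuse the classical Wolper observation already highlighted in the introduction: the property \emph{every other index fulfills $a$} is expressible as $\mathcal{G}^{\mathcal{A}} a$ in LTL[DFA] (with $\mathcal{A}$ accepting words of even length) but is not LTL-definable. The non-definability in LTL is a standard result; I would cite Wolper~\cite{Wolper1983} and, if a self-contained argument is wanted, sketch it via the usual counting/aperiodicity argument showing LTL captures only star-free (aperiodic) languages while ``even length'' is not star-free. For the second separation $\text{LTL[REG]} < \text{LTL[VPL]}$, I would use the genuinely non-regular balanced-parenthesis property, e.g. the formula $work\;\mathcal{U}^{\mathcal{A}} complete$ with $\mathcal{A}$ recognising $req^n grant^n$ over a pushdown alphabet (as in the worked example after \autoref{theorem:LTLNNF}); one then argues no LTL[REG] formula can define this trace language. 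The cleanest way to obtain this is via \autoref{theorem:LTLREGAutomata}: every LTL[REG] formula translates to an ABA, hence defines an $\omega$-regular trace language, so any property whose trace language is not $\omega$-regular (such as the $req^n grant^n$ one) separates the two logics.

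For the third separation $\text{LTL[VPL]} < \text{LTL[DPDA]}$, the analogous strategy applies: by \autoref{theorem:VPLautomata}, every LTL[VPL] formula is equivalent to an AJA, and by \autoref{theorem:AJAemptiness} AJA-recognisable languages are exactly $\omega$-VPL; hence any LTL[VPL]-definable trace property lies in $\omega$-VPL. It then suffices to give an LTL[DPDA] formula whose trace language is a deterministic-context-free but non-visibly-pushdown language. A standard witness is a language like $\{a^n b^n c^n\}$-style or a non-visibly-balanced language such as $\{a^n b^n \mid n \geq 0\}$ \emph{without} a fixed call/return partition, which is DCFL but provably outside VPL because VPL are closed under complement and boolean operations and cannot recognise languages requiring the stack discipline to depend on input symbols in a non-visible way. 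I would package this as: choose a DPDA language $L$ that is not a VPL (using the known strictness $\text{VPL} \subsetneq \text{DCFL}$), embed it via a single modality $\mathcal{F}^{\mathcal{A}} q$ or $p\,\mathcal{U}^{\mathcal{A}} q$, and invoke the $\omega$-VPL upper bound for LTL[VPL] to conclude non-definability.

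The main obstacle is the lower-bound (non-definability) half of each separation, since the inclusions and the example formulae are routine. For the first link, the obstacle is the cleanest invocation of ``LTL defines only aperiodic/star-free languages,'' which is classical but requires care to state over our action-augmented traces. For the upper two links, the crucial leverage is precisely the automata-theoretic characterisations proved earlier: \autoref{theorem:LTLREGAutomata} pins LTL[REG] trace languages inside $\omega$-REG, and the combination of \autoref{theorem:VPLautomata} with \autoref{theorem:AJAemptiness} pins LTL[VPL] inside $\omega$-VPL. The delicate point is ensuring the separating witness language truly lies outside the relevant class — i.e. verifying that the chosen non-regular example is genuinely not $\omega$-regular, and that the chosen DPDA example is genuinely not $\omega$-VPL — which for the VPL/DCFL gap relies on the known closure properties of VPL (closure under complement and intersection) that DCFL lack, together with a fixed pushdown alphabet argument showing no re-partitioning of the alphabet can make the witness visibly pushdown.
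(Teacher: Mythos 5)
Your first two separations coincide with the paper's own proof: Wolper's classical argument for LTL $<$ LTL[REG], and then the ABA upper bound of \autoref{theorem:LTLREGAutomata} (every LTL[REG] trace language is $\omega$-regular) played against a non-$\omega$-regular witness — the paper uses $\mathcal{F}^{a^n b^n} (\mathcal{G}^{\overline{\{c\}}^*}\;false)$ where you use $work\;\mathcal{U}^{\mathcal{A}}\;complete$ with $\mathcal{A}$ recognising $req^n grant^n$; these are interchangeable. For the third separation you take a genuinely different route. The paper does not argue semantically at all: it observes that LTL[DPDA] satisfiability is undecidable (\autoref{theorem:LTLDPDAsatisfiability}) while LTL[VPL] satisfiability is decidable (\autoref{section:satisfiability}), and concludes that no translation can exist. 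Your route — pin LTL[VPL] trace languages inside $\omega$-VPL via \autoref{theorem:VPLautomata} and \autoref{theorem:AJAemptiness}, then exhibit an LTL[DPDA] formula whose trace language lies outside $\omega$-VPL — is more work, but it buys something real: since the paper's relation $\leq$ is defined purely semantically (mere existence of an equivalent formula), the decidability argument strictly speaking only excludes \emph{computable} translations, whereas a witness-based separation excludes all of them. The paper's route, in turn, buys brevity: it needs no new non-definability argument at all.

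However, your witness handling for the third link has two concrete problems. First, $\{a^n b^n c^n \mid n \geq 0\}$ is not context-free, so no DPDA recognises it and it cannot be a parameter of an LTL[DPDA] modality; it must be dropped. Second, your claim that ``no re-partitioning of the alphabet can make $\{a^n b^n \mid n \geq 0\}$ visibly pushdown'' is false: declaring $a$ a call and $b$ a return makes it a VPL — this is exactly the paper's $req^n grant^n$ example, which you yourself use one link earlier. What actually rescues the argument is that no re-partitioning is permitted: the pushdown alphabet partition is fixed with the system and formulae, so you may work over a partition in which $a$ and $b$ are internal. Then $\{a^n b^n\}$ is a DCFL but not a VPL over this alphabet, and since $\omega$-VPLs over purely internal letters are $\omega$-regular, a pumping argument (e.g. on the intersection of the trace language of $\mathcal{F}^{\mathcal{A}}\;true$ with $a^* b^* c^\omega$) shows the witness trace language is not $\omega$-VPL. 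Alternatively, take a witness such as $\{a^n b a^n \mid n \geq 0\}$, which is a DCFL but fails to be a VPL under \emph{every} partition, since the single letter $a$ cannot act as both call and return; with that choice your ``fixed pushdown alphabet argument'' becomes true as stated.
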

\begin{proof}
In every case, the relation $\leq$ is clear. LTL $<$ LTL[REG] follows from Wolper's classical argument \cite{Wolper1983} that LTL cannot express that a proposition occurs on every other index. For every formula $\varphi$ of LTL[REG], there is an ABA recognising $\mathcal{L}(\varphi)$. Thus, $\mathcal{L}(\varphi)$ is $\omega$-regular.
On the other hand, the language of $\mathcal{F}^{a^n b^n} (\mathcal{G}^{\overline{\{c\}}*}\;false)$
is not $\omega$-regular.
%
The last strict inclusion follows from the fact that LTL[DPDA] satisfiability checking is undecidable, but the same problem is decidable for the other fragments (see \autoref{section:satisfiability}). \hfill\qed
\end{proof}
By considering the class of linear KTS (which only have one path), we get:
\begin{corollary}
CTL$^c$[VPL] $<$  CTL$^c$[DPDA] for $c \in \{*,+\}$.
\end{corollary}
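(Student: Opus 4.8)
The plan is to prove the two directions of the strict inequality separately, reducing the strictness entirely to the already-established separation LTL[VPL] $<$ LTL[DPDA] via the observation that path quantification becomes trivial on linear KTS.

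First I would dispatch the inclusion CTL$^c$[VPL] $\leq$ CTL$^c$[DPDA]. Every visibly pushdown language is a deterministic context-free language and hence is recognised by some DPDA; since the class of DPDA is reasonable (closed under equivalences), each VPA parameter $\mathcal{A}$ occurring in a CTL$^c$[VPL] formula may be replaced by a DPDA recognising $\mathcal{L}(\mathcal{A})$. Replacing parameters pointwise preserves the semantics on all KTS and yields a (linear) translation into CTL$^c$[DPDA]. It then remains to show CTL$^c$[DPDA] $\not\leq$ CTL$^c$[VPL].

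The core tool is a collapse lemma: over the class of linear KTS every CTL$^c$[U] formula is equivalent to an LTL[U] formula, and conversely the (flat) LTL[U] formulae used below are already CTL$^c$[U] formulae whose truth on a linear KTS coincides with LTL satisfaction on the unique path. The point is that on a linear KTS each state $s$ has exactly one element in $Paths(s)$, so $\pi \models E\varphi$ reduces to evaluating $\varphi$ on that single path; I would prove the lemma by induction, simply erasing each $E$ (and the derived $A$) and verifying that the result stays inside LTL[U] over the same parameters. Because every infinite trace is realisable as the trace of some one-path chain, ``equivalent over all linear KTS'' coincides with ``equivalent as LTL[U] formulae''. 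With this lemma, I would take the separating witness from the preceding theorem: a flat LTL[DPDA] property of the form $\mathcal{F}^{L} p$ (equivalently $true\;\mathcal{U}^{\mathcal{A}}\;p$) for a suitable DCFL $L$ whose induced $\omega$-language is not $\omega$-VPL and hence not LTL[VPL]-definable. This witness is realisable as the CTL$^c$[DPDA] formula $A\,\mathcal{F}^{L} p = \neg E\neg(true\;\mathcal{U}^{\mathcal{A}}\;p)$, which lies in both CTL$^*$[DPDA] and CTL$^+$[DPDA]. If some CTL$^c$[VPL] formula $\psi$ were equivalent to it over all KTS, then over linear KTS the collapse lemma would turn $\psi$ into an LTL[VPL] formula $\psi'$ agreeing with $\mathcal{F}^{L} p$ on every trace, contradicting the LTL-level separation. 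Hence CTL$^c$[DPDA] $\not\leq$ CTL$^c$[VPL], and combined with the inclusion we obtain CTL$^c$[VPL] $<$ CTL$^c$[DPDA].

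The main obstacle I expect lies in the case $c = +$. For CTL$^*$ the witness transfers painlessly because LTL[U] $\leq_{lin}$ CTL$^*$[U], but arbitrary LTL properties are not expressible in CTL$^+$, so I must ensure the separating witness lives in the common flat fragment realisable as $A\,\mathcal{F}^{L} p$, respecting the strict state/path grammar of CTL$^+$[U]. The delicate verification is therefore twofold: checking that quantifier erasure on linear models never leaves LTL[U] when applied to CTL$^+$ formulae, and confirming that a single-modality DCFL witness already separates LTL[VPL] from LTL[DPDA] --- the latter reducing to exhibiting a DCFL $L$ whose prefix-reachability $\omega$-language fails to be $\omega$-VPL, which by \autoref{theorem:VPLautomata} (every LTL[VPL] formula yields an AJA, hence an $\omega$-VPL) is exactly what rules out an LTL[VPL] equivalent.
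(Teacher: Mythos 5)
Your overall skeleton is exactly the paper's: the inclusion is immediate, and strictness is obtained by restricting to linear KTS, where path quantification trivialises, every CTL$^c$[U] formula collapses to an LTL[U] formula, and every trace is realised by some one-path chain, so that the separation LTL[VPL] $<$ LTL[DPDA] lifts to the branching logics. Your collapse lemma is sound, and your attention to the $c=+$ case --- the witness must be flat, e.g.\ $A\,\mathcal{F}^{L}p$, so that it survives both the CTL$^+$ grammar and quantifier erasure --- is precisely the subtlety that the paper's one-line proof glosses over.

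The gap is in where you get the witness. You propose to ``take the separating witness from the preceding theorem'', but the paper's proof of LTL[VPL] $<$ LTL[DPDA] is not by exhibiting a formula: it is a decidability argument (LTL[DPDA] satisfiability is undecidable by \autoref{theorem:LTLDPDAsatisfiability}, while LTL[VPL] satisfiability is decidable), so no concrete witness of the form $\mathcal{F}^{L}p$ is available to quote. The burden therefore falls entirely on your final claim: that there is a DCFL $L$ whose prefix-reachability $\omega$-language is not $\omega$-VPL. This is true (e.g.\ $L=\{a^n b\, a^n\}$, since $a$ would have to serve both as a call and as a return), but it must be proved for \emph{every} partition of the alphabet into calls, internals and returns, and one must transfer non-VPL-ness from finite words to the $\omega$-word setting; your proposal asserts this rather than proving it, so as written the central separating fact is missing. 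The cheaper way to close the argument, staying inside the paper's toolkit, is to keep the decidability style of reasoning but run it with the flat formulae from the undecidability proof: $\mathcal{F}^{\hat{\mathcal{L}}_1}true \land \mathcal{F}^{\hat{\mathcal{L}}_2}true$ sits under a single $E$ inside CTL$^+$[DPDA], so a translation into CTL$^c$[VPL] (whose satisfiability is decidable) would decide DCFL intersection emptiness. Your upper-bound tool --- every LTL[VPL]-definable language is $\omega$-VPL via \autoref{theorem:VPLautomata} and \autoref{theorem:AJAemptiness} --- is correct and would indeed finish the job, but only once a concrete non-$\omega$-VPL witness has actually been established.
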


\begin{theorem}
LTL[U] and CTL[V,W] have incomparable expressivity for arbitrary classes $U$, $V$ and $W$ 
\end{theorem}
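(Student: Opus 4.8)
The plan is to establish the two non-inclusions \textbf{LTL[U] $\not\leq$ CTL[V,W]} and \textbf{CTL[V,W] $\not\leq$ LTL[U]} separately; together they give incomparability. Both directions reduce to facts already available in the excerpt, so the argument is mainly a matter of choosing the right witness formulae and checking they lie in the respective logics for \emph{arbitrary} parameter classes.

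For the first non-inclusion I would take the plain fairness formula $\mathcal{G}\mathcal{F}q$. Since every reasonable class contains automata for $\Sigma^*$, we have $\mathcal{F}q \equiv true\;\mathcal{U}^{\Sigma^*}q$ and $\mathcal{G}\psi \equiv false\;\mathcal{R}^{\Sigma^*}\psi$, so $\mathcal{G}\mathcal{F}q$ is a legal LTL[U] formula for every $U$. By the result of \cite{Axelsson2010} cited above, no CTL[V,W] formula is equivalent to $\mathcal{G}\mathcal{F}q$ over all KTSs, whatever $V$ and $W$ are. Hence $\mathcal{G}\mathcal{F}q$ witnesses LTL[U] $\not\leq$ CTL[V,W], and this direction is immediate from the cited inexpressibility of fairness.

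For the converse I would exploit that LTL[U] satisfaction over a KTS depends only on its trace set: by definition $\mathcal{K} \models \varphi$ iff $Traces(\mathcal{K}) \subseteq \mathcal{L}(\varphi)$, and $\mathcal{L}(\varphi)$ is a set of traces, so $Traces(\mathcal{K}) = Traces(\mathcal{K}')$ forces $\mathcal{K} \models \varphi \Leftrightarrow \mathcal{K}' \models \varphi$ for every LTL[U] formula $\varphi$. I would then build two trace-equivalent KTSs separated by a branching formula. Using a single action $a$, let $\mathcal{K}_1$ consist of $s_0 \xrightarrow{a} s_1$ with $s_1 \xrightarrow{a} s_2$ and $s_1 \xrightarrow{a} s_3$, where $s_2$ is labelled $\{q\}$, $s_3$ is labelled $\emptyset$ (with self-loops on $s_2,s_3$ and empty labels on $s_0,s_1$); and let $\mathcal{K}_2$ duplicate the intermediate node, i.e. $t_0 \xrightarrow{a} t_1$, $t_0 \xrightarrow{a} t_1'$, $t_1 \xrightarrow{a} t_2$, $t_1' \xrightarrow{a} t_3$ with $t_2$ labelled $\{q\}$, $t_3$ labelled $\emptyset$ (self-loops on $t_2,t_3$). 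A short check shows that both systems have exactly the two traces $\emptyset a\,\emptyset a\,(\{q\}a)^\omega$ and $\emptyset a\,\emptyset a\,(\emptyset a)^\omega$. The CTL[V,W] formula $\varphi = E\mathcal{X}(E\mathcal{X}q \land E\mathcal{X}\neg q)$ is well-formed for every $V,W$, since $\Sigma$ lies in every reasonable class and $E\mathcal{X}\psi \equiv E(true\;\mathcal{U}^{\Sigma}\psi)$; it holds at the initial state of $\mathcal{K}_1$ (the unique successor $s_1$ branches to both $q$ and $\neg q$) but fails at the initial state of $\mathcal{K}_2$ (each successor $t_1,t_1'$ of $t_0$ reaches only one of $q,\neg q$). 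By the trace-dependence observation, no LTL[U] formula can agree with $\varphi$ on both $\mathcal{K}_1$ and $\mathcal{K}_2$, so $\varphi$ witnesses CTL[V,W] $\not\leq$ LTL[U].

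I expect the only genuine work to lie in the second direction, namely in pinning down the two KTSs and verifying that they are truly trace-equivalent while $\varphi$ separates them; the remainder follows from reasonableness (which guarantees $\Sigma$ and $\Sigma^*$, making $\mathcal{G}\mathcal{F}q$ and $E\mathcal{X}(\dots)$ legal for arbitrary parameter classes) and from the cited fairness result. A minor point to confirm is the intended reading of $\mathcal{X}$ as the next-step operator under the action semantics, so that $E\mathcal{X}$ behaves exactly as in ordinary CTL on these two models.
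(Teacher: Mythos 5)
Your second direction (CTL[$V$,$W$] $\not\leq$ LTL[$U$]) is correct and is essentially the paper's own argument: LTL[$U$] satisfaction depends only on $Traces(\mathcal{K})$, while CTL[$V$,$W$] can separate trace-equivalent systems; your pair of KTSs together with the formula $E\mathcal{X}(E\mathcal{X}q \land E\mathcal{X}\neg q)$ is a valid witness (the paper merely asserts non-invariance under trace equivalence without exhibiting the models, so here you have filled in a detail the paper leaves implicit).

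The first direction, however, has a genuine gap: $\mathcal{G}\mathcal{F}q$ is the wrong witness. Under the semantics used for the comparison, $\mathcal{K} \models \mathcal{G}\mathcal{F}q$ means \emph{every} trace of $\mathcal{K}$ satisfies $\mathcal{G}\mathcal{F}q$, i.e.\ the LTL formula corresponds to the branching property $A\,\mathcal{G}\mathcal{F}q$ --- and this \emph{is} expressible in plain CTL, hence in CTL[$V$,$W$] for all reasonable $V$, $W$, via the classical equivalence $A\,\mathcal{G}\mathcal{F}q \equiv A\mathcal{G}\,A\mathcal{F}q$ (both implications follow by extending a path backwards to the initial state, respectively restricting to suffixes). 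The result of Axelsson et al.\ that the paper invokes (their Lemma 4.3) is about the \emph{existential} fairness formula $E\,\mathcal{G}\mathcal{F}q$; the paper's informal phrase ``cannot express the fairness property $\mathcal{G}\mathcal{F}q$'' refers to that formula, which is why it immediately yields CTL$^*$[$W$] $\not\leq$ CTL[$U$,$V$] (the CTL$^*$ syntax contains $E\,\mathcal{G}\mathcal{F}q$ directly) but does not transfer to LTL, whose formulae are implicitly universally quantified over paths. So your claim ``no CTL[$V$,$W$] formula is equivalent to $\mathcal{G}\mathcal{F}q$ over all KTSs'' is false, and the first non-inclusion is unproven as written.

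The repair is the paper's route: take the persistence formula $\varphi \equiv \mathcal{F}\mathcal{G}\neg q$ as witness. If LTL[$U$] $\leq$ CTL[$V$,$W$] held, then $\varphi$, i.e.\ the structure property $A\,\mathcal{F}\mathcal{G}\neg q$, would be expressible in CTL[$V$,$W$]; since CTL[$V$,$W$] is closed under negation and $\neg A\,\mathcal{F}\mathcal{G}\neg q \equiv E\,\mathcal{G}\mathcal{F}q$, the existential fairness formula would then also be expressible, contradicting Lemma 4.3 of Axelsson et al. With this substitution (and your second direction kept as is), the proof is complete.
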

\begin{proof}
Since LTL[U] formulae are invariant under trace equivalence, but CTL[V, W] formulae are not (regardless of the choice of $V$ and $W$), there can be no embedding of CTL[V, W] into LTL[U].
On the other hand, if $LTL[U] < CTL[V, W]$ held for some classes $U, V, W$, then the LTL formula $\varphi \equiv \mathcal{F} \mathcal{G}\neg q$ would be expressible in $CTL[V]$. 
Note that this formula is equivalent to $A \varphi$ on structures. Since $CTL[V]$ is closed under negation, the formula $\psi \equiv E \mathcal{G} \mathcal{F} q$ would then also be expressible in $CTL[V, W]$. This contradicts Lemma 4.3 of \cite{Axelsson2010}. \hfill \qed
\end{proof}

	\section{Satisfiability}\label{section:satisfiability}
In this section, we tackle the satisfiability problem for our logics. For this purpose, we call a formula $\varphi$ of a temporal logic \textit{satisfiable} if there is a KTS $\mathcal{K}$ such that $\mathcal{K} \models \varphi$. 
We obtain exhaustive decidability and complexity classifications in all cases.
Unfortunately, unlike for Extended CTL, satisfiability is undecidable for all our logics for language classes going beyond VPL.
\begin{theorem}\label{theorem:LTLDPDAsatisfiability}
LTL[DPDA] satisfiability checking is undecidable.
\end{theorem}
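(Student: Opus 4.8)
The plan is to reduce from a known undecidable problem about deterministic pushdown automata. The most natural source of undecidability here is the fact that language-theoretic questions such as universality or emptiness of intersection become undecidable once we work with context-free languages, and a DPDA parameter in an $\mathcal{U}^{\mathcal{A}}$ modality lets us smuggle such a language into the logic. Concretely, I would try to encode an instance of a problem like \emph{Post's Correspondence Problem} (PCP), or alternatively exploit the undecidability of the emptiness of the intersection of two context-free languages, into a single LTL[DPDA] formula whose satisfiability is equivalent to the original instance having a solution.

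First I would fix the structure of the encoding. Recall that a KTS $\mathcal{K}$ satisfies $\varphi$ iff \emph{every} trace of $\mathcal{K}$ models $\varphi$, so $\varphi$ is satisfiable iff there is some KTS all of whose traces satisfy $\varphi$. Since we may take a linear KTS (a single infinite path, as used elsewhere in the excerpt), satisfiability of $\varphi$ reduces to asking whether there is a single infinite action sequence whose finite prefixes meet the language constraints imposed by the modalities. The plan is then to design $\varphi$ so that a satisfying path must spell out a candidate solution word $w$ over the action alphabet, and then use modalities of the form $\mathcal{F}^{\mathcal{A}}$ (and their negations) to assert membership of relevant prefixes in DPDA-recognisable languages. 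The key trick is that two separate context-free conditions on the same word can be combined via conjunction in LTL: a formula like $\mathcal{F}^{\mathcal{A}_1} \, p \land \mathcal{F}^{\mathcal{A}_2}\, p$ (with $p$ marking a designated end position) forces the shared prefix to lie in $\mathcal{L}(\mathcal{A}_1) \cap \mathcal{L}(\mathcal{A}_2)$, so nonemptiness of the intersection of two deterministic context-free languages — which is undecidable — translates directly into satisfiability of the formula.

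The main obstacle will be \textbf{pinning down the word uniquely and synchronising the endpoints}. Because each $\mathcal{U}^{\mathcal{A}}$ existentially quantifies over the witnessing index $k$, I must ensure that the two modalities are forced to refer to the \emph{same} prefix rather than two unrelated ones; the cleanest way is to mark a unique cut position with a distinguished proposition (or a distinguished action symbol that can occur at most once), and then have both automata $\mathcal{A}_1, \mathcal{A}_2$ accept exactly at that marked cut. I would also need the formula to enforce that the action alphabet is used faithfully — that the path genuinely encodes a word over the intended input alphabet and not some degenerate sequence — which requires an auxiliary $\mathcal{G}$-style invariant restricting which actions and labels may appear before the cut. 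Verifying that a DPDA (as opposed to a general PDA) suffices for each of the two languages in the undecidable intersection-nonemptiness instance is the remaining technical point; I would appeal to the standard fact that emptiness of intersection is already undecidable for two deterministic context-free languages, so both parameter automata can indeed be taken to be DPDAs, completing the reduction and hence establishing that LTL[DPDA] satisfiability is undecidable.
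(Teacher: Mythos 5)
Your proposal is correct and follows essentially the same route as the paper: the paper reduces from emptiness of intersection of two DCFLs $\mathcal{L}_1,\mathcal{L}_2$ via the formula $\mathcal{F}^{\hat{\mathcal{L}_1}}\,true \land \mathcal{F}^{\hat{\mathcal{L}_2}}\,true$, where $\hat{\mathcal{L}_i} = \{w\hat{t} \mid w \in \mathcal{L}_i\}$ appends a fresh end-marker symbol $\hat{t}$ --- precisely your ``distinguished action symbol marking the cut'' device for forcing both modalities to refer to the same prefix. Note that once the marker is built into the (still deterministic context-free) languages themselves, the auxiliary $\mathcal{G}$-invariant and the uniqueness constraint on the marker that you anticipate become unnecessary, since freshness of $\hat{t}$ already guarantees that the two accepted prefixes coincide.
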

\begin{proof}
For two DCFL $\mathcal{L}_1, \mathcal{L}_2$, let $\hat{\mathcal{L}_i} = \{w \hat{t} \mid w \in \mathcal{L}_i\} $ where $\hat{t}$ is a fresh symbol.
The formula $\mathcal{F}^{\hat{\mathcal{L}_1}}true \land \mathcal{F}^{\hat{\mathcal{L}_2}} true$ is satisfiable iff $\mathcal{L}_1 \cap \mathcal{L}_2$ is non-empty. \hfill\qed
\end{proof}

\begin{corollary}
CTL$^c$[DPDA] satisfiability checking is undecidable for $c \in \{*, +\}$.
\end{corollary}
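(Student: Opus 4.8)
The plan is to derive the undecidability of $\mathrm{CTL}^c[\mathrm{DPDA}]$ satisfiability for $c \in \{*, +\}$ directly from \autoref{theorem:LTLDPDAsatisfiability}, which already establishes undecidability for $\mathrm{LTL}[\mathrm{DPDA}]$. The key observation is that the witnessing formula in that theorem, namely $\mathcal{F}^{\hat{\mathcal{L}_1}}\mathit{true} \land \mathcal{F}^{\hat{\mathcal{L}_2}}\mathit{true}$, is purely linear: it contains only $\mathcal{U}$-modalities (via the $\mathcal{F}$ abbreviation), conjunction, and atomic propositions, but no path quantifiers. Hence it embeds into the branching logics essentially for free, and the reduction from the (undecidable) intersection-nonemptiness problem for DCFLs carries over unchanged.

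The main step is to reduce $\mathrm{LTL}[\mathrm{DPDA}]$ satisfiability to $\mathrm{CTL}^c[\mathrm{DPDA}]$ satisfiability. First I would take the formula $\varphi \equiv \mathcal{F}^{\hat{\mathcal{L}_1}}\mathit{true} \land \mathcal{F}^{\hat{\mathcal{L}_2}}\mathit{true}$ from \autoref{theorem:LTLDPDAsatisfiability} and prefix it with an existential quantifier, forming $E\varphi$. The crux is the equivalence between satisfiability notions: an $\mathrm{LTL}[U]$ formula $\varphi$ is satisfiable in the sense of the theorem (there is a path fulfilling it) exactly when the $\mathrm{CTL}^c[U]$ formula $E\varphi$ is satisfiable in the sense of Section~\ref{section:satisfiability} (there is a KTS $\mathcal{K}$ with $\mathcal{K} \models E\varphi$). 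Indeed, $\mathcal{K} \models E\varphi$ means every path of $\mathcal{K}$ satisfies $E\varphi$, and by the semantics of $E$ this holds iff from the initial state there is some path fulfilling $\varphi$; conversely, from any single path witnessing $\varphi$ one can build a KTS whose initial state carries that path. Since $\varphi$ uses only DPDA-modalities and Boolean connectives, $E\varphi$ is a well-formed formula of both $\mathrm{CTL}^*[\mathrm{DPDA}]$ and $\mathrm{CTL}^+[\mathrm{DPDA}]$ (the grammar for $\mathrm{CTL}^+$ permits $E$ applied to a conjunction of $\mathcal{U}$-formulae, which is exactly the shape of $\varphi$).

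I would then conclude that $E\varphi$ is satisfiable iff $\mathcal{L}_1 \cap \mathcal{L}_2 \neq \emptyset$, so a decision procedure for $\mathrm{CTL}^c[\mathrm{DPDA}]$ satisfiability would decide DCFL intersection-nonemptiness, which is undecidable. The only subtlety worth checking carefully is that the grammar restrictions of $\mathrm{CTL}^+[\mathrm{DPDA}]$ are respected: $E$ must be applied to a formula of the outer syntactic category $\varphi$, and a Boolean combination of $\mathcal{U}$-modalities over state formulae is of that category, so $E(\mathcal{F}^{\hat{\mathcal{L}_1}}\mathit{true} \land \mathcal{F}^{\hat{\mathcal{L}_2}}\mathit{true})$ is syntactically legal. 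I expect this syntactic bookkeeping for $\mathrm{CTL}^+$ to be the only genuine obstacle; the semantic equivalence between path-satisfiability of $\varphi$ and KTS-satisfiability of $E\varphi$ is routine, and everything else follows immediately from the already-established \autoref{theorem:LTLDPDAsatisfiability}.
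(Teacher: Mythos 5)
Your proposal is correct and matches the paper's intent: the corollary is stated without proof precisely because the witnessing formula from \autoref{theorem:LTLDPDAsatisfiability} lifts directly into both branching logics, which is exactly the reduction you carry out with $E(\mathcal{F}^{\hat{\mathcal{L}_1}}\mathit{true} \land \mathcal{F}^{\hat{\mathcal{L}_2}}\mathit{true})$. Your attention to the CTL$^+$ grammar (that $E$ applied to a conjunction of $\mathcal{U}$-formulae over state formulae is syntactically legal) and to the equivalence between path-satisfiability of $\varphi$ and KTS-satisfiability of $E\varphi$ fills in the only details the paper leaves implicit.
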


\noindent
For regular languages, the complexity of the satisfiability problem for LTL[REG] does not increase beyond the complexity of the satisfiability problem for LTL.
 
\begin{theorem}
LTL[REG] satisfiability checking is PSPACE-complete.
\end{theorem}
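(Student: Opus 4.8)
The plan is to establish PSPACE-completeness by proving membership and hardness separately.

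\textbf{Hardness.}
PSPACE-hardness is immediate because LTL[REG] subsumes plain LTL. Specifically, every LTL formula is an LTL[REG] formula (taking the language parameter $\Sigma^*$ for each modality, which is the default convention adopted in the definition of LTL[U]). Since LTL satisfiability is already known to be PSPACE-hard, so is LTL[REG] satisfiability.

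\textbf{Membership: the approach.}
For membership in PSPACE, the plan is to reduce satisfiability to a nonemptiness check on an automaton. First I would convert the input formula $\varphi$ into NNF using \autoref{theorem:LTLNNF}, which incurs no blowup. Then, by \autoref{theorem:LTLREGAutomata}, I can construct an ABA $\mathcal{A}$ with $\mathcal{L}(\mathcal{A}) = \mathcal{L}(\varphi)$ whose size is \emph{linear} in $|\varphi|$. Crucially, the formula $\varphi$ is satisfiable (i.e. there is a KTS $\mathcal{K}$ with $\mathcal{K} \models \varphi$, so $\mathcal{L}(\varphi) \neq \emptyset$ over the alphabet $2^{AP} \times \Sigma$) if and only if $\mathcal{L}(\mathcal{A}) \neq \emptyset$: any accepted trace can be unravelled into a linear KTS realising $\varphi$, and conversely any KTS model yields an accepted trace. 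By the Miyano--Hayashi theorem quoted in the excerpt, the emptiness problem for ABA is PSPACE-complete, so checking $\mathcal{L}(\mathcal{A}) \neq \emptyset$ lies in PSPACE. Combining the linear-size ABA construction with the PSPACE emptiness test gives an overall PSPACE procedure.

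\textbf{Main obstacle.}
The one subtlety I expect to address carefully is the relationship between the satisfiability notion used here (existence of a KTS $\mathcal{K}$ with $Traces(\mathcal{K}) \subseteq \mathcal{L}(\varphi)$) and nonemptiness of $\mathcal{L}(\varphi)$ as a set of traces. Since the definition requires \emph{all} traces of $\mathcal{K}$ to satisfy $\varphi$, one cannot merely take an arbitrary model; instead I would argue that a single trace $w \in \mathcal{L}(\varphi)$ can always be realised as the unique trace of a linear (single-path) KTS, so that satisfiability coincides exactly with $\mathcal{L}(\mathcal{A}) \neq \emptyset$. This equivalence, together with the linear-size automaton and the PSPACE emptiness bound, closes the upper bound, and matching it against the LTL lower bound yields completeness.
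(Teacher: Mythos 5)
Your proposal is correct and follows essentially the same route as the paper: hardness inherited from plain LTL, and membership via the linear-size ABA construction of \autoref{theorem:LTLREGAutomata} followed by the PSPACE emptiness test for ABA. The only difference is that you spell out two details the paper leaves implicit (the NNF conversion and the equivalence between satisfiability over KTSs and nonemptiness of $\mathcal{L}(\varphi)$ via a single-path model), both of which are handled correctly.
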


\begin{proof}
Membership in PSPACE can be shown by applying the emptiness test to the ABA for a formula $\varphi$. Hardness follows from the corresponding hardness of LTL satisfiability checking. \qed
\end{proof}

\noindent
For visibly pushdown languages, the problem remains decidable but the complexity increases because we can express hard problems about VPAs.

\begin{theorem}
LTL[VPL] satisfiability checking is EXPTIME-complete.
\end{theorem}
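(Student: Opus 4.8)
The plan is to prove EXPTIME-completeness by establishing matching upper and lower bounds, mirroring the structure of the LTL[REG] proof but replacing the ABA machinery with the AJA machinery developed in \autoref{theorem:VPLautomata} and the dealternation result of \autoref{theorem:AJAemptiness}.

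For the upper bound, I would proceed as follows. First, by \autoref{theorem:VPLautomata}, every LTL[VPL] formula $\varphi$ can be translated into an AJA $\mathcal{A}$ with $\mathcal{L}(\mathcal{A}) = \mathcal{L}(\varphi)$ and $|\mathcal{A}| \in \mathcal{O}(|\varphi|^2)$, hence polynomial in $|\varphi|$. The formula $\varphi$ is satisfiable over KTSs exactly when $\mathcal{L}(\varphi) \neq \emptyset$, so it suffices to decide emptiness of $\mathcal{A}$ (one should note here that since every KTS path is infinite and we may freely build a KTS whose single path realises any chosen trace, satisfiability coincides with nonemptiness of the trace language, modulo encoding the $2^{AP} \times \Sigma$ alphabet into a KTS). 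By \autoref{theorem:AJAemptiness}, one constructs in exponential time a BVPS $\mathcal{A}'$ of size exponential in $|\mathcal{A}|$, hence exponential in $|\varphi|$, with $\mathcal{L}(\mathcal{A}') = \mathcal{L}(\mathcal{A})$. Finally, by the theorem of Schwoon cited at the start of the excerpt, emptiness of the reachability language of a BPDS (and a BVPS is a special BPDS) can be checked in time polynomial in the size of $\mathcal{A}'$. Composing these three steps gives an overall exponential-time algorithm, placing the problem in EXPTIME.

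For the lower bound, the standard route is to reduce a known EXPTIME-hard problem about visibly pushdown automata to LTL[VPL] satisfiability. The natural candidate is the nonemptiness-of-intersection or the universality/inclusion problem for VPAs, which are EXPTIME-complete. Concretely, I would take a fixed collection of VPLs whose intersection-nonemptiness or related decision problem is EXPTIME-hard and encode it using formulae of the form $\mathcal{F}^{\mathcal{A}}true$ and conjunctions thereof, analogously to the construction in \autoref{theorem:LTLDPDAsatisfiability} but with VPAs in place of DPDAs; since the intersection of two VPLs is again a VPL this single-automaton encoding already works, so the hardness must instead come from a problem that is genuinely EXPTIME-hard already for VPAs, such as deciding whether a word is accepted by an exponential-size product or the satisfiability of a related visibly pushdown branching specification. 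Alternatively, one may reduce directly from LTL[VPL] model checking lower bounds or from the EXPTIME-hardness of PDL-$\Delta$[VPL] satisfiability via the embedding of \autoref{theorem:CTL*embedding}.

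The main obstacle I anticipate is the lower bound rather than the upper bound: identifying a clean EXPTIME-hard source problem and engineering an LTL[VPL] formula of polynomial size that is satisfiable if and only if the source instance is positive. The subtlety is that many natural VPL operations (intersection, union) stay within VPL and do not by themselves inject hardness, so the reduction must exploit the interplay between the temporal structure of the formula and the stack discipline of the VPA parameters—for instance by using the $\mathcal{U}^{\mathcal{A}}$ modality to force an exponentially deep matched call-return structure whose consistency encodes an alternating polynomial-space computation. Getting this encoding to have polynomial size while faithfully capturing the EXPTIME computation is the delicate step; the upper bound, by contrast, is a routine composition of the three cited constructions.
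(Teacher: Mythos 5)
Your upper bound is correct and is exactly the paper's argument: translate $\varphi$ into an AJA of polynomial size (\autoref{theorem:VPLautomata}), dealternate it into a BVPS of exponential size (\autoref{theorem:AJAemptiness}), and apply the polynomial-time BPDS emptiness test, having observed that satisfiability coincides with nonemptiness of $\mathcal{L}(\varphi)$. No complaints there.

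The genuine gap is the lower bound, which you acknowledge you have not actually established: you enumerate candidate source problems (VPA intersection nonemptiness, universality, ``an exponential-size product'') without committing to any concrete polynomial-time reduction, and the one route you name precisely --- EXPTIME-hardness of PDL-$\Delta$[VPL] satisfiability via \autoref{theorem:CTL*embedding} --- points in the wrong direction: that embedding translates CTL$^c$[VPL] \emph{into} PDL-$\Delta$[VPL], so hardness of the target logic transfers nothing back to the source, let alone to LTL[VPL], which is not even an endpoint of that embedding. The paper's hardness proof needs no new automata-theoretic hardness result at all; it reduces from LTL[VPL] \emph{model checking} against a finite KTS $\mathcal{K} = (S, \rightarrow, \lambda)$, which the paper proves EXPTIME-complete later (via Schwoon's bound for LTL on pushdown systems), by encoding the model inside the formula. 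Concretely, one introduces a fresh proposition $a_s$ for every state $s$, writes $\psi_s \equiv \bigvee_{(s,t,s') \in \delta} (\mathcal{X}^{\{t\}} (a_{s'} \land \bigwedge_{a \in \lambda(s')} a \land \bigwedge_{a' \notin \lambda(s')} \neg a'))$, and checks satisfiability of $a_{s_0} \land \bigwedge_{a \in \lambda(s_0)} a \land \bigwedge_{a' \notin \lambda(s_0)} \neg a' \land \mathcal{G}(\bigwedge_{s \in S} (a_s \Rightarrow \psi_s)) \land \neg \varphi$, which holds iff some path of $\mathcal{K}$ violates $\varphi$; closure of EXPTIME under complement then yields hardness of satisfiability. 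This ``encode the KTS in the formula'' trick is the missing idea. Your alternative via $n$-fold VPA intersection emptiness could likely be made to work as well (the two-automaton trick of \autoref{theorem:LTLDPDAsatisfiability} generalises to $\mathcal{F}^{\hat{\mathcal{L}}_1} true \land \dots \land \mathcal{F}^{\hat{\mathcal{L}}_n} true$), but only if you prove or cite EXPTIME-hardness of that intersection problem, which your proposal leaves open --- so as written, the hardness half of the theorem remains unproven.
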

\begin{proof}

\noindent
Membership in EXPTIME follows from the fact that we can build an AJA characterising $\mathcal{L}(\phi)$ for an LTL[VPL] formula $\phi$ and test it for emptiness in EXPTIME.
For hardness, we reduce from the LTL[VPL] model checking problem for a formula $\phi$ against a finite KTS $\mathcal{K} = (S, \rightarrow, \lambda)$ which is shown to be EXPTIME-complete later in this paper.
For any $s \in S$, we introduce a fresh atomic proposition $a_s$ and write $\psi_s \equiv \bigvee_{(s, t, s') \in \delta} (\mathcal{X}^{\{t\}} (a_{s'} \land \bigwedge_{a \in \lambda(s')}a \land \bigwedge_{a' \notin \lambda(s')}\neg a' ))$. Then $\varphi' = a_{s_0} \land \bigwedge_{a \in \lambda(s_0)}a \land \bigwedge_{a' \notin \lambda(s_0)}\neg a' \land \mathcal{G} (\bigwedge_{s \in S} (a_s \Rightarrow \psi_s)) \land \neg \varphi$ is satisfiable iff there is a path in $\mathcal{K}$ violating $\varphi$.  \qed
\end{proof}
For the satisfiability problem of branching time logics, we follow the approach of \cite{Axelsson2010} and use established results for the satisfiability problem of PDL-$\Delta$[U].
\begin{theorem}
\begin{enumerate}
\item PDL$-\Delta$[REG] satisfiability is EXPTIME-complete.
\item PDL$-\Delta$[VPL] satisfiability is 2EXPTIME-complete.
\end{enumerate}
\end{theorem}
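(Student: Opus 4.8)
The plan is to reduce satisfiability to the emptiness of suitable automata running over tree-shaped models, exploiting that PDL-$\Delta$[REG] has the finite (tree) model property and PDL-$\Delta$[VPL] the visibly pushdown model property stated above, and then to obtain hardness by embedding the appropriate lower-bound problem. For the upper bound in the regular case, I would first compute the Fischer-Ladner closure of the input formula $\varphi$, whose size is linear in $|\varphi|$; its maximal consistent subsets (types) serve as node labels of candidate models. I then construct an alternating parity tree automaton $\mathcal{T}_\varphi$ that accepts exactly the tree models of $\varphi$: local consistency of types and the modalities $\langle \alpha \rangle \psi$ are enforced by running the program automaton for $\alpha$ while descending the tree and spawning a copy that verifies $\psi$ once an accepting state is reached, while each formula $\Delta \mathcal{A}$ is handled by a parity condition following an infinite branch that witnesses an accepting run of $\mathcal{A}$. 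Since in the regular case all program and $\omega$-program automata are finite, $\mathcal{T}_\varphi$ has polynomially many states and boundedly many priorities, and emptiness of such an automaton is decidable in time exponential in its size, giving the EXPTIME upper bound.

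For PDL-$\Delta$[VPL] the same construction must additionally maintain the stacks of the visibly pushdown program automata, with pushes and pops synchronised to the call and return actions labelling tree edges. I would therefore build a visibly pushdown alternating parity tree automaton; its emptiness reduces, after dealternation, to a pushdown parity game, which costs one further exponential and yields the 2EXPTIME upper bound. Alternatively, one may invoke the visibly pushdown model property directly: guess a finite VPS description of a model and verify model-existence through an emptiness or game problem whose cost is exponential in the (already exponential) VPS, again landing in 2EXPTIME.

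For the lower bounds, EXPTIME-hardness of PDL-$\Delta$[REG] is inherited from plain PDL, which is already EXPTIME-hard and is a syntactic fragment obtained by dropping $\Delta$. For 2EXPTIME-hardness of PDL-$\Delta$[VPL], I would reduce from a canonical 2EXPTIME-complete problem over visibly pushdown systems, for instance acceptance by an alternating exponential-space Turing machine whose configurations are encoded as matched call/return words, using the VPL program parameters to enforce the exponential-width consistency checks between successive configurations that a regular parameter provably cannot express.

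The main obstacle I expect is the treatment of the $\Delta$ operator in the visibly pushdown setting. The infinite path witnessing $\Delta \mathcal{A}$ may wind arbitrarily through the branching of the model and through nested, properly matched calls and returns, so $\mathcal{T}_\varphi$ must simultaneously maintain the stack invariant of the $\omega$-VPA and an $\omega$-acceptance condition along one distinguished branch, while the remaining copies discharge the local type constraints. Getting this synchronisation of the stack with tree edges correct, and verifying that the emptiness of the resulting visibly pushdown tree automaton stays within 2EXPTIME, is the crux of the argument.
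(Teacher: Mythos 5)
This theorem is not proved in the paper at all: it is imported as an established result, following the approach of Axelsson et al., with the REG case going back to the classical EXPTIME-completeness of PDL-$\Delta$ and the VPL case to the 2EXPTIME-completeness of PDL with recursive (visibly pushdown) programs of L\"oding, Lutz and Serre. Your proposal instead reconstructs proofs of these known results from scratch, and the outline you give is essentially the standard automata-theoretic argument from that literature: Fischer--Ladner closure, an alternating parity tree automaton with polynomially many states for the REG case (EXPTIME emptiness), hardness inherited from plain PDL; and for VPL, a stack-synchronised tree automaton whose emptiness goes through a pushdown parity game, with 2EXPTIME-hardness via alternating exponential-space Turing machines encoded as matched call/return words --- which is precisely how L\"oding, Lutz and Serre obtain their lower bound. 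So your route is sound in outline, but it re-proves results the paper deliberately treats as black boxes; what the citation buys the paper is exactly avoiding the technical crux you correctly identify, namely the treatment of the stack discipline on trees.

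One caution on that crux: your plan to first build an \emph{alternating} visibly pushdown parity tree automaton and then dealternate it is the most fragile step, since dealternation of alternating pushdown tree automata is not available in general (alternation plus an unrestricted stack quickly yields undecidability), and even in the visibly pushdown case, where input-synchronised stacks make such a result plausible, it is a substantial technical theorem rather than a routine citation --- Bozzelli's exponential dealternation (Theorem 2 of the paper) is only stated for automata on infinite \emph{words}. The literature proof sidesteps this by constructing a \emph{nondeterministic} visibly pushdown tree automaton of exponential size directly from the type structure of the formula and then invoking EXPTIME emptiness for that model, which yields the same 2EXPTIME bound without ever dealternating. If you want your sketch to stand on its own, you should either prove the tree-automaton dealternation lemma in the visibly pushdown setting or restructure the upper bound along the nondeterministic route.
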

Using our embeddings into PDL-$\Delta$[U], we then obtain the following result:
\begin{theorem}
Let $c \in \{*, +\}$.
\begin{enumerate}
\item CTL$^c[REG]$ satisfiability checking is 2EXPTIME-complete.
\item CTL$^c[VPL]$ satisfiability checking is 3EXPTIME-complete.
\end{enumerate}
\end{theorem}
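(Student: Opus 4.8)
The plan is to establish matching upper and lower bounds for each of the four satisfiability problems, exploiting the exponential embeddings of CTL$^c$[U] into PDL-$\Delta$[U] from \autoref{theorem:CTL*embedding} together with the complexity results for PDL-$\Delta$[U] satisfiability just stated. For the upper bounds, I would argue as follows. Given a CTL$^c$[REG] formula $\varphi$, the embedding produces an equivalent PDL-$\Delta$[REG] formula $\varphi'$ of size exponential in $|\varphi|$. Since PDL-$\Delta$[REG] satisfiability is decidable in EXPTIME, i.e.\ in time $2^{O(|\varphi'|)}$, running this decision procedure on $\varphi'$ takes time $2^{O(2^{O(|\varphi|)})}$, which is doubly exponential in $|\varphi|$, giving the claimed 2EXPTIME upper bound. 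The identical argument with PDL-$\Delta$[VPL] (whose satisfiability is 2EXPTIME-complete, hence solvable in time $2^{2^{O(|\varphi'|)}}$) applied to a formula $\varphi'$ of exponential size yields a triply exponential bound, establishing 3EXPTIME membership for CTL$^c$[VPL]. The key observation making this clean is that the blowups compose: one exponential from the embedding and one (resp.\ two) from the PDL-$\Delta$ decision procedure.

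For the lower bounds, I would aim to reduce the satisfiability problem for PDL-$\Delta$[U] \emph{back} into CTL$^c$[U], so that the hardness of the former transfers. The natural route is to show a converse embedding, PDL-$\Delta$[REG] $\leq_{exp}$ CTL$^c$[REG] (and likewise for VPL), with an exponential blowup: then a 2EXPTIME-hard PDL-$\Delta$[REG] instance of size $n$ becomes a CTL$^c$[REG] instance of size $2^{O(n)}$ whose satisfiability is therefore hard for the class solvable in $2^{2^{O(n)}}$ time relative to the exponentially larger input, i.e.\ 2EXPTIME-hardness in the original parameter. Concretely, I would encode the PDL diamond modality $\langle \alpha \rangle \varphi$ using existentially quantified reachability along paths whose action sequence lies in $\mathcal{L}(\alpha)$, expressed via an $\mathcal{F}^{\mathcal{A}}$ modality, and the $\Delta \mathcal{A}$ operator (asserting an infinite $\mathcal{A}$-labelled path exists) via an appropriate $E\,\mathcal{G}$-style formula parameterised by an automaton derived from $\mathcal{A}$; tests $\varphi?$ become state subformulae guarding the action automaton. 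Alternatively, if a clean converse embedding is awkward, I would seek a direct reduction from a canonical 2EXPTIME-complete (resp.\ 3EXPTIME-complete) problem, such as an alternating exponential-space Turing machine acceptance problem, into CTL$^c$[REG] (resp.\ CTL$^c$[VPL]) satisfiability.

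The main obstacle I anticipate is the lower bound rather than the upper bound. Composing exponentials for membership is routine, but establishing \emph{tight} hardness requires care: a naive converse embedding of PDL-$\Delta$ into CTL$^c$ might incur more than a single exponential blowup, which would fail to yield the exact complexity class, or it might not be expressible at all because CTL$^c$[U] restricts quantifiers to the outermost level of a path formula in a way that PDL-$\Delta$ programs do not. In particular, the regular program constructs of PDL-$\Delta$ (sequential composition, Kleene star over tests and atomic programs) must be faithfully simulated by the single action automaton attached to an $\mathcal{U}^{\mathcal{A}}$ or $\mathcal{F}^{\mathcal{A}}$ modality, and the interaction of tests (which reference arbitrary subformulae) with the branching structure needs to be handled so that the correspondence between PDL states and KTS states is preserved under the CTL$^c$ semantics. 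I would expect to verify that this simulation stays within a single exponential so that the hardness parameter transfers exactly, and I would treat the VPL case by the same template, substituting visibly pushdown automata for the action parameters and relying on the visibly pushdown model property to keep the constructed models well-behaved.
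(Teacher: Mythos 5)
Your upper bounds are exactly the paper's argument: compose the exponential embedding of CTL$^c$[U] into PDL-$\Delta$[U] (\autoref{theorem:CTL*embedding}) with the EXPTIME (resp.\ 2EXPTIME) decision procedure for PDL-$\Delta$[REG] (resp.\ PDL-$\Delta$[VPL]); the exponentials compose to give 2EXPTIME (resp.\ 3EXPTIME) membership. That part is correct and matches the paper.

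The lower bounds, however, contain a genuine gap. Your primary strategy --- a converse embedding PDL-$\Delta$[U] $\leq_{exp}$ CTL$^c$[U] --- cannot establish hardness even if such an embedding existed: completeness is defined with respect to polynomial-time reductions, and an exponential-blowup reduction from a 2EXPTIME-hard problem proves nothing of the required strength. Concretely, if CTL$^c$[REG] satisfiability were decidable in EXPTIME, your reduction would only place PDL-$\Delta$[REG] satisfiability in time exponential in $2^{O(n)}$, i.e.\ in 2EXPTIME, which is perfectly consistent with its 2EXPTIME-completeness; so your argument cannot even rule out an EXPTIME algorithm for CTL$^c$[REG], let alone prove 2EXPTIME-hardness. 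The phrase ``2EXPTIME-hardness in the original parameter'' is not a valid hardness notion. Moreover, the converse embedding you hope for does not exist at \emph{any} blowup: \autoref{theorem:CTL*embedding} asserts a strict inclusion CTL$^c$[U] $<_{exp}$ PDL-$\Delta$[U], i.e.\ PDL-$\Delta$[U] $\not\leq$ CTL$^c$[U]. The paper instead obtains the lower bounds for free from weaker logics that embed \emph{linearly}: 2EXPTIME-hardness of satisfiability for ordinary CTL$^+$ \cite{Johannsen2003}, which is a sublogic of CTL$^+$[REG] and, via CTL$^+$[U] $\leq_{lin}$ CTL$^*$[U], of CTL$^*$[REG]; and 3EXPTIME-hardness for CTL[DVPA, DVPA$\cup$NFA] \cite{Axelsson2010}, which embeds linearly into CTL$^c$[VPL] by the inclusion CTL[U,V] $\leq_{lin}$ CTL$^+$[W]. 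Your fallback of a direct alternating-Turing-machine encoding is sound in principle but is left entirely unexecuted, and carrying it out would amount to reproving these known results from scratch.
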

\begin{proof}
By \autoref{theorem:CTL*embedding}, we have an exponential translation of CTL$^c$[U] into PDL-$\Delta$[U] for $U \in \{\text{REG, VPL}\}$. Furthermore, PDL-$\Delta$[REG] satisfiability checking is EXPTIME-complete, implying a 2EXPTIME upper bound for CTL$^c$[REG]. The lower bound already holds for ordinary CTL$^+$\cite{Johannsen2003}. Since the satisfiability test for PDL$-\Delta[VPL]$ is possible in 2EXPTIME, we obtain a 3EXPTIME upper bound and the lower bound follows from the corresponding lower bound for CTL[DVPA,DVPA$\cup$ NFA] \cite{Axelsson2010}. \qed
\end{proof}
The last theorem shows that parameterisation by regular languages does
not increase the complexity of the satisfiability problem for either
CTL$^*$ or CTL$^+$. While the use of VPL increases the complexity exponentially, this already holds for Extended CTL. From this point of view, the additional expressive power of our logics is obtained for free in this case as well.
Finally, the lower bound for CTL$^c$ satisfiability and the complexity of PDL-$\Delta$[U] satisfability imply that there is no polynomial translation of  CTL$^c[U]$ into that logic. Since there is a linear translation of CTL[U,V] into PDL-$\Delta$[$U \cup V]$, a possible translation of Extended CTL$^+$ into Extended CTL (if any) must involve an exponential blowup:
\begin{corollary}
\begin{enumerate}
\item CTL$^c$[U] $\not\leq_{poly}$ PDL$-\Delta$[U] for $U \in \{\text{REG, VPL}\}$.
\item CTL$^+$[U] $\not\leq_{poly}$ CTL[U, U] for $U \in \{\text{REG, VPL}\}$.
\end{enumerate}
\end{corollary}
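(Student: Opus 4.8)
The plan is to obtain both non-embeddability results as purely complexity-theoretic separations, using the satisfiability complexities established earlier in this section together with the (unconditional) Time Hierarchy Theorem, which yields the strict inclusions EXPTIME $\subsetneq$ 2EXPTIME $\subsetneq$ 3EXPTIME with no unproven assumptions. The whole argument needs only the \emph{hardness} direction of the source problems and the \emph{membership} direction of the target problems.

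For the first part I would argue by contradiction, uniformly for either choice $c \in \{*, +\}$. Suppose CTL$^c$[REG] $\leq_{poly}$ PDL-$\Delta$[REG], so each $\varphi$ is sent to an equivalent $\varphi'$ with $|\varphi'| \in \mathcal{O}(|\varphi|^k)$ for a fixed $k$. Since equivalence over all KTSs preserves satisfiability, $\varphi$ is satisfiable iff $\varphi'$ is, and I can decide CTL$^c$[REG] satisfiability by first forming $\varphi'$ and then invoking the PDL-$\Delta$[REG] satisfiability procedure. As the latter runs in EXPTIME and $\varphi'$ has polynomial size, the combined procedure runs in time $2^{\mathrm{poly}(|\varphi|)}$, placing CTL$^c$[REG] satisfiability in EXPTIME. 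But that problem is 2EXPTIME-hard, contradicting EXPTIME $\subsetneq$ 2EXPTIME. The VPL case is identical: a polynomial translation into PDL-$\Delta$[VPL], whose satisfiability lies in 2EXPTIME, would place the 3EXPTIME-hard problem CTL$^c$[VPL] satisfiability into 2EXPTIME, contradicting 2EXPTIME $\subsetneq$ 3EXPTIME.

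For the second part I would reduce to the first by composition. Instantiating the linear translation of CTL[U, V] into PDL-$\Delta$[U $\cup$ V] with $V := U$ gives CTL[U, U] $\leq_{lin}$ PDL-$\Delta$[U]. Now assume toward a contradiction that CTL$^+$[U] $\leq_{poly}$ CTL[U, U]. Composing the hypothetical polynomial translation with this linear one produces an equivalence-preserving map whose output size is bounded by a constant times a polynomial, i.e. again polynomial; hence CTL$^+$[U] $\leq_{poly}$ PDL-$\Delta$[U]. For the instance $c = +$ this is precisely what the first part rules out, so the contradiction follows for both $U \in \{\text{REG, VPL}\}$.

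I do not expect a genuine obstacle here: the compositions and the arithmetic (a polynomial of a polynomial is polynomial, and exponentiating a polynomial stays within the same exponential level) are routine. The single point that actually carries the argument, and which I would state explicitly, is that the class separations rest on the Time Hierarchy Theorem and are therefore \emph{unconditional} — no assumption such as P $\neq$ NP is invoked. As a minor subtlety worth flagging, the argument uses the translation only as an effective, polynomial-size map, so that computing $\varphi'$ and then deciding its satisfiability together remain within the target class; this mild effectiveness is implicit in the notion of a translation and poses no difficulty.
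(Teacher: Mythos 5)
Your proposal is correct and follows essentially the same route as the paper: the paper also derives part~1 from the satisfiability lower bounds for CTL$^c$[U] combined with the EXPTIME (resp.\ 2EXPTIME) satisfiability procedures for PDL-$\Delta$[REG] (resp.\ PDL-$\Delta$[VPL]), and part~2 by composing a hypothetical polynomial translation with the linear translation of CTL[U,U] into PDL-$\Delta$[U]. Your explicit remarks on the unconditional hierarchy separations and on the (implicitly assumed) effectiveness of the translation are the only additions, and they merely make precise what the paper leaves implicit.
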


	\section{Model Checking}\label{section:modelchecking}


\subsection{Finite Kripke Models}
We begin with the discussion of DPDAs. Just as the satisfiability problem, the model checking problem is not decidable for this class:
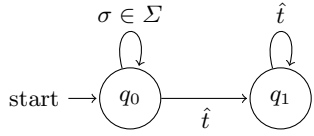
\begin{figure}[t]
\begin{center}
\begin{tikzpicture}[shorten >=1pt,node distance=2cm,on grid,auto] 
   \node[state,initial] (q_0)   {$q_0$}; 
   \node[state] (q_1) [right=of q_0] {$q_1$}; 
    \path[->] 
    (q_0) edge [loop above] node {$\sigma \in \Sigma$} (q_0)
          edge  node [swap] {$\hat{t}$} (q_1)
    (q_1) edge [loop above] node  {$\hat{t}$} (q_1);
\end{tikzpicture}
\vspace{-5mm}
\end{center}
\caption{Structure generating words with arbitrary prefixes}
\label{fig:wordgen}
\end{figure}

\begin{theorem}\label{theorem:CTLDPDA}
LTL[DPDA] model checking against KTS is undecidable.
\end{theorem}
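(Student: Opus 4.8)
The plan is to reduce from the emptiness problem for the intersection of two deterministic context-free languages, which is undecidable, reusing essentially the construction from \autoref{theorem:LTLDPDAsatisfiability} but now routing both membership tests through a single \emph{fixed} finite model whose traces range over all finite $\Sigma$-prefixes, so that the reduction also shows undecidability for a fixed KTS.

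Concretely, I would take the KTS $\mathcal{K}$ depicted in \autoref{fig:wordgen}: every path starts in $q_0$, takes an arbitrary finite number of $\Sigma$-steps along the self-loop, then moves to $q_1$ via the fresh symbol $\hat t \notin \Sigma$ and loops there forever. Thus the action word of every path has the form $w\,\hat t^\omega$ with $w \in \Sigma^*$, and is uniquely determined by its $\Sigma$-prefix $w$; conversely every such $w$ occurs. Given two DCFL $\mathcal{L}_1,\mathcal{L}_2 \subseteq \Sigma^*$, I set $\hat{\mathcal{L}_i} = \{w\hat t \mid w \in \mathcal{L}_i\}$ as before, which is again a DCFL (a DPDA for $\mathcal{L}_i$ is turned into one for $\hat{\mathcal{L}_i}$ by accepting exactly upon reading the endmarker $\hat t$ in an accepting configuration), and I consider the LTL[DPDA] formula $\varphi \equiv \neg(\mathcal{F}^{\hat{\mathcal{L}_1}} true \land \mathcal{F}^{\hat{\mathcal{L}_2}} true)$. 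The claim is $\mathcal{K} \models \varphi$ iff $\mathcal{L}_1 \cap \mathcal{L}_2 = \emptyset$, whence model checking is undecidable.

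The step requiring care, and the crux of why the endmarker is present, is analysing $\mathcal{F}^{\hat{\mathcal{L}_i}} true$ on a path with action word $w\,\hat t^\omega$. Every word of $\hat{\mathcal{L}_i}$ contains exactly one occurrence of $\hat t$, namely as its last letter, so a prefix $a_0\dots a_k$ of $w\,\hat t^\omega$ lies in $\hat{\mathcal{L}_i}$ iff it equals $w\hat t$, and this holds iff $w \in \mathcal{L}_i$ (proper prefixes of $w$ lack $\hat t$, while $w\hat t\hat t\dots$ carries two). Hence a single path satisfies $\mathcal{F}^{\hat{\mathcal{L}_1}} true \land \mathcal{F}^{\hat{\mathcal{L}_2}} true$ iff its prefix $w$ lies in $\mathcal{L}_1 \cap \mathcal{L}_2$: the endmarker pins both accepting prefixes to the unique first-$\hat t$ position, forcing both conjuncts to test the same full word $w$ rather than two unrelated prefixes. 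Therefore some path of $\mathcal{K}$ violates $\varphi$ exactly when $\mathcal{L}_1 \cap \mathcal{L}_2 \neq \emptyset$, and $\mathcal{K} \models \varphi$ iff $\mathcal{L}_1 \cap \mathcal{L}_2 = \emptyset$, giving the desired undecidability.
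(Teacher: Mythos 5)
Your proposal is correct and takes essentially the same route as the paper: the same KTS from \autoref{fig:wordgen}, the same endmarked DCFLs $\mathcal{L}_i\hat{t}$, and the same formula $\neg(\mathcal{F}^{\mathcal{L}_1\hat{t}}\,\varphi \land \mathcal{F}^{\mathcal{L}_2\hat{t}}\,true)$, the paper merely writing the proposition $end$ where you write $true$ in the first conjunct --- an immaterial difference, since (as you argue) the endmarker already pins both witnessing prefixes to the unique position of the first $\hat{t}$. The only slip is your claim that every path's action word has the form $w\,\hat{t}^\omega$: paths that loop in $q_0$ forever never emit $\hat{t}$, but such paths have no prefix in either $\mathcal{L}_i\hat{t}$ and hence satisfy the negated conjunction vacuously, so your biconditional and the resulting undecidability argument are unaffected.
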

\begin{proof}
The formula $\neg (\mathcal{F}^{\mathcal{L}_1\;\hat{t}} end \land \mathcal{F}^{\mathcal{L}_2\;\hat{t}} true)$ is fulfilled by the structure in \autoref{fig:wordgen} iff $\mathcal{L}_1 \cap \mathcal{L}_2$ is empty for two deterministic context-free languages $\mathcal{L}_1, \mathcal{L}_2$.\hfill\qed
\end{proof}
Since the argument in the last proof can be reused for the other logic, we obtain:

\begin{corollary}\label{theorem:CTLDPDA}
CTL$^c$[DPDA] model checking against finite KTS is undecidable for $c \in \{*, +\}$.
\end{corollary}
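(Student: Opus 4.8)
The plan is to reduce LTL[DPDA] model checking against KTS, whose undecidability is established in the preceding theorem (via the structure of \autoref{fig:wordgen} and the formula $\neg(\mathcal{F}^{\mathcal{L}_1 \hat{t}} end \land \mathcal{F}^{\mathcal{L}_2 \hat{t}} true)$), to CTL$^c$[DPDA] model checking for each $c \in \{*, +\}$. The point is that this LTL[DPDA] formula embeds syntactically into both branching-time logics, so that exactly the same structure and the same intersection-emptiness encoding survive, and I only need to verify that the branching-time model-checking semantics agrees with the all-traces semantics of LTL[DPDA].

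For $c = *$, I would simply observe that $\varphi \equiv \neg(\mathcal{F}^{\mathcal{L}_1 \hat{t}} end \land \mathcal{F}^{\mathcal{L}_2 \hat{t}} true)$ is already a well-formed CTL$^*$[DPDA] formula, since an LTL[DPDA] formula is precisely a quantifier-free CTL$^*$[DPDA] formula. As the model-checking relation $\mathcal{K} \models \varphi$ requires in both logics that every path of $\mathcal{K}$ satisfy $\varphi$, the two model-checking instances literally coincide, and CTL$^*$[DPDA] model checking against finite KTS inherits undecidability at once.

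For $c = +$, the formula $\varphi$ is a Boolean combination of until-modalities over atomic (state) formulas, hence a legal path formula $\varphi$ of the CTL$^+$[DPDA] grammar, but not yet a state formula. I would therefore wrap it universally and use $A \varphi \equiv \neg E \neg \varphi$, which is a valid CTL$^+$[DPDA] state formula $\psi$: here $\neg \varphi$ is again a path formula, $E(\neg \varphi)$ is a state formula of the form $E \varphi'$, and its negation is a state formula. Evaluated at the initial state, $A \varphi$ holds iff every path from $s_0$ satisfies $\varphi$, which is exactly the condition $\mathcal{K} \models_{\text{LTL}} \varphi$, i.e. $\mathcal{L}_1 \cap \mathcal{L}_2 = \emptyset$. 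Undecidability of emptiness of intersection for deterministic context-free languages then transfers to CTL$^+$[DPDA] model checking.

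The argument is essentially routine, and there is no genuinely hard step, since no expressive power beyond the LTL fragment is required. The only points that need care are confirming that the wrapped formula fits the restricted CTL$^+$ grammar (modalities confined to the path layer, with quantifiers promoting path formulas to state formulas) and checking that the universal branching-time model-checking semantics matches the all-traces semantics underlying the LTL[DPDA] reduction; both are immediate from the definitions given above.
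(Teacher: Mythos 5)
Your proposal is correct and matches the paper's argument: the paper likewise just reuses the structure of \autoref{fig:wordgen} and the formula $\neg(\mathcal{F}^{\mathcal{L}_1 \hat{t}} end \land \mathcal{F}^{\mathcal{L}_2 \hat{t}} true)$ from the LTL[DPDA] undecidability proof, noting it embeds into both branching-time logics. Your explicit check that the formula fits the CTL$^+$[DPDA] grammar (via the $\neg E \neg$ wrapping) and that the universal path semantics coincides with the all-traces semantics is exactly the detail the paper leaves implicit.
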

As a direct consequence, there can be no embedding of CTL$^+$ equipped with DPDA into corresponding CTL variants, which is surprising because the basic logics CTL$^+$ and CTL have the same expressive power \cite{Emerson1982}:
\begin{corollary}
 CTL$^+$[DPDA] $\not <$  CTL[DPDA,DPDA].
\end{corollary}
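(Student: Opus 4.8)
The plan is to separate the two logics by the \emph{decidability} of their model checking problems against finite KTS, exploiting the contrast just established. Since $L < L'$ requires $L \leq L'$, it suffices to prove the stronger non-embedding statement CTL$^+$[DPDA] $\not\leq$ CTL[DPDA, DPDA], i.e. that there is no (computable) translation sending each CTL$^+$[DPDA] formula to an equivalent CTL[DPDA, DPDA] formula. The two ingredients are: (i) CTL$^+$[DPDA] model checking against finite KTS is undecidable, which is established immediately above; and (ii) CTL[DPDA, DPDA] model checking against finite KTS is decidable, which I would take from \cite{Axelsson2010}.

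For ingredient (ii), I would recall why Extended CTL with DCFL parameters is decidable over finite KTS, in contrast to general CFL. A CTL[DPDA, DPDA] formula is evaluated bottom-up, labelling each state of the finite KTS with the truth values of its subformulae; the only nontrivial step is a singly-quantified modality. For $E(\psi_1\,\mathcal{U}^{\mathcal{A}}\,\psi_2)$ at a state $s$, the set of action words of finite path prefixes from $s$ along which $\psi_1$ holds and which end in a $\psi_2$-state is regular (the KTS is finite), and one only has to test its intersection with the single DCFL $\mathcal{L}(\mathcal{A})$ for emptiness, which is decidable. The dual modality $E(\psi_1\,\mathcal{R}^{\mathcal{A}}\,\psi_2)$ is handled using that DCFL is closed under complement, so that the condition $Actions(\pi,i)\notin\mathcal{L}(\mathcal{A})$ is again a DCFL condition; this is precisely the closure property that fails for general CFL and is the reason DCFL sits at the decidability boundary for Extended CTL.

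With both ingredients in hand, the contradiction is immediate: given a computable embedding $\varphi \mapsto \varphi'$ of CTL$^+$[DPDA] into CTL[DPDA, DPDA] and an input $(\mathcal{K}, \varphi)$ with $\mathcal{K}$ finite, I would compute $\varphi'$ and decide $\mathcal{K} \models \varphi'$, which equals $\mathcal{K} \models \varphi$ by the defining property of $\leq$. This would decide CTL$^+$[DPDA] model checking against finite KTS, contradicting ingredient (i). Hence no such embedding exists, giving CTL$^+$[DPDA] $\not\leq$ CTL[DPDA, DPDA] and therefore CTL$^+$[DPDA] $\not<$ CTL[DPDA, DPDA].

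The main obstacle is the role of \emph{effectiveness}. The expressivity relation $\leq$ is defined purely semantically, so a priori a non-computable embedding would not transfer decidability. I would resolve this by making the statement about computable translations explicit: the classical collapse of CTL$^+$ to CTL \cite{Emerson1982} is witnessed by an effective (exponential) translation, and it is exactly the existence of such an effective translation in the parameterised setting that the corollary refutes. This is what makes the result genuinely surprising, since the only obstruction is the DPDA parameters rather than the branching structure itself.
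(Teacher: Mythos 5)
Your proposal takes essentially the same route as the paper, which proves the corollary in one line by contrasting the undecidability of CTL$^+$[DPDA] model checking against finite KTS (established just above) with the P-completeness, hence decidability, of CTL[DPDA, DPDA] model checking cited from \cite{Axelsson2010}. Your reconstruction of why the DPDA-parameterised CTL check is decidable over finite KTS, and your explicit caveat that the decidability contrast only refutes \emph{effective} translations, are refinements the paper omits (it cites the complexity result and tacitly treats the semantic relation as transferring decidability), so your argument is, if anything, the more careful rendering of the same idea.
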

\begin{proof}
CTL$^+$[DPDA] model checking against a Kripke model is undecidable, but CTL[DPDA, DPDA] model checking is P-complete \cite{Axelsson2010}. \qed
\end{proof}
The last result shows that there is no generic translation from Extended CTL$^+$ into Extended CTL that is uniform for all types of automata.

For regular languages, the complexity of model checking is not increased beyond the complexity of standard LTL for Extended LTL.
\begin{theorem}\label{theorem:LTLREGKripke}
LTL[REG] model checking against finite KTS is PSPACE-complete and in P for a fixed formula.
\end{theorem}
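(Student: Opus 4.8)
The plan is to prove the two claims separately, establishing PSPACE-completeness for the combined problem and then the P upper bound for a fixed formula.

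\medskip

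\noindent\textbf{PSPACE membership (combined complexity).} The approach is to leverage \autoref{theorem:LTLREGAutomata}, which gives, for an LTL[REG] formula $\varphi$ in NNF, an ABA $\mathcal{A}_\varphi$ with $\mathcal{L}(\mathcal{A}_\varphi) = \mathcal{L}(\varphi)$ and size linear in $|\varphi|$. Model checking $\mathcal{K} \models \varphi$ amounts to checking $\mathit{Traces}(\mathcal{K}) \subseteq \mathcal{L}(\varphi)$, equivalently $\mathit{Traces}(\mathcal{K}) \cap \mathcal{L}(\neg\varphi) = \emptyset$. First I would convert $\neg\varphi$ to NNF (\autoref{theorem:LTLNNF}) and build the linear-size ABA $\mathcal{A}_{\neg\varphi}$. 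Then I would form the synchronous product of the finite KTS $\mathcal{K}$ (viewed as a generator of traces over $2^{AP}\times\Sigma$) with $\mathcal{A}_{\neg\varphi}$, yielding an ABA of size polynomial in $|\mathcal{K}|$ and $|\varphi|$ whose language is empty iff no trace of $\mathcal{K}$ violates $\varphi$. Since ABA emptiness is PSPACE-complete (Miyano--Hayashi), and the product automaton has size polynomial in the input, the whole procedure runs in PSPACE.

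\medskip

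\noindent\textbf{PSPACE hardness.} Hardness follows directly from the corresponding hardness of ordinary LTL model checking against finite Kripke structures, which is a special case: every LTL formula is an LTL[REG] formula (taking $\mathcal{U}^{\Sigma^*}$ for the unparameterised until, with trivial automata), so the classical PSPACE lower bound transfers immediately.

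\medskip

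\noindent\textbf{Membership in P for fixed $\varphi$.} Here the key observation is that when $\varphi$ is fixed, the ABA $\mathcal{A}_{\neg\varphi}$ has constant size, and we may dealternate it once (via \autoref{theorem:CTL*embedding}'s Miyano--Hayashi step, or directly) into a BA $\mathcal{B}$ of constant size with $\mathcal{L}(\mathcal{B}) = \mathcal{L}(\neg\varphi)$. The product of $\mathcal{K}$ with the fixed-size $\mathcal{B}$ is a Büchi automaton of size linear (hence polynomial) in $|\mathcal{K}|$, and checking its emptiness reduces to a reachability/cycle-detection problem solvable in polynomial time. I expect the only subtlety to be bookkeeping the alphabet $2^{AP}\times\Sigma$ in the product and confirming that the dealternation blowup is absorbed into the (now constant) formula contribution; the construction otherwise mirrors the standard automata-theoretic LTL model checking pipeline.
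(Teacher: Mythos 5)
Your proposal is correct and follows essentially the same route as the paper: both reduce $\mathcal{K} \models \varphi$ to an emptiness/inclusion test using the linear-size ABA from \autoref{theorem:LTLREGAutomata}, obtain PSPACE-hardness from ordinary LTL model checking, and handle the fixed-formula case by noting that the ABA (and hence the Miyano--Hayashi BA) has constant size, so that emptiness of its synchronisation with $\mathcal{K}$ is decidable in polynomial time. Your write-up merely makes explicit the complementation step (negating $\varphi$, converting to NNF, and checking $\mathit{Traces}(\mathcal{K}) \cap \mathcal{L}(\neg\varphi) = \emptyset$) that the paper leaves implicit in its inclusion test.
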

\begin{proof}
We can check in PSPACE if the language of a KTS is a subset of the language of the automaton constructed in \autoref{theorem:LTLREGAutomata}. PSPACE-hardness follows trivially from the PSPACE-hardness of LTL model checking.
If the formula is fixed, the size of the ABA is constant as well as the size of the BA obtained by the Miyano-Hayashi construction. The emptiness test after synchronisation with the KTS is possible in polynomial time. \hfill\qed
\end{proof}

\noindent
As for satisfiability, switching to VPLs increases the complexity to EXPTIME:

\begin{theorem}\label{theorem:LTLVPLKripke}
LTL[VPL] model checking against finite KTS is EXPTIME-complete and in $P$ for a fixed formula.
\end{theorem}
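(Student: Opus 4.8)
The plan is to mirror the structure of the proof for LTL[REG] model checking (\autoref{theorem:LTLREGKripke}), replacing the ABA machinery with the AJA machinery available for LTL[VPL].

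For the upper bound, I would proceed as follows. By \autoref{theorem:VPLautomata}, for the formula $\varphi$ I can construct an AJA $\mathcal{A}_{\neg\varphi}$ recognising $\mathcal{L}(\neg \varphi)$ with size polynomial (indeed $\mathcal{O}(|\varphi|^2)$) in $|\varphi|$. The model checking problem $\mathcal{K} \models \varphi$ asks whether every trace of $\mathcal{K}$ satisfies $\varphi$, which is equivalent to asking whether no trace of $\mathcal{K}$ lies in $\mathcal{L}(\neg \varphi)$. The idea is therefore to form a product of the finite KTS $\mathcal{K}$ with the automaton $\mathcal{A}_{\neg\varphi}$ and test the result for emptiness. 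Since $\mathcal{A}_{\neg\varphi}$ is an AJA, I would first invoke \autoref{theorem:AJAemptiness} to convert it into an equivalent BVPS whose size is exponential in $|\mathcal{A}_{\neg\varphi}|$, hence exponential in $|\varphi|$. Synchronising this BVPS with the finite KTS $\mathcal{K}$ (whose transitions supply both the propositional labels and the actions read by the automaton) yields another BVPS of size polynomial in $|\mathcal{K}|$ and exponential in $|\varphi|$. Finally, I would apply the polynomial-time emptiness check for $\mathcal{L}_{\Gamma}$ of a BPDS (the first cited theorem of \autoref{section:preliminaries}, specialised to the visibly pushdown case) to this product. The whole procedure runs in time exponential in $|\varphi|$ and polynomial in $|\mathcal{K}|$, giving membership in EXPTIME; for a fixed formula the automaton has constant size, so only the polynomial emptiness test on a product of size polynomial in $|\mathcal{K}|$ remains, placing the problem in P.

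For the lower bound, I would reduce from the satisfiability problem for LTL[VPL], which is EXPTIME-hard by the already-stated theorem on LTL[VPL] satisfiability. The cleanest route is to exhibit a fixed finite KTS that generates all traces over the alphabet (every combination of propositional labels and actions), so that model checking the negation $\neg\psi$ against this universal structure answers whether $\psi$ is unsatisfiable; a small universal gadget over $2^{AP} \times \Sigma$ suffices. Alternatively, since the EXPTIME-hardness of LTL[VPL] satisfiability was itself established by reduction \emph{from} LTL[VPL] model checking against a finite KTS (as referenced in that proof), the model checking hardness may be taken as the primitive result and established directly by encoding an alternating polynomial-space Turing machine or a known EXPTIME-complete problem about VPAs into the product construction.

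The main obstacle I anticipate is bookkeeping rather than conceptual. The synchronisation of a visibly pushdown automaton with the KTS must respect the pushdown alphabet partition of $\Sigma$ into $\Sigma_{call}$, $\Sigma_{int}$ and $\Sigma_{ret}$, so care is needed to ensure the product remains a genuine BVPS (the stack behaviour is driven solely by the action component, and the KTS transitions must not interfere with call/return matching). The other delicate point is confirming that the emptiness routine applies to the visibly pushdown case with the stated polynomial bound, and that the two exponential blowups (translation to an AJA and dealternation to a BVPS) compose to a single exponential rather than a double one; this is guaranteed because the AJA produced in \autoref{theorem:VPLautomata} is only polynomially larger than $\varphi$, so the sole exponential cost comes from \autoref{theorem:AJAemptiness}.
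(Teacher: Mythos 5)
Your upper bound is essentially the paper's: construct the AJA for $\neg\varphi$ via \autoref{theorem:VPLautomata}, dealternate it into a BVPS via \autoref{theorem:AJAemptiness} (the single exponential), synchronise with the finite KTS, and run the polynomial-time emptiness test; the fixed-formula case then follows as for LTL[REG]. That part is correct.

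The genuine gap is in the lower bound. Your primary route --- reducing from LTL[VPL] satisfiability --- is circular within this paper: the EXPTIME-hardness of LTL[VPL] satisfiability is itself proved by reduction \emph{from} LTL[VPL] model checking against a finite KTS, i.e.\ from the very theorem you are trying to prove, so it cannot serve as the primitive result. You notice this yourself, but your fallback (``encode an alternating polynomial-space Turing machine or a known EXPTIME-complete problem about VPAs'') is a declaration of intent rather than a construction, and that construction is precisely where all the work lies. The paper supplies it by reducing from LTL model checking against a VPS $\mathcal{A}$, which is EXPTIME-complete by Schwoon's result. The key idea is to build a \emph{finite} KTS $\mathcal{K}_{\mathcal{A}}$ over the configuration heads $(q,\gamma)$ of $\mathcal{A}$: a regular overapproximation that simulates call and internal steps exactly but must guess the exposed stack symbol after each return, and that makes the pushed and guessed stack symbols visible in its action labels. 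A DVPA $\mathcal{A}'$ then recognises exactly those action sequences in which every guess respects the stack discipline, so that for $\psi_{\mathcal{A}'} \equiv \mathcal{G}^{\overline{\mathcal{A}'}} false$ one has $\mathcal{K}_{\mathcal{A}} \models \psi_{\mathcal{A}'} \Rightarrow \varphi$ iff $\mathcal{A} \models \varphi$. This use of the VPL parameter itself to re-impose, inside the formula, the stack behaviour that no finite structure can track is the idea your proposal is missing; without it (or an equally concrete alternative reduction) the EXPTIME-hardness claim is unsupported.
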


\begin{proof}
We reduce the problem of model checking an LTL formula $\varphi$ against a VPS $\mathcal{A} = (Q, \Gamma, \Delta, \lambda)$ which is EXPTIME-complete \cite{Schwoon2002}\footnote{In \cite{Schwoon2002}, the lower bound was established for LTL model checking against a PDS, but this can easily be adopted by transforming the PDS and the formula in the standard way.}.
For this, we define a KTS $\mathcal{K}_{\mathcal{A}} = (Q \times \Gamma, \rightarrow, \lambda)$ as a particular regular overapproximation of $\mathcal{A}$. The KTS $\mathcal{K}_{\mathcal{A}}$ follows the evolution of the 
configuration heads $(q,\gamma)$ in evolutions of $\mathcal{A}$. It can do so precisely for call and internal steps but guesses the topmost stack symbol in the configuration reached after a return step. In order to recover the actual executions of $\mathcal{A}$ an adequately defined DVPA $\mathcal{A'}$ is used in the formula. In order to allow $\mathcal{A'}$ to do so, the KTS $\mathcal{K}_{\mathcal{A}}$ makes visible in the actions the symbol pushed onto the stack for a call rule in a corresponding call-symbol and the stack symbol guessed as target of a return step in a corresponding return-symbol.
Then, for $\psi_{\mathcal{A}'} \equiv \mathcal{G}^{\overline{\mathcal{A}'}} false$ and $\varphi' \equiv \psi_{\mathcal{A}'} \Rightarrow \varphi$, we have $\mathcal{K}_{\mathcal{A}} \models\varphi'$ iff $\mathcal{A} \models \varphi$ since $\psi_{\mathcal{A}'}$ holds iff a path of $\mathcal{K}_{\mathcal{A}}$ corresponds to a proper path of $\mathcal{A}$.

The upper bound follows from \autoref{theorem:AJAemptiness} since we can build an AJA for $\neg \varphi'$ and test it against $\mathcal{K}_{\mathcal{A}}$ (\autoref{theorem:AJAemptiness}). The complexity for the fixed formula follows analogously to the case of LTL[REG].\qed
\end{proof}
Using our results for LTL[REG], we can derive a model checking algorithm for CTL$^c$[REG]:
\begin{theorem}
For $c \in \{*, +\}$, CTL$^c$[REG] model checking against finite KTS is PSPACE-complete. For a fixed formula, it is in $P$.
\end{theorem}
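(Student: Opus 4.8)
The plan is to reduce CTL$^c$[REG] model checking to repeated applications of the LTL[REG] model checking procedure from \autoref{theorem:LTLREGKripke}, processing the formula bottom-up over its existential subformulae, exactly mirroring the standard CTL$^*$ model checking paradigm. First I would identify an innermost existential subformula $E\psi$ of $\varphi$, where $\psi$ contains no further path quantifiers and is therefore (syntactically) a pure LTL[REG] state formula over the atomic propositions and the temporal modalities. For each state $s$ of the finite KTS $\mathcal{K}$, determining whether $s \models E\psi$ amounts to checking whether there exists a path from $s$ satisfying the LTL[REG] formula $\psi$; equivalently, whether $s \not\models A\neg\psi$, which is a (negated) universal LTL[REG] model checking query. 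I would then introduce a fresh atomic proposition $p_{E\psi}$, relabel every state $s$ with $p_{E\psi}$ precisely when $s \models E\psi$, and replace the subformula $E\psi$ in $\varphi$ by $p_{E\psi}$. Iterating this until no quantifiers remain leaves a purely propositional (boolean) formula to evaluate at the initial state.

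The key complexity argument is that each existential-subformula evaluation is a single LTL[REG] model checking instance over $\mathcal{K}$ (or a slight variant of $\mathcal{K}$ with the path restricted to start at $s$), which by \autoref{theorem:LTLREGKripke} runs in PSPACE, and in P for fixed formula size. Since $\varphi$ has at most $|\varphi|$ existential subformulae and we perform at most $|S|$ per-state queries for each, the number of LTL[REG] queries is polynomial in $|\varphi|$ and $|\mathcal{K}|$. As PSPACE is closed under polynomially many sequential PSPACE subroutine calls (reusing space), the overall procedure stays in PSPACE, and for a fixed formula each subroutine is polynomial-time so the whole procedure runs in P. Hardness in the general case is inherited immediately: ordinary CTL$^*$ and CTL$^+$ model checking are already PSPACE-hard, and our logics subsume them, so PSPACE-completeness follows.

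For CTL$^+$ I would note that the syntactic restriction in its grammar (modalities appear only inside a single quantifier, with boolean combinations of quantified subformulae) is a special case of the CTL$^*$ procedure and causes no difficulty; the same bottom-up relabelling applies verbatim. The main obstacle I anticipate is the bookkeeping needed to turn the per-state existential check $s \models E\psi$ into a correctly phrased LTL[REG] model checking query: \autoref{theorem:LTLREGKripke} tests whether \emph{all} traces satisfy a formula, so I must dualise, testing $\mathcal{K}$ initialised at $s$ against $\neg\psi$ and negating the outcome, while ensuring the freshly introduced propositions $p_{E\psi}$ are treated as genuine atomic propositions in subsequent outer queries (so that the inductive invariant --- that the relabelled KTS has the same satisfaction behaviour for the rewritten formula --- is maintained). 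Once this invariant is set up cleanly, the space and time accounting is routine.
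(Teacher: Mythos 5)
Your upper bound argument is essentially the paper's: the same bottom-up relabelling scheme (pick an innermost $E\psi$, run the LTL[REG] model checker per state, substitute a fresh proposition, iterate), with the same space/time accounting. That part is fine.

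The gap is in your hardness claim. You assert that PSPACE-hardness is ``inherited immediately'' because ordinary CTL$^*$ \emph{and} CTL$^+$ model checking are already PSPACE-hard. This is false for CTL$^+$: model checking plain CTL$^+$ against finite KTS is $\Delta_2^p$-complete (Laroussinie et al.~\cite{Laroussinie2001}, a fact the paper itself points out right after this theorem), and $\Delta_2^p$ is not believed to equal PSPACE. Nor can you fall back on LTL hardness for the $c=+$ case, since CTL$^+$ forbids nesting of temporal modalities inside a quantifier, so PSPACE-hard LTL instances are not CTL$^+$ formulae. Hence for $c=+$ your proposal proves only membership in PSPACE and $\Delta_2^p$-hardness, not PSPACE-completeness.

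The paper closes this gap with a reduction that exploits the language parameters rather than the base logic: by Kozen's classical result, deciding non-emptiness of the intersection of $n$ DFA $\mathcal{A}_1, \dots, \mathcal{A}_n$ is PSPACE-complete, and this reduces to checking the formula $E(\mathcal{F}^{\mathcal{A}_1} true \land \dots \land \mathcal{F}^{\mathcal{A}_n} true)$ against a fixed two-state structure that generates all words. Crucially, this formula is a boolean combination of modalities under a single quantifier, hence a legal CTL$^+$[DFA] formula, so the one reduction yields PSPACE-hardness for both $c=*$ and $c=+$. You would need this (or some comparable argument using the automata parameters) to complete your proof.
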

\begin{proof} For the PSPACE-hardness proof, recall that a classical result of complexity theory states that the problem of deciding for $n$ different DFA whether their intersection is non-empty is PSPACE-complete \cite{Kozen1977}.
This problem can easily be reduced to CTL$^c$[DFA] model checking in polynomial time: for DFA $\mathcal{A}_1 \dots \mathcal{A}_n$, the formula $E(\mathcal{F}^{\mathcal{A}_1} true \land \dots \mathcal{F}^{\mathcal{A}_n}true)$ can be checked against the structure given in \autoref{fig:wordgen} to test the intersection under consideration for emptiness.

The model checking proceeds just like the classical CTL$^*$ model checking algorithm \cite{Demri2016}: we pick an innermost existential subformula $E \psi$, replace it by a fresh atomic proposition $p$, check all states of the input KTS with the LTL[REG] model checking algorithm for $\neg \psi$, mark all states fulfilling the latter with $\neg p$ and all others with $p$. Inductively, we obtain a pure LTL[REG] formula which we can check against the KTS. Since every step can be done in polynomial space and we perform at most $|\varphi|$ steps for an input formula $\varphi$, we obtain the PSPACE upper bound. 
The second part follows from \autoref{theorem:LTLREGKripke} since our CTL$^c$[REG] model checking algorithm applies the LTL[REG] algorithm a bounded number of times for fixed formulae. \hfill\qed
\end{proof}
Note that CTL$^*$ model checking against finite KTS is already PSPACE-complete, in contrast to CTL$^+$ model checking which is $\Delta_2^p$-complete \cite{Laroussinie2001}.

Just as the CTL$^c$[REG] model checking algorithm is derived from the LTL[REG] model checking algorithm, the CTL$^c$[VPL] is derived from the LTL[VPL] algorithm:
\begin{corollary}
For $c \in \{*, +\}$, CTL$^c$[VPL] model checking against finite KTS is EXPTIME-complete. For a fixed formula, it is in $P$.
\end{corollary}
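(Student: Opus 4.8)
The plan is to derive this corollary from the CTL$^c$[VPL] model checking procedure in exactly the way the CTL$^c$[REG] theorem was derived from LTL[REG], merely substituting the EXPTIME LTL[VPL] subroutine of \autoref{theorem:LTLVPLKripke} for the PSPACE LTL[REG] subroutine. First I would establish the upper bound by following the classical CTL$^*$ model checking scheme: pick an innermost existential subformula $E\psi$, replace it by a fresh atomic proposition $p$, run the LTL[VPL] model checking algorithm for $\neg\psi$ from every state of the KTS, and mark each state with $p$ or $\neg p$ accordingly; iterating peels off one quantifier at a time until a pure LTL[VPL] formula remains, which is checked directly. Each of the at most $|\varphi|$ phases invokes the LTL[VPL] algorithm once per state, hence polynomially often, and the marked KTS never grows beyond polynomial size (we add at most $|\varphi|$ propositions and never touch states or transitions). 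Since a polynomial number of EXPTIME computations over polynomially sized inputs is still EXPTIME, the whole procedure runs in EXPTIME.

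For the lower bound I would reuse the reduction template of the CTL$^c$[REG] hardness proof, replacing DFA by VPA. The key fact is that the non-emptiness of the intersection of $n$ VPA over a common pushdown alphabet is EXPTIME-complete: the product VPA is an exponentially succinct pushdown system whose emptiness, although polynomial in the size of the system, becomes EXPTIME under this succinct encoding. Given VPA $\mathcal{A}_1,\dots,\mathcal{A}_n$, I would test the formula $E(\mathcal{F}^{\mathcal{A}_1} true \land \dots \land \mathcal{F}^{\mathcal{A}_n} true)$ against the structure of \autoref{fig:wordgen}: since every path eventually reads the fresh terminator $\hat{t}$, the accepting prefix witnessing each $\mathcal{F}^{\mathcal{A}_i}$ is pinned to the common word read up to $\hat{t}$, so the formula holds iff some single word lies in all $\mathcal{L}(\mathcal{A}_i)$ simultaneously. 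This formula lies in the intersection of the CTL$^+$[VPL] and CTL$^*$[VPL] fragments, so the argument is uniform in $c$. Alternatively, for $c=*$ one may simply invoke the EXPTIME-hardness of LTL[VPL] model checking from \autoref{theorem:LTLVPLKripke} together with the linear embedding LTL[VPL] $\leq_{lin}$ CTL$^*$[VPL].

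The fixed-formula bound follows as in the regular case. For a fixed $\varphi$ the number of quantifier phases is constant, and within each phase the LTL[VPL] algorithm is invoked once per state, hence polynomially many times overall; since each such call runs in polynomial time for the fixed subformula by the fixed-formula part of \autoref{theorem:LTLVPLKripke}, and the intermediate marked KTS stays polynomially bounded, the whole procedure remains in P.

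The main obstacle I anticipate is not the algorithmic scheme, which is a routine adaptation, but pinning down the correct EXPTIME-hardness source. One must verify that intersection non-emptiness of VPA really is EXPTIME-hard rather than merely PSPACE-hard as for DFA; the gap comes precisely from the stack, since pushdown emptiness is P-complete (and not in NL) and therefore jumps one exponential under the succinctness of the product, whereas the reachability underlying DFA emptiness only yields PSPACE. I would state this succinctness argument carefully, and otherwise fall back on the LTL[VPL]-based hardness transfer for $c=*$ together with the direct $\mathcal{F}^{\mathcal{A}_i}$-intersection reduction, which already resides in CTL$^+$[VPL], for $c=+$.
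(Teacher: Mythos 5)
Your upper bound and fixed-formula arguments coincide with the paper's: the paper derives both by rerunning the CTL$^c$[REG] marking algorithm with the LTL[VPL] subroutine of \autoref{theorem:LTLVPLKripke} substituted for the LTL[REG] one, exactly as you describe, so that half is fine.

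The genuine gap is in your lower bound, and it matters precisely for $c=+$. You reduce from intersection non-emptiness of $n$ VPAs and justify its EXPTIME-hardness by a succinctness heuristic: the product VPA is exponentially large, pushdown emptiness is P-complete rather than NL-complete, ``therefore'' the problem jumps one exponential. Succinctness of a product never by itself yields hardness; one must exhibit a reduction showing the product can encode arbitrary computations of the target class. A standard counterexample to the heuristic: intersection non-emptiness of $n$ DFAs over a \emph{unary} alphabet also has an exponentially large product, and DFA emptiness is NL-complete, yet that problem is NP-complete, not PSPACE-complete. The claim you need is in fact true and known (it follows from EXPTIME-completeness of intersection non-emptiness for tree automata, or by directly encoding computation trees of alternating polynomial-space machines as nested words, one VPA per tape cell), but you neither prove it nor cite it, and you yourself flag it as the unresolved obstacle. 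Your fallback via EXPTIME-hardness of LTL[VPL] model checking and the embedding into CTL$^*$[VPL] is sound, but it only covers $c=*$: CTL$^+$[VPL] path formulae are Boolean combinations of unnested modalities, so arbitrary LTL[VPL] formulae do not embed, and for $c=+$ your argument rests entirely on the unestablished intersection claim.

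For comparison, the paper closes this hole by a different reduction that settles both values of $c$ at once: it starts from the EXPTIME-hardness of checking the fixed LTL formula $\mathcal{G}(\neg fin)$ against a VPS with DFA \emph{checkpoints} (regular valuations, \cite{Schwoon2002}), models each checkpoint by a DVPA reading the transition labels of the KTS $\mathcal{K}_{\mathcal{A}}$ built in \autoref{theorem:LTLVPLKripke}, and obtains a universally quantified conjunction of $\mathcal{G}^{\mathcal{B}}false$ formulae together with $\mathcal{G}(\neg fin)$. That formula is already a CTL$^+$[DVPA] formula (all modalities unnested), so one reduction yields hardness for both CTL$^+$[VPL] and CTL$^*$[VPL]; your route, once repaired by an actual proof or citation for VPA intersection, would do the same, since $E(\mathcal{F}^{\mathcal{A}_1} true \land \dots \land \mathcal{F}^{\mathcal{A}_n} true)$ is likewise a CTL$^+$ formula.
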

\begin{proof}
For the lower bound, we proceed as for LTL[DVPA] in \autoref{theorem:LTLVPLKripke}. The model checking problem for LTL against VPS $\mathcal{A}$ is EXPTIME complete for the fixed formula $\varphi \equiv \mathcal{G}(\neg fin)$ if we introduce \textit{checkpoints}, i.e. DFA accepting configurations of $\mathcal{A}_{(q, \gamma)}$ for every head $(q, \gamma)$ \cite{Schwoon2002}. Each transition for $(q, \gamma)$ is then conditional on the DFA ${A}_{(q, \gamma)}$ accepting the current configuration. This restriction can be modelled by a DVPA representing $\mathcal{A}_{(q, \gamma)}$ and checked by a $\mathcal{G}$-formula as before if we encode the transitions in the transition labels of $\mathcal{K}_{\mathcal{A}}$. We obtain a universally quantified conjunction of $\mathcal{G}^{\mathcal{B}}false$ formulae and $\varphi$ which is a CTL$^+$[DVPA] formula and completes the reduction. The upper bound follows as in the previous theorem, replacing LTL[REG] by LTL[VPL]. \hfill\qed 
\end{proof}

\subsection{Visibly Pushdown Systems}
For VPS, we again make use of our approach based on alternating automata, but lift it by employing classical results of pushdown model checking.
\begin{theorem}\label{theorem:LTLREGVPS}
LTL[REG] model checking against VPS is EXPTIME-complete and in P for a fixed formula.
\end{theorem}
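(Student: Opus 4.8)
The plan is to establish membership and hardness separately, reusing the automata-theoretic machinery developed for LTL[REG] together with classical pushdown model checking results. For the upper bound, first I would take the LTL[REG] formula $\varphi$ and, using \autoref{theorem:LTLREGAutomata}, construct an ABA $\mathcal{A}_{\neg \varphi}$ recognising $\mathcal{L}(\neg \varphi)$ whose size is linear in $|\varphi|$. Via the Miyano--Hayashi construction (the Theorem of \cite{Miyano1984}), I would dealternate this into a nondeterministic Büchi automaton $\mathcal{B}$ of size exponential in $|\varphi|$, recognising the bad traces. The key step is then to form the product of the VPS with $\mathcal{B}$: since $\mathcal{B}$ reads traces of the form $(\lambda(s)a)$, the synchronisation tracks both the pushdown configuration and the Büchi state, yielding a Büchi Pushdown System $\mathcal{BP}$ whose size is polynomial in $|\mathcal{A}|$ and exponential in $|\varphi|$. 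By the Theorem of \cite{Schwoon2002}, emptiness of $\mathcal{L}_{\Gamma}(\mathcal{BP})$ is checkable in time polynomial in $|\mathcal{BP}|$, hence in time exponential in $|\varphi|$ and polynomial in $|\mathcal{A}|$. The VPS violates $\varphi$ exactly when this product has an accepting path, so the whole procedure runs in EXPTIME.

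For a fixed formula, the ABA and the dealternated BA $\mathcal{B}$ both have constant size, so the product BPDS has size polynomial in $|\mathcal{A}|$ and the emptiness test of \cite{Schwoon2002} runs in polynomial time, giving membership in P as claimed.

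For the lower bound, I would reduce from LTL model checking against VPS, which is already EXPTIME-complete by \cite{Schwoon2002} (using the footnoted observation that the PDS lower bound transfers to the VPS setting in the standard way). Since every LTL formula is trivially an LTL[REG] formula (taking $\Sigma^*$ as the implicit language parameter on each modality, exactly as permitted by the logic's conventions), this reduction is immediate and incurs no blowup, establishing EXPTIME-hardness.

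I expect the main obstacle to be the synchronisation step in the upper bound: one must verify that the product of the visibly pushdown system with the trace-reading Büchi automaton is genuinely a Büchi Pushdown System in the sense required by \cite{Schwoon2002}, i.e. that the alphabet structure and the coupling of the Büchi acceptance condition with the pushdown acceptance $\mathcal{L}_{\Gamma}$ are set up so that accepting runs of the product correspond precisely to traces of $\mathcal{A}$ that are accepted by $\mathcal{B}$. The rest is routine, as it parallels the finite-KTS argument of \autoref{theorem:LTLREGKripke} with the NFA emptiness test replaced by the BPDS emptiness test.
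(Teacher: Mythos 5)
Your proposal is correct and follows essentially the same route as the paper's proof: build the alternating automaton via \autoref{theorem:LTLREGAutomata}, dealternate with Miyano--Hayashi, synchronise with the VPS, and apply the polynomial-time BPDS emptiness test of \cite{Schwoon2002}, with the lower bound inherited from plain LTL and the fixed-formula case following because only the emptiness test depends on the system size. If anything, your version is slightly more careful than the paper's, since you explicitly work with the automaton for $\neg\varphi$ (the complementation needed for the $Traces(\mathcal{K}) \subseteq \mathcal{L}(\varphi)$ check), which the paper glosses over.
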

\begin{proof}
Let $\varphi$ be a LTL[REG] formula, $\mathcal{P}$ be a VPS and  $\mathcal{A}$ be the ABA for $\varphi$. From $\mathcal{A}$, we obtain an BA $\mathcal{A}'$ by the Miyano-Hayashi construction with exponential blowup. Synchronizing $\mathcal{A}'$ with $\mathcal{P}$, we obtain a Büchi VPS $\mathcal{P}'$ which we can test for emptiness in polynomial time. All in all, we obtain an EXPTIME-algorithm. The lower bound holds already for LTL. 
If the size of the formula is fixed, the complexity follows with the same argument as before since the only non-bounded complexity term is induced by the polynomial time emptiness test. \hfill\qed
\end{proof}
This time, the complexity is not increased by switching to VPL.
\begin{theorem}
LTL[VPL] model checking against VPS is EXPTIME-complete and in P for a fixed formula. 
\end{theorem}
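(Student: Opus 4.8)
The plan is to reduce the upper bound to the already-established machinery for LTL[VPL] model checking against finite KTS (\autoref{theorem:LTLVPLKripke}) combined with the AJA emptiness result (\autoref{theorem:AJAemptiness}) and classical pushdown model checking via Büchi VPS emptiness (Theorem from \cite{Schwoon2002}). Concretely, given an LTL[VPL] formula $\varphi$ and a VPS $\mathcal{P}$, I would first build an AJA $\mathcal{A}_{\neg \varphi}$ for $\neg \varphi$ using \autoref{theorem:VPLautomata}, of size polynomial in $|\varphi|$. The key difficulty is that the AJA is alternating and jumps along the abstract-successor relation, while $\mathcal{P}$ has its own stack; the two stack disciplines must be reconciled. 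I would therefore dealternate $\mathcal{A}_{\neg \varphi}$ into an equivalent Büchi VPS $\mathcal{A}'$ via \autoref{theorem:AJAemptiness}, incurring an exponential blowup in $|\varphi|$ but leaving the dependence on $|\mathcal{P}|$ untouched.

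The next step is to form a product of the Büchi VPS $\mathcal{A}'$ with $\mathcal{P}$. Since both objects share the same visibly pushdown alphabet, and calls, returns, and internal symbols are synchronised by the partition of $\Sigma$, the two stacks evolve in lockstep and can be combined into a single product stack whose alphabet is the product of the two stack alphabets. This yields a Büchi VPS $\mathcal{P}'$ whose accepting paths correspond exactly to traces of $\mathcal{P}$ that satisfy $\neg \varphi$. The size of $\mathcal{P}'$ is polynomial in $|\mathcal{A}'|$ and $|\mathcal{P}|$, hence exponential in $|\varphi|$ but polynomial in $|\mathcal{P}|$. Testing $\mathcal{L}_{\Gamma}(\mathcal{P}')$ for non-emptiness via the Theorem of \cite{Schwoon2002} takes time polynomial in $|\mathcal{P}'|$, so the whole procedure runs in EXPTIME, and $\mathcal{P} \models \varphi$ holds iff this emptiness test succeeds.

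For the lower bound, I observe that EXPTIME-hardness already holds for the strictly weaker problem of LTL[VPL] model checking against finite KTS, which is EXPTIME-complete by \autoref{theorem:LTLVPLKripke}; since a finite KTS is a (trivial, stackless) VPS, this hardness transfers immediately to the VPS setting. Thus the bound is tight. For the fixed-formula claim, I note that when $\varphi$ is fixed, the AJA $\mathcal{A}_{\neg \varphi}$ and its dealternation $\mathcal{A}'$ have constant size, so $\mathcal{P}'$ is polynomial in $|\mathcal{P}|$ and the Büchi VPS emptiness test runs in polynomial time, placing the problem in $P$.

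The main obstacle I anticipate is the product construction reconciling the AJA's abstract-successor jumps with the VPS stack: the AJA in \autoref{theorem:VPLautomata} simulates the VPA stack of the language parameter using flags and verification copies rather than an explicit stack, so after dealternation into $\mathcal{A}'$ the stack bookkeeping must be shown to compose cleanly with $\mathcal{P}$'s stack under the shared visibly pushdown partition. Once \autoref{theorem:AJAemptiness} is invoked to obtain an honest Büchi VPS, however, this reduces to the standard synchronised-product argument for visibly pushdown machines, and the remaining steps are routine.
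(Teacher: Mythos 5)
Your upper bound and your fixed-formula argument coincide with the paper's proof: construct the AJA for the negated formula via \autoref{theorem:VPLautomata}, dealternate it into a Büchi VPS with exponential blowup via \autoref{theorem:AJAemptiness}, synchronise it with $\mathcal{P}$ over the shared pushdown alphabet, and apply the polynomial-time emptiness test of \cite{Schwoon2002}; for a fixed formula the automaton side is of constant size, which yields membership in P.

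Your lower bound, however, has a genuine gap: a finite KTS over a pushdown alphabet is \emph{not} a special case of a VPS under the paper's definitions. In a VPS, every rule on a return symbol must pop the stack and configurations with empty stack have no successors, so along any VPS path every prefix $w$ satisfies $|w|_{\Sigma_{ret}} \leq |w|_{\Sigma_{call}} + 1$. A KTS has no stack and may fire return actions freely (a one-state KTS can produce the trace $r^{\omega}$ for $r \in \Sigma_{ret}$), so its behaviour is in general not realisable by any VPS. This is not a technicality for the instance you want to transfer: the KTS $\mathcal{K}_{\mathcal{A}}$ in the hardness proof of \autoref{theorem:LTLVPLKripke} is a deliberate \emph{regular overapproximation} of a VPS whose paths include stack-violating ones (it is the formula, not the structure, that filters them out), so it cannot be recast as a VPS; in fact the paper obtains that finite-KTS hardness result by reducing \emph{from} LTL model checking on VPS, i.e.\ in the direction opposite to your transfer. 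Encoding the KTS as a VPS with a dummy stack does not repair this for the universal model checking problem, since paths with excess returns get stuck and disappear, so $\mathcal{P} \models \varphi$ no longer implies $\mathcal{K} \models \varphi$. The correct (and simpler) lower bound is the one the paper inherits from \autoref{theorem:LTLREGVPS}: plain LTL model checking against VPS is already EXPTIME-hard (the PDS lower bound of \cite{Schwoon2002} adapts to VPS, as noted in the footnote to \autoref{theorem:LTLVPLKripke}), and plain LTL is a fragment of LTL[VPL], so hardness comes from the system side rather than from the finite-KTS result.
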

\begin{proof}
We proceed as in the proof of \autoref{theorem:LTLREGVPS}, but this time, we dealternate an AJA $\mathcal{A}$ for a LTL[VPL] formula $\varphi$ to obtain a Büchi VPS $\mathcal{A}'$ with an exponential blowup. Since VPS can be synchronised, we can again test the synchronised Büchi VPS for emptiness and obtain an EXPTIME algorithm.
The argument for the fixed formula is the same as above. \hfill\qed
\end{proof}
Our AJA for LTL[VPL] can be used to extend the classical CTL$^*$ model checking algorithm for PDS \cite{Schwoon2002} to CTL$^*$[VPL]:
\begin{theorem}\label{theorem:CTL*LTLVPLVPS}
CTL$^*$[VPL] model checking against VPS is 2EXPTIME-complete and EXPTIME-complete for a fixed formula. 
\end{theorem}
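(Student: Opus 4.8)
The plan is to combine the recursive CTL$^*$ model checking strategy used earlier for finite KTS with the LTL[VPL]-against-VPS machinery from the previous theorem, while being careful about how existential subformulae are evaluated over a genuine pushdown system. As in the finite-state case, I would process the formula $\varphi$ from the inside out: pick an innermost existential subformula $E\psi$ (so $\psi$ is a pure LTL[VPL] formula over atomic propositions), and aim to compute, for each configuration head, whether $E\psi$ holds. The key observation is that $E\psi$ holds at a configuration $c$ iff there is \emph{some} path of the VPS starting at $c$ satisfying $\psi$, i.e.\ iff $c$ is \emph{not} accepted by the emptiness-witness set for $\psi$ evaluated as an existential property. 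Concretely, I would build the AJA $\mathcal{A}_\psi$ for $\psi$ (\autoref{theorem:VPLautomata}, quadratic size), dealternate it to a Büchi VPS via \autoref{theorem:AJAemptiness} with an exponential blowup, synchronise it with $\mathcal{P}$, and use the result of \cite{Schwoon2002} (the NFA $\mathcal{A}$ with $\mathcal{L}(\mathcal{A}) = \mathcal{L}_\Gamma(\mathcal{P})$ of polynomial size) to compute exactly the set of configurations from which an accepting run — hence a $\psi$-satisfying path — exists.

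The main subtlety, and the step I expect to be the real obstacle, is integrating this set back into the formula across the recursion. Unlike a finite KTS, the set of configurations satisfying $E\psi$ is not finite; it is a regular set of configurations represented by the NFA from the Schwoon theorem. I would handle this by replacing $E\psi$ with a fresh atomic proposition $p$ and redefining the labelling of $\mathcal{P}$ so that $p$ holds exactly at those configurations in the regular witness set. Because that set is regular over configurations and a VPS can carry a bounded amount of regular information about its own stack-top and control state, this relabelling can be realised by taking a product of $\mathcal{P}$ with the polynomial-size NFA, yielding a new VPS $\mathcal{P}'$ whose labelling reflects $p$ correctly and whose size is polynomial in $|\mathcal{P}|$ and the witness automaton. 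Iterating over the at most $|\varphi|$ existential subformulae, each step incurs one exponential blowup in the formula component only, and the products with $\mathcal{P}$ stay polynomial in $|\mathcal{P}|$ at each stage.

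For the complexity bookkeeping, I would argue that the dealternation is the only source of exponential cost and it applies to the automaton for each subformula independently rather than compounding: replacing $E\psi$ by a fresh proposition ``resets'' the formula so the next subformula's automaton is again only exponential in its own size, giving an overall bound that is doubly exponential in $|\varphi|$ but remains polynomial (indeed singly exponential through the product) in $|\mathcal{P}|$. This yields the 2EXPTIME upper bound. For the fixed-formula case, the number of existential subformulae and the sizes of all the automata $\mathcal{A}_\psi$ are constants, so every product with $\mathcal{P}$ is polynomial and every emptiness test via \cite{Schwoon2002} is polynomial; the single remaining exponential comes from the LTL[VPL]-against-VPS emptiness test for the topmost layer, giving the EXPTIME bound.

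The lower bounds follow from results already established: 2EXPTIME-hardness is inherited from CTL$^*$[VPL] \emph{satisfiability} being 3EXPTIME-complete together with the finite/visibly-pushdown model property, or more directly from the EXPTIME-hardness of LTL[VPL] model checking against finite KTS (\autoref{theorem:LTLVPLKripke}) lifted by the pushdown structure; and the fixed-formula EXPTIME-hardness already holds for plain LTL model checking against a PDS \cite{Schwoon2002}, which is a special case of the problem at hand. \hfill\qed
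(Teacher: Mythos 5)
Your upper-bound skeleton follows the paper's algorithm (inside-out elimination of innermost existential subformulae via AJA $\to$ Büchi VPS $\to$ an automaton over configurations), but it has a genuine gap at the relabelling step, and this gap is exactly where the second exponential lives. The set of configurations satisfying $E\psi$ is produced as a \emph{nondeterministic} automaton, and you propose to realise the fresh proposition $p$ by a product of $\mathcal{P}$ with this poly-size NFA. That does not work: the proposition $p$ will in general occur under negation higher up in the formula, and a regular valuation of configurations must be maintained consistently as the stack is pushed and popped, so one needs to know \emph{membership and non-membership} of every configuration reached; a nondeterministic automaton cannot supply this (a failed guessed run does not witness non-membership, and NFAs cannot be complemented without determinisation). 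The paper therefore determinises $\mathcal{A}_\psi$ --- a second exponential blowup --- and complements the resulting DFA before synchronising it with $\mathcal{P}$ via a regular valuation. Your complexity bookkeeping reflects this omission: you assert that dealternation is ``the only source of exponential cost'' and that it does not compound, yet conclude a doubly exponential bound, which is inconsistent. The inconsistency is most visible in the fixed-formula case: under your accounting (constant-size automata, polynomial products, polynomial emptiness tests) the algorithm would run in P, contradicting EXPTIME-hardness. In the paper, the fixed-formula EXPTIME bound arises precisely because the NFA over configurations has size polynomial in $|\mathcal{P}|$ even when the formula is fixed, so its determinisation is exponential in $|\mathcal{P}|$.

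Both of your lower-bound arguments are also incorrect. You cannot derive 2EXPTIME-hardness of model checking from 3EXPTIME-completeness of CTL$^*$[VPL] satisfiability: hardness does not transfer from a harder problem to an easier one (there is no reduction from satisfiability to model checking here; indeed such a polynomial reduction would contradict the 2EXPTIME upper bound you are proving). Nor does ``lifting'' the EXPTIME-hardness of \autoref{theorem:LTLVPLKripke} give 2EXPTIME-hardness --- EXPTIME-hardness never implies 2EXPTIME-hardness without a concrete new reduction. The paper instead cites the known 2EXPTIME-hardness of \emph{plain} CTL$^*$ model checking against VPS \cite{Bozzelli2007a}, which is a syntactic fragment of CTL$^*$[VPL]. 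For the fixed-formula lower bound, your claim that plain LTL model checking against a PDS is EXPTIME-hard for a fixed formula is false: that problem is polynomial in the system for any fixed formula (the EXPTIME-hardness in \cite{Schwoon2002} requires either the formula as input or regular checkpoints). The correct reference, used by the paper, is the fixed-formula EXPTIME-hardness of plain CTL model checking against pushdown systems \cite{Bozzelli2007a}.
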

\begin{proof}
Hardness follows from 2EXPTIME-hardness of checking a CTL$^*$ formula against a VPS $\mathcal{P}$ \cite{Bozzelli2007a}. 
For inclusion, notice that we can construct an AJA for $\varphi$ of polynomial size by \autoref{theorem:VPLautomata} for the negation of any innermost $LTL[VPL]$ formula preceded by an existential quantifier. For this automaton, we obtain a non-deterministic Büchi VPS recognising the same language. From this VPS, we can construct an NFA $\mathcal{A}_{\psi}$ that accepts all configurations from which the synchronised product of that automaton and $\mathcal{P}$ has an accepting run. Using determinisation with another exponential blowup, we can assume $\mathcal{A}_{\psi}$ to be deterministic and then complement it in linear time to obtain an automaton that accepts configurations fulfilling the existential formula. We can then replace the existential formula $E\psi$ by a fresh atomic proposition $a_{\psi}$ and synchronise $\mathcal{P}$ with the automaton via a regular valuation such that a configuration is labelled $a_{\psi}$ if it fulfills $E \psi$. Inductively, we obtain a pure LTL[VPL] formula and check that formula against $\mathcal{P}$. This algorithm works in 2EXPTIME and thus establishes the desired upper bound. 
If the size of the formula is fixed, the problem is EXPTIME-hard (already for pure CTL \cite{Bozzelli2007a}) and the upper bound follows as the NFA we determinise has bounded size and the determinisation thus produces a DFA of exponential instead of doubly exponential size. \hfill\qed
\end{proof}
Since the lower bounds hold already for pure CTL$^*$, we directly obtain the following result for regular languages:
\begin{corollary}
CTL$^*$[REG] model checking against VPS is 2EXPTIME-complete and EXPTIME-complete for a fixed formula. 
\end{corollary}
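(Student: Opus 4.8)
The plan is to derive both the upper and lower bounds by sandwiching CTL$^*$[REG] between pure CTL$^*$ and CTL$^*$[VPL], whose complexities for model checking against VPS are already pinned down in the preceding results.

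For the upper bounds, I would first observe that REG $\subseteq$ VPL: every regular language over $\Sigma$ is a visibly pushdown language, obtained by treating all input symbols as internal, so that a VPA degenerates into an NFA. Consequently, every CTL$^*$[REG] formula is, syntactically, a CTL$^*$[VPL] formula denoting the same property, and \autoref{theorem:CTL*LTLVPLVPS} immediately yields a 2EXPTIME model checking algorithm against VPS, as well as an EXPTIME algorithm when the formula is fixed. No new construction is required here; the algorithmic work has already been carried out for the more general language class, so the upper bounds come for free.

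For the matching lower bounds, I would note that pure CTL$^*$ is a fragment of CTL$^*$[REG]: under the convention that an unparameterised modality carries the language $\Sigma^*$, which is regular, every pure CTL$^*$ formula is already a CTL$^*$[REG] formula expressing the same property. Hence the 2EXPTIME-hardness of CTL$^*$ model checking against VPS \cite{Bozzelli2007a} transfers verbatim and gives 2EXPTIME-hardness for CTL$^*$[REG]. The same embedding carries over the hardness of the fixed-formula case, which is EXPTIME-hard already for pure CTL (and hence CTL$^*$) against VPS \cite{Bozzelli2007a}.

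Since this is genuinely a corollary, there is little in the way of a real obstacle; the only point demanding care is to make the two embeddings explicit and to confirm that the cited hardness results are stated for model checking against VPS (rather than against finite KTS), so that both directions land on exactly the complexity classes claimed. Combining the membership inherited from the CTL$^*$[VPL] upper bound with the hardness inherited from pure CTL$^*$ then closes the gap and establishes completeness in both the general and the fixed-formula setting.
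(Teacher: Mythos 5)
Your proposal matches the paper's own derivation: the upper bounds are inherited from \autoref{theorem:CTL*LTLVPLVPS}, and the lower bounds transfer from pure CTL$^*$ (2EXPTIME-hardness against VPS) and pure CTL (EXPTIME-hardness for a fixed formula), both via \cite{Bozzelli2007a}. One detail in your REG $\subseteq$ VPL embedding needs repair, however: when model checking against a VPS, the partition of $\Sigma$ into call, internal and return symbols is fixed by the system, so you cannot redeclare all symbols as internal --- under the system's partition an NFA is not a valid VPA, and under the altered partition the system is no longer a VPS, so the algorithm of \autoref{theorem:CTL*LTLVPLVPS} (whose AJA jumps and Büchi-VPS synchronisation presuppose a common partition) would not literally apply. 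The standard fix is to convert each NFA into an equivalent VPA over the given partition that pushes a dummy symbol on call symbols and pops it on return symbols while the stack never influences the simulated state; with that substitution your argument goes through exactly as the paper intends.
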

Of course, our algorithm can also be applied to CTL$^+$[VPL] formulae. However, we do not obtain completeness in all cases since the exact complexity of CTL$^+$ model checking on VPS (and PDS) - even without language parameters - is an open question.
\begin{corollary}
For $U \in \{\text{DFA, DVPA}\}$, CTL$^+[U]$ model checking against VPS is in 2EXPTIME and  EXPTIME-hard. Moreover, it is EXPTIME-complete for a fixed formula. For $U \in \{\text{NFA, VPA}\}$, it is 2EXPTIME-complete and EXPTIME-complete for a fixed formula.
\end{corollary}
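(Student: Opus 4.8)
The plan is to obtain all four claims by relating CTL$^+$[U] to logics already handled. For the upper bounds I would invoke the linear embedding CTL$^+$[U] $\leq_{lin}$ CTL$^*$[U]: composing it with \autoref{theorem:CTL*LTLVPLVPS} for $U \in \{\text{DVPA}, \text{VPA}\}$ (via VPL) and with the analogous CTL$^*$[REG] result for $U \in \{\text{DFA}, \text{NFA}\}$ (via REG) yields a 2EXPTIME procedure in every case. Since this embedding sends a fixed CTL$^+$ formula to a fixed CTL$^*$ formula, the fixed-formula EXPTIME upper bounds drop out in the same way from the fixed-formula parts of those results.

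For the EXPTIME lower bounds---covering the deterministic classes $U \in \{\text{DFA}, \text{DVPA}\}$ as well as every fixed-formula variant---I would reduce from plain CTL model checking against a VPS, which is EXPTIME-hard even for a fixed formula \cite{Bozzelli2007a}. As the ordinary CTL modalities are exactly the $\Sigma^*$-parametrised ones, plain CTL is a fragment of CTL[DFA], hence of CTL$^+$[DFA] by the inclusion CTL[U,V] $\leq_{lin}$ CTL$^+$[W]; the hardness therefore transfers verbatim to all four classes and to the fixed-formula setting.

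The substantial part is the 2EXPTIME lower bound for the non-deterministic classes $U \in \{\text{NFA}, \text{VPA}\}$, and here I would reduce from CTL$^*$ model checking against a VPS, which is 2EXPTIME-hard by \autoref{theorem:CTL*LTLVPLVPS}. The idea is to translate the hard CTL$^*$ instances into polynomial-size CTL$^+$[NFA] instances by processing existential subformulas $E\Phi$ from the inside out: after replacing inner state subformulas by fresh propositions, $\Phi$ is an LTL formula, and I would first modify the VPS so that each action records the current propositional label, so that a word automaton reading the action sequence effectively sees the trace of state labels. The crucial step is to encode the automaton for $\Phi$ as a \emph{non-deterministic} word automaton placed inside a single $\mathcal{U}^{\mathcal{A}}$/$\mathcal{R}^{\mathcal{A}}$ modality, letting the non-determinism perform the run-guessing that CTL$^+$'s flat temporal structure cannot carry out itself, while the $\omega$-acceptance is captured through an outer $\mathcal{G}\mathcal{F}$ combination and the alternation is supplied by the branching quantifiers. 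It is exactly here that non-determinism is indispensable: a deterministic parameter would force the use of the determinised, hence exponentially larger, automaton, destroying polynomiality---which is why the deterministic classes only yield EXPTIME-hardness and their precise complexity remains open.

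The main obstacle I anticipate is making this encoding faithful. A single-level CTL$^+$[NFA] modality must capture the full $\omega$-regular behaviour of an LTL path formula, and I must ensure that the NFA acceptances enforced at the infinitely many positions witnessing a $\mathcal{G}\mathcal{F}$-requirement actually stitch together into one coherent accepting run of the underlying Büchi automaton, rather than a family of unrelated finite runs. Getting the interplay between the guessed run, the action-encoded labels, and the path quantifier right---while keeping every component polynomial so that the reduction stays polynomial---is the technical heart of the argument; once it is in place, combining it with the upper bounds and the EXPTIME lower bounds above closes all the stated completeness and hardness claims.
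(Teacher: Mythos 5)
Your upper bounds (2EXPTIME via CTL$^+[U] \leq_{lin}$ CTL$^*[U]$ and \autoref{theorem:CTL*LTLVPLVPS}), your fixed-formula EXPTIME bounds, and your EXPTIME-hardness argument via plain CTL against VPS \cite{Bozzelli2007a} all match the paper's proof. The genuine gap is in the one part you yourself flag as the ``technical heart'': the 2EXPTIME lower bound for $U \in \{\text{NFA, VPA}\}$. Your proposed reduction from CTL$^*$ model checking cannot work as described. First, you invoke ``an outer $\mathcal{G}\mathcal{F}$ combination'' to capture $\omega$-acceptance, but CTL$^+[U]$ syntactically forbids nesting of temporal modalities: by the paper's grammar, a path formula is a boolean combination of formulae $\psi_1\,\mathcal{U}^{\mathcal{A}}\,\psi_2$ whose arguments are \emph{state} formulae, so $\mathcal{G}\mathcal{F}$-style nesting is simply not available. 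Second, even granting some automaton for the path formula, each $\mathcal{U}^{\mathcal{A}}$/$\mathcal{R}^{\mathcal{A}}$ conjunct only constrains finite prefixes independently; forcing infinitely many such independent prefix checks to cohere into a single accepting run of a B\"uchi automaton is exactly the kind of property that flat combinations cannot enforce --- this is the same phenomenon behind the inexpressibility of fairness in CTL$[U,V]$ (Lemma 4.3 of \cite{Axelsson2010}) that the paper cites. Third, the LTL-to-automaton translation you would need is itself exponential, so polynomiality of the reduction is doubtful from the start. You acknowledge the stitching problem but leave it unresolved, so the claim is not established.

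The paper closes this case with a much shorter argument that you missed: CTL$[\text{DFA},\text{NFA}]$ model checking is already 2EXPTIME-hard by \cite{Axelsson2010}, and the paper's expressivity theorem gives the linear embedding CTL$[U,V] \leq_{lin}$ CTL$^+[W]$ whenever $U \subseteq W$ and $V \subseteq W$; hence CTL$^+[\text{NFA}]$ (and a fortiori CTL$^+[\text{VPA}]$) inherits 2EXPTIME-hardness directly, with no new encoding needed. This also explains why no separate construction distinguishing deterministic from non-deterministic parameters is required: the asymmetry in the statement comes entirely from which hardness results exist for the CTL fragments being embedded, not from a tailor-made CTL$^*$-to-CTL$^+$ reduction.
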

\begin{proof}
The 2EXPTIME upper bound follows from \autoref{theorem:CTL*LTLVPLVPS} and EXPTIME-hardness holds already for CTL model checking against VPS for a fixed formula \cite{Bozzelli2007a}. For a fixed formula, the model checking algorithm of \autoref{theorem:CTL*LTLVPLVPS} works in exponential time because the dealternised AJA has constant size and the determinisation of the NFA only takes exponential time. The 2EXPTIME-hardness of CTL$^+$[NFA] and thus also CTL$^+$[VPA] model checking follows from the 2EXPTIME-hardness of CTL[DFA, NFA] \cite{Axelsson2010}.
\hfill\qed
\end{proof}

\subsection{Pushdown Systems}
For PDS, we can immediately lift the results for VPS when it comes to regular languages since PDS can also trivially be synchronised with BA and therefore we proceed as in the proofs of the corresponding statements for VPS to obtain:
\begin{theorem}
LTL[REG] model checking against PDS is EXPTIME-complete.
\end{theorem}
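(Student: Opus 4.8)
The plan is to transcribe the argument of \autoref{theorem:LTLREGVPS} almost verbatim, replacing the VPS by a PDS. Model checking $\mathcal{P} \models \varphi$ is equivalent to the statement that no trace of $\mathcal{P}$ lies in $\mathcal{L}(\neg\varphi)$, so I would first put $\neg\varphi$ into NNF using \autoref{theorem:LTLNNF} and then apply \autoref{theorem:LTLREGAutomata} to obtain an ABA $\mathcal{A}$ of size linear in $|\varphi|$ with $\mathcal{L}(\mathcal{A}) = \mathcal{L}(\neg\varphi)$. Dealternating $\mathcal{A}$ by the Miyano--Hayashi construction yields a BA $\mathcal{A}'$ of size exponential in $|\varphi|$ recognising the same language.

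Next I would synchronise $\mathcal{A}'$ with $\mathcal{P}$. The reason the VPS argument carries over unchanged is precisely that $\mathcal{A}'$, being an acceptor for an $\omega$-regular language, has no stack: the synchronisation records the current state of $\mathcal{A}'$ as an extra component of the control location of $\mathcal{P}$ while $\mathcal{P}$ performs its own (stack-manipulating) transitions, and the combined device reads exactly the trace $(\lambda(s_0)a_0)(\lambda(s_1)a_1)\dots$ generated along each path. The outcome is a BPDS $\mathcal{P}'$ of size polynomial in $|\mathcal{P}|$ and $|\mathcal{A}'|$, whose final control locations are those in which the $\mathcal{A}'$-component is accepting. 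By construction, some initial configuration of $\mathcal{P}'$ lies in $\mathcal{L}_{\Gamma}(\mathcal{P}')$ iff $\mathcal{P}$ has a path whose trace is accepted by $\mathcal{A}'$, i.e. a trace violating $\varphi$.

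For the upper bound I would then invoke the result of \cite{Schwoon2002} cited in \autoref{section:preliminaries}: emptiness of $\mathcal{L}_{\Gamma}(\mathcal{P}')$ for a BPDS is decidable in time polynomial in $|\mathcal{P}'|$ (intersecting with the set $I$ of initial configurations is trivial). Since the only exponential factor in the whole pipeline is the size of $\mathcal{A}'$, the procedure runs in EXPTIME. The lower bound is inherited directly: plain LTL model checking against a PDS is already EXPTIME-complete \cite{Schwoon2002}, and LTL is the fragment of LTL[REG] in which every modality carries the trivial parameter $\Sigma^*$, so hardness transfers \emph{a fortiori}.

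The step requiring the most attention is confirming the soundness of the synchronisation, namely that nonemptiness of $\mathcal{L}_{\Gamma}(\mathcal{P}')$ restricted to $I$ corresponds \emph{exactly} to the existence of a counterexample trace of $\mathcal{P}$. Everything else is a mechanical adaptation of the VPS proof, and crucially no new obstacle specific to unrestricted stacks arises: because the acceptor $\mathcal{A}'$ for a regular language is stack-free, the product with $\mathcal{P}$ remains an ordinary BPDS and the polynomial-time emptiness machinery of \cite{Schwoon2002} applies without modification.
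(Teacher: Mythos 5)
Your proposal is correct and follows essentially the same route as the paper, which proves this theorem by transferring the VPS argument of \autoref{theorem:LTLREGVPS} verbatim: ABA via \autoref{theorem:LTLREGAutomata}, Miyano--Hayashi dealternation, synchronisation of the resulting (stack-free) BA with the pushdown system, and the polynomial-time BPDS emptiness test of \cite{Schwoon2002}, with the lower bound inherited from plain LTL. Your explicit handling of the negation/NNF step and of the correctness of the product construction only makes precise what the paper leaves implicit.
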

\begin{theorem}
CTL$^*$[REG] model checking against PDS is 2EXPTIME-complete.
\end{theorem}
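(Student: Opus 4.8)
The plan is to mirror the structure of the two VPS-based results this corollary is the regular-language shadow of, namely \autoref{theorem:CTL*LTLVPLVPS} and its regular corollary. The key observation is that nothing in the CTL$^*$[VPL]-against-VPS argument actually required the system to be \emph{visibly} pushdown once we restrict the language parameters to REG: the only place the visibly pushdown structure of the system was used was to synchronise an AJA (which needs a pushdown alphabet) with the system. For regular language parameters we instead build an ABA for each innermost existential LTL[REG] subformula by \autoref{theorem:LTLREGAutomata}, dealternate it to a BA via the Miyano-Hayashi construction, and synchronise that BA with the PDS. Crucially, an ordinary PDS can be synchronised with a BA just as freely as a VPS can, because a BA has no stack of its own and imposes no call/return discipline on the system's moves. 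This is exactly the point already exploited in the preceding LTL[REG]-against-PDS theorem.

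First I would make the upper bound precise. Take the innermost existential subformula $E\psi$ of the CTL$^*$[REG] formula $\varphi$, build the ABA for $\neg\psi$ from \autoref{theorem:LTLREGAutomata} (linear size), dealternate to a BA $\mathcal{A}'$ with an exponential blowup, and synchronise $\mathcal{A}'$ with the PDS to obtain a Büchi PDS. By the cited result of \cite{Schwoon2002}, one can compute in polynomial time an NFA characterising the set $\mathcal{L}_{\Gamma}$ of configurations from which this Büchi PDS has an accepting run. Complementing gives the configurations satisfying $E\psi$; one then labels $\mathcal{P}$ with a fresh proposition $a_\psi$ via a regular valuation and recurses. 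After at most $|\varphi|$ such steps we are left with a pure LTL[REG] formula to check against $\mathcal{P}$, which is EXPTIME by the LTL[REG]-against-PDS theorem. Each existential elimination costs one exponential (from dealternation), but since the nesting depth is part of the input and each step uses a constant-blowup NFA characterisation on an exponentially larger Büchi PDS, the overall procedure stays in 2EXPTIME — the single exponential tower does not grow with nesting depth because the regular valuation labelling does not feed back into the alphabet. The hardness direction is immediate: since the lower bounds for CTL$^*$ model checking against PDS already hold for the parameter-free logic \cite{Bozzelli2007a}, and pure CTL$^*$ embeds into CTL$^*$[REG], 2EXPTIME-hardness is inherited \textit{a fortiori}.

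The step I expect to require the most care is confirming that the inductive labelling keeps the complexity at a \emph{single} doubly-exponential level rather than stacking exponentials with the quantifier nesting depth. The potential pitfall is that each dealternation is exponential, so naively one might fear an exponential per quantifier alternation. The resolution is the standard one for CTL$^*$ model checking: at each stage the existential subformula is replaced by a fresh atomic proposition whose truth is precomputed and encoded as a regular valuation on the \emph{fixed} PDS, so the formula handed to the next stage is again an LTL[REG] formula over the same system, and only one BA is ever dealternated at a time. I would therefore emphasise that the argument is genuinely analogous to the VPS case, merely substituting ABA and BA for AJA and Büchi VPS and substituting the PDS synchronisation and the PDS configuration-reachability result of \cite{Schwoon2002} for their visibly pushdown counterparts. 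Hence the proof can be stated very briefly, noting that it proceeds exactly as for \autoref{theorem:CTL*LTLVPLVPS} with these substitutions and that hardness transfers from pure CTL$^*$.
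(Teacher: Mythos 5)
Your proposal is correct and follows essentially the same route as the paper: the paper's own argument is precisely that a PDS, like a VPS, can be trivially synchronised with a BA, so one repeats the proof of \autoref{theorem:CTL*LTLVPLVPS} with the ABA of \autoref{theorem:LTLREGAutomata} and Miyano--Hayashi dealternation in place of the AJA and Büchi VPS, using the polynomial NFA characterisation of accepting configurations from \cite{Schwoon2002}, determinisation/complementation, and regular-valuation relabelling, with 2EXPTIME-hardness inherited from pure CTL$^*$ \cite{Bozzelli2007a}. Your added discussion of why the exponentials do not stack across quantifier nesting goes slightly beyond what the paper spells out, but it is consistent with its argument.
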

\begin{corollary}
CTL$^+$[DFA] model checking against PDS is in 2EXPTIME, EXPTIME-hard and EXPTIME-complete for a fixed formula. CTL$^+$[NFA] model checking is 2EXPTIME-complete and EXPTIME-complete for a fixed formula.
\end{corollary}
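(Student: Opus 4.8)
The plan is to lift the corresponding VPS corollary to the pushdown setting, exactly as the preceding PDS theorems were obtained from their VPS counterparts. All upper bounds rest on the preceding theorem that CTL$^*$[REG] model checking against PDS is 2EXPTIME-complete: by the inclusion CTL$^+$[U] $\leq_{lin}$ CTL$^*$[U], together with the fact that both DFA and NFA recognise exactly the regular languages (so that CTL$^*$[DFA] and CTL$^*$[NFA] are instances of CTL$^*$[REG]), every CTL$^+$[DFA] or CTL$^+$[NFA] formula can be checked by that algorithm against the PDS. This immediately yields the 2EXPTIME upper bound in both cases.

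For the lower bounds I would argue as follows. EXPTIME-hardness of CTL$^+$[DFA] is inherited \textit{a fortiori} from the EXPTIME-hardness of pure CTL model checking against PDS \cite{Bozzelli2007a} (which subsumes the VPS hardness used earlier, since every VPS is a PDS), by attaching the trivial $\Sigma^*$ parameter to each modality; the same embedding gives EXPTIME-hardness for a fixed formula. For the matching fixed-formula upper bound, I would reuse the fixed-formula analysis from the VPS case: for a constant-size formula the dealternised automaton has constant size, so the NFA recognising the configurations that satisfy an innermost existential subformula has size polynomial in the PDS, and determinising it produces a DFA of merely exponential (rather than doubly exponential) size, while the remaining Büchi PDS emptiness tests run in polynomial time. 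Hence the whole procedure stays in EXPTIME, giving EXPTIME-completeness for a fixed formula in both the DFA and the NFA cases.

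The only genuinely new point is the 2EXPTIME-hardness for CTL$^+$[NFA], which I would obtain from the inclusion CTL[DFA, NFA] $\leq_{lin}$ CTL$^+$[NFA] (taking the combined class to be NFA, which contains DFA as a special case) together with the 2EXPTIME-hardness of CTL[DFA, NFA] model checking \cite{Axelsson2010}; since that hardness already holds for VPS and VPS are PDS, it carries over to PDS. The main obstacle, and the reason the DFA and NFA statements differ, is that no 2EXPTIME lower bound is currently available for CTL$^+$[DFA]: the exact complexity of CTL$^+$ model checking against pushdown systems is open even without language parameters, so for CTL$^+$[DFA] one can only state the EXPTIME lower bound inherited from CTL, leaving the gap between EXPTIME and 2EXPTIME unresolved. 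It is precisely the nondeterminism of the NFA parameter that lets one encode the 2EXPTIME-hard CTL[DFA, NFA] instances and close the gap in the NFA case.
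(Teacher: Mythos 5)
Your proposal is correct and matches the paper's intended argument: the paper proves this corollary by simply lifting the VPS proofs (2EXPTIME upper bound via the CTL$^*$[REG] algorithm, EXPTIME-hardness from pure CTL against pushdown systems \cite{Bozzelli2007a}, 2EXPTIME-hardness of the NFA case from CTL[DFA, NFA] \cite{Axelsson2010}, and the fixed-formula EXPTIME bound from the constant-size dealternised automaton and single-exponential determinisation). Your spelled-out version of this lifting, including the remark that the DFA/NFA asymmetry stems from the open complexity of plain CTL$^+$ on pushdown systems, is exactly the paper's reasoning.
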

However, going beyond regular languages makes the model checking problem undecidable:
\begin{theorem}
LTL[VPL] model checking against PDS is undecidable,
\end{theorem}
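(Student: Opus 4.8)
The plan is to show undecidability by reducing from a known undecidable problem about context-free languages. The key observation is that a Pushdown System is strictly more powerful than the Visibly Pushdown Systems used for decidability, and this extra power lets us simulate an arbitrary context-free language on the \emph{system side} while the VPL modality inspects it on the \emph{formula side}. The natural source of undecidability is the emptiness of the intersection of two context-free languages (equivalently, the non-emptiness of $\mathcal{L}_1 \cap \mathcal{L}_2$ for two CFLs), which is undecidable; this mirrors the strategy already used in \autoref{theorem:LTLDPDAsatisfiability} and \autoref{theorem:CTLDPDA}, except that here the pushdown power resides in the model rather than in the automaton decorating the modality.

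First I would fix two context-free languages $\mathcal{L}_1, \mathcal{L}_2$ over an alphabet $\Sigma$ and construct a single PDS $\mathcal{P}$ whose paths encode candidate witnesses. The idea is to build $\mathcal{P}$ so that, after reading a fresh separator symbol $\hat{t}$, it nondeterministically generates a word $w \in \Sigma^*$ and then verifies $w \in \mathcal{L}_1$ using its stack, landing in a state marked by an atomic proposition $end$; alternatively (or in a parallel branch selected nondeterministically at the root) it generates $w' \in \Sigma^*$ and checks $w' \in \mathcal{L}_2$. The crucial trick is to force both branches to read the \emph{same} word $w$: one arranges the PDS to emit $w$ in the transition labels (actions) while its stack performs the membership check for one of the languages, and a VPL modality in the formula checks membership in the other language against these same actions. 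Since the VPL modality can only recognise a visibly pushdown language, one of the two membership tests must be offloaded to the stack of the PDS, which is exactly what distinguishes this reduction from the decidable VPS case.

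Concretely, I would aim for a formula of the shape $\mathcal{F}^{\hat{\mathcal{L}_2}} end$ (or its negation), where $\hat{\mathcal{L}_2}$ is a VPL capturing the second check on the visible actions, evaluated against a PDS $\mathcal{P}$ whose stack enforces the first membership condition $w \in \mathcal{L}_1$ along every generating path; the formula then holds on some path iff there is a word $w$ with $w \in \mathcal{L}_1$ (enforced by $\mathcal{P}$) and $w \in \mathcal{L}_2$ (enforced by the modality), i.e. iff $\mathcal{L}_1 \cap \mathcal{L}_2 \neq \emptyset$. Since $\mathcal{K} \models \varphi$ is defined as $Traces(\mathcal{K}) \subseteq \mathcal{L}(\varphi)$, I would phrase the reduction in terms of the negated formula so that a counterexample path corresponds precisely to a witness of non-empty intersection, exactly as in \autoref{theorem:CTLDPDA}.

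The main obstacle is the synchronisation between the stack of $\mathcal{P}$ and the VPL modality: the VPL modality imposes a \emph{visibly} pushdown discipline on the actions it reads, so the word $w$ presented on the actions must be compatible with both a visible call/return partition (for $\mathcal{L}_2$) and with an unrestricted CFL check performed by the PDS stack (for $\mathcal{L}_1$). I would resolve this by choosing $\mathcal{L}_1, \mathcal{L}_2$ and the encoding so that $\mathcal{L}_2$ is genuinely visibly pushdown over the chosen partition while $\mathcal{L}_1$ remains an arbitrary CFL handled internally by $\mathcal{P}$'s stack; because every CFL is the homomorphic image of a VPL and the PDS is free to use its stack arbitrarily, this is achievable, though verifying that the action sequence read by the modality coincides with the word checked by the stack — and that no spurious paths of $\mathcal{P}$ satisfy the formula for the wrong reasons — is the delicate bookkeeping step that the full proof must carry out. \hfill\qed
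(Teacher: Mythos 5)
Your proposal is sound in outline but takes a genuinely different---and much heavier---route than the paper. The paper's entire proof is a transfer of a known result: \cite{Axelsson2010} already established that model checking a formula $E\mathcal{F}^{\mathcal{A}}\,true$ against a PDS is undecidable for a particular VPA $\mathcal{A}$, and since LTL[VPL] model checking is universal over traces, $\mathcal{P} \models \mathcal{G}^{\mathcal{A}}\,false$ holds iff $E\mathcal{F}^{\mathcal{A}}\,true$ fails; undecidability thus carries over to the LTL[VPL] formula $\mathcal{G}^{\mathcal{A}}\,false$ with no construction at all. You instead rebuild from scratch essentially the reduction underlying that cited result: the PDS's stack enforces membership of the emitted action sequence in one context-free language while the automaton in the modality checks the other, and negation converts an existential witness into a model-checking counterexample---note that your negated formula $\neg\mathcal{F}^{\hat{\mathcal{L}}_2}\,end$ has exactly the shape of the paper's $\mathcal{G}^{\mathcal{A}}\,false$. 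What your route buys is self-containedness; what it costs is the step you only gesture at: since the modality can only carry a VPL, your reduction is really from emptiness of $\mathrm{CFL} \cap \mathrm{VPL}$, and the undecidability of \emph{that} problem is the actual crux. Your appeal to ``every CFL is the homomorphic image of a VPL'' is the right tool (take the annotated language $\hat{\mathcal{L}}_2$ as the VPL and replace $\mathcal{L}_1$ by $h^{-1}(\mathcal{L}_1)$, again a CFL), but it needs care: the homomorphism must be allowed to erase padding symbols used to serialise arbitrary PDA moves into the visible push/pop discipline, and you need closure of CFLs under inverse images of such homomorphisms to finish. Alternatively, a direct encoding of Turing machine computation histories, with odd-step checks placed in a VPL and even-step checks in a CFL over a barred/unbarred call--return alphabet, avoids the homomorphism detour. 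Either way, filling in this lemma is the real content of your proof, and it is precisely what the paper sidesteps by citation.
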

\begin{proof}
In \cite{Axelsson2010}, it is shown that model checking a formula of type $\varphi \equiv E\mathcal{F}^{\mathcal{A}} true$ for a particular VPA $\mathcal{A}$ against a PDS is undecidable and $\varphi$ holds iff the LTL[VPL] formula $\mathcal{G}^{\mathcal{A}} false$ does not hold. \hfill\qed
\end{proof}
\begin{corollary}
CTL$^c$[VPL] model checking against PDS is undecidable where $c \in \{*, +\}$.
\end{corollary}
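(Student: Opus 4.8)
The plan is to transfer the undecidability witness used in the preceding theorem directly, rather than to build a fresh reduction. Recall that the undecidability of LTL[VPL] model checking against PDS was derived from the result of \cite{Axelsson2010} that, for a suitably chosen VPA $\mathcal{A}$, model checking the reachability formula $E\mathcal{F}^{\mathcal{A}} true$ against a PDS is already undecidable. The key observation I would make is that this formula is itself a formula of CTL$^c$[VPL] for both $c \in \{*, +\}$, so no detour through LTL[VPL] is even necessary.

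For CTL$^*$[VPL] this is immediate, since its grammar permits $E$ in front of an arbitrary formula and $\mathcal{F}^{\mathcal{A}} true = true\;\mathcal{U}^{\mathcal{A}}\;true$ is a legal use of the until modality. For CTL$^+$[VPL] I would briefly check that the more restrictive two-sorted grammar is respected: $true$ is a state formula $\psi$, so $true\;\mathcal{U}^{\mathcal{A}}\;true$ is a legal path formula $\varphi$, and prefixing it with $E$ yields a legal state formula $\psi$ again. Thus $E\mathcal{F}^{\mathcal{A}} true$ is well-formed in either logic.

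Consequently, model checking $E\mathcal{F}^{\mathcal{A}} true$ against a PDS is at once an instance of CTL$^c$[VPL] model checking, and the undecidability established in \cite{Axelsson2010} carries over verbatim for $c \in \{*, +\}$. I anticipate no genuine obstacle here; the only point warranting a moment's care is confirming that the reachability witness fits inside the restrictive CTL$^+$[VPL] grammar, which it does precisely because both arguments of the until are state formulae.
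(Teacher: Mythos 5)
Your proposal is correct and rests on the same foundation as the paper's argument: the undecidability, due to Axelsson et al., of checking $E\mathcal{F}^{\mathcal{A}}\,true$ against a PDS for a particular VPA $\mathcal{A}$. The paper routes this fact through the preceding theorem (LTL[VPL] model checking against PDS is undecidable, obtained via the complement formula $\mathcal{G}^{\mathcal{A}}\,false$) and then transfers it to the branching-time logics as a corollary, whereas you apply the Axelsson result directly after observing that the witness formula is already well-formed in both CTL$^*$[VPL] and CTL$^+$[VPL]. Your route is slightly more robust on one point: a naive ``corollary by embedding'' reading of the paper's step is delicate for CTL$^+$[VPL], since LTL[U] does not embed into CTL$^+$[U] in general (CTL$^+$-style logics cannot express fairness), so one must check that the specific witness --- either $A\mathcal{G}^{\mathcal{A}}\,false$ or, as in your argument, $E\mathcal{F}^{\mathcal{A}}\,true$ --- actually fits the restricted two-sorted grammar. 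Your explicit verification that $true\;\mathcal{U}^{\mathcal{A}}\;true$ is a legal path formula over state-formula arguments and that prefixing $E$ returns a state formula settles exactly this point, so the argument is complete for both values of $c$.
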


\noindent
Intuitively, the difference between PDS and VPS stems from the fact that VPS can be synchronised with  VPS, while PDS cannot.

	\section{Conclusion and Related Work}\label{section:conclusion}

We introduced extended variants of LTL, CTL$^*$ and CTL$^+$.
We 
compared their expressive power and provided tight bounds for their model checking and satisfiability problems. We further showed that, in most cases, these problems are not more costly than the corresponding problems for extended CTL or alternatively the corresponding problems for the base logics without parameters, despite the gain in expressive power. However, we also proved that the ability to use just two $\mathcal{F}^\mathcal{A}$ modalities 
leads to undecidability when the language parameter includes the deterministic context-free languages. Thus, the robust decidability of Extended CTL is not preserved when the base logic is more expressive.

Our comprehensive complexity analysis for both satisfiability and model checking explains why we did not restrict the language classes for the $\mathcal{U}$ and $\mathcal{R}$ modality separately as for Extended CTL in \cite{Axelsson2010}. Indeed, a close inspection of the proofs shows that hardness of the model checking problem holds already if either modality is equipped with the respective language class and the other with no class at all, except for the case of CTL$^+[VPL]$ on VPS/PDS. Thus, a more fine-grained analysis would not generally lead to improved complexity bounds. The same holds for the satisfiability problem except for CTL$^c$[VPL] where the complexity might change if weaker (or deterministic) classes are used for the release-operator. We leave these questions for future work.

For related work, we first mention the classical work extending temporal logics by regular operators \cite{Wolper1983,Kupferman2001,Giacomo2013} which are automata or regular expressions. Another approach are variants of Propositional Dynamic Logic \cite{Fischer1979,Streett1982,Lange2006}. These approaches have been extended to VPL \cite{Bozzelli2018,Weinert2018,Loeding2007}. However, none of these works analyse the model checking problem for different types of system models or provide a generic framework for different language classes. Furthermore, to our knowledge, CTL$^+$ has not yet been analysed with regard to formal language extensions and has not been checked against VPS or PDS.
Variants of CTL$^*$ and CTL$^+$ obtained by restricting the paths referenced by existential and universal quantifiers to sets defined by $\omega$-languages have been considered in \cite{Latte2014}. This approach differs from ours in many respects: their logics collapse to CTL$^*$ and CTL$^+$ when the same label is given for every transition while ours do not, they only consider the model checking problem for KT and they obtain decidable model checking problems for $\omega$-context-free languages, while the use of context-free languages leads to undecidability for our logics. 

	\bibliographystyle{acm}
	\bibliography{sections/conclusion/paper}
\end{document}